\newcommand{\R}{{\mathord{\mathbb R}}}
\newcommand{\Z}{{\mathord{\mathbb Z}}}
\newcommand{\N}{{\mathord{\mathbb N}}}
\newcommand{\C}{{\mathord{\mathbb C}}}
\def\chib {\overline{\chi}}
\newcommand{\HH}{\mathcal{H}}
\newcommand{\FF}{\mathcal{F}}
\newcommand{\WW}{\mathcal{W}}
\newcommand{\hh}{\mathfrak{h}}
\newcommand{\UU}{\mathcal{U}}
\newcommand{\umm}{\underline{m}}
\newcommand{\unn}{\underline{n}}
\newcommand{\upp}{\underline{p}}
\newcommand{\uqq}{\underline{q}}
\newcommand{\uzz}{\underline{0}}
\newcommand{\ujj}{\underline{j}}
\newcommand{\ran}{{\rm Ran}}
\newcommand{\un}[1]{\underline{#1}}
\DeclareMathOperator*{\esssup}{ess\,sup}
\newcommand{\ben}{\begin{displaymath}}
\newcommand{\een}{\end{displaymath}}
\newcommand{\beqn}{\begin{equation}}
\newcommand{\eeqn}{\end{equation}}
\newcommand{\beqna}{\begin{eqnarray*}}
\newcommand{\eeqna}{\end{eqnarray*}}
\def\inf{{\rm inf}\,}
\def\supp{\operatorname{supp}}
\newcommand{\sfrac}[2]{\textrm{\footnotesize $\frac{#1}{#2}$}}
\newtheorem{lemma}{Lemma}
\newtheorem{theorem}[lemma]{Theorem}
\newtheorem{remark}[lemma]{Remark}
\newtheorem{corollary}[lemma]{Corollary}
\newtheorem{definition}[lemma]{Definition}
\begin{document}
\title{
Smoothness and analyticity of perturbation expansions in QED}
\author{\vspace{5pt} D. Hasler$^1$\footnote{
E-mail: david.hasler@math.lmu.de \quad
On leave from College of William \& Mary} and I.
Herbst$^2$\footnote{E-mail: iwh@virginia.edu.} \\
\vspace{-4pt} \small{$1.$ Department of Mathematics,
Ludwig Maximilians University} \\ \small{Munich, Germany }\\
\vspace{-4pt}
\small{$2.$ Department of Mathematics, University of Virginia,} \\
\small{Charlottesville, VA, USA}\\}
\date{}
\maketitle

\begin{abstract}
We consider the ground state of an atom in the framework of non-relativistic qed.
We assume that the  ultraviolet cutoff is of the order of the  Rydberg energy and that the atomic
Hamiltonian has a non-degenerate ground state.
We show that the ground state energy and
 the ground state  are $k$-times continuously differentiable  functions of the fine structure constant and respectively
the square root of the fine structure constant on some nonempty interval $[0,c_k)$.
\end{abstract}

\section{Introduction}

Non-relativistic quantum electrodynamics (qed) is the theory describing the interactions between electrically charged non-relativistic 
quantum mechanical matter and the quantized electromagnetic field.
%mathematically rigorous theory describing
%low energy phenomena of matter interacting with quantized radiation.
%This theory allows a mathematically rigorous treatment of
%various physical aspects, see for example \cite{S04} and references therein.
In this paper we  investigate expansions of the ground state and the ground state energy of an atom
as functions of the fine structure constant $\alpha$, as  $\alpha$ tends to zero.
In \cite{BFP06,BFP09} it was proven that there  exists  an asymptotic expansion involving
coefficients which depend on the coupling parameter $\alpha$ and have at most
mild singularities.
In \cite{BCVV09,HHS05,HS02}  related expansions of the ground state energy were obtained
and it was shown that  logarithmic divergences can occur in
non-relativistic qed. On the other hand
it was shown that an atom in a dipole
approximation of qed (which effectively leads to an infrared regularization) has a ground state and ground state energy
which are analytic functions of the coupling constant \cite{GH09}.

This paper can be viewed as  a continuation of  \cite{HH10-2}, where
 it was  shown that the ground state as
well as the ground state energy of the atom are analytic functions of the coupling constant, $g$,
which couples to the vector potential.  Moreover in   \cite{HH10-2} it was shown that in  an expansion in powers of
$g$, the corresponding expansion coefficients
are bounded as functions  of a  coupling constant, $\beta$,
which originates from the coupling  to  the electrostatic potential.
The main result of this paper  states  that these expansions coefficients are  $C^\infty$ functions of $\beta$,
and we obtain satisfactory bounds on the first  $k$ derivatives with respect to $\beta$.
We consider an atom which is coupled to the quantized radiation field in a scaling limit
where the ultraviolet cutoff is measured in units of Rydberg. This scaling limit is
a reasonable limit to study the properties of atoms. For example in this scaling limit estimates
on the lifetimes of metastable states \cite{HHH08,BFS99} were proven, which agree with experiment, see also \cite{AFFS09}.
Moreover, it was shown  \cite{GZ09}  that the ionization probability agrees with
calculations done by  physicists.
As a corollary of the main result of this paper, we show  that the ground state and the ground state energy have
convergent power series expansions, with
$\alpha$ dependent coefficients which are $C^\infty$ functions of $\alpha \geq  0$.
We show that the ground state energy as
well as the ground state  are $k$-times continuously differentiable  functions of $\alpha$
 respectively $\alpha^{1/2}$ on some nonempty interval $[0,c_k)$.
Moreover, it follows that the ground state as well as the ground state energy are given
as an asymptotic series in powers of $\alpha^{1/2}$ and $\alpha$, respectively, with constant coefficients.
These coefficients can be calculated by means of ordinary perturbation theory in a straight forward manner.
As a consequence of our result it follows that in the scaling limit  where the ultraviolet cutoff is of the order
of the Rydberg energy  no logarithmic terms occur. This  clarifies an issue which 
was raised in \cite{BFP09}, see the remark on Page 1031 therein.

Let us now address the proof of the main result.
It is well known that the  ground state energy is embedded in the continuous
spectrum.  In such a situation  regular perturbation theory is
typically not applicable and other methods have to be employed.
To prove the existence result as well as the analyticity result
 we use   a variant of the operator theoretic renormalization analysis as introduced
in \cite{BFS98}.
An important ingredient of the proof is that by rotation invariance one can
infer that in the renormalization analysis, terms which are linear in
creation and annihilation operators do not occur. This is explained in \cite{HH10-2}.
In that case it follows that the renormalization transformation is
a contraction even without infrared regularization. A similar
idea was used in a paper to prove existence and analyticity of the ground state and
ground state energy in the spin-boson model \cite{HH10-1}. In the proof we will use
results  obtained in \cite{HH10-1} and \cite{HH10-2}.
We note that similar ideas were used also in  \cite{GH09}.
The main new ingredient in the proof is the control of derivatives with respect to
the parameter $\beta$. The main estimates which control these derivatives
 are contained in Theorem \ref{ini:thm1b} and Lemma \ref{initial:thmE22}         for the initial Feshbach transformation and  in  Lemma \ref{codim:thm3first},   Theorem \ref{contk1}, and
Theorem \ref{thm:bcfsmain} (d) for  the renormalization transformation.
The most delicate  estimates are used in the  proof of   Lemma \ref{initial:thmE22} and Theorem \ref{contk1}, and
can be considered as the key ingredients of the proof.

\section{Model and Statement of Results}

Let $(\hh, \langle \cdot , \cdot \rangle_\hh)$ be a
Hilbert space. We introduce the direct sum of the $n$-fold tensor product of $\hh$ and set
$$
\mathcal{F}(\hh) := \bigoplus_{n=0}^\infty \mathcal{F}^{(n)}(\hh) ,  \qquad \mathcal{F}^{(n)}(\hh) = \hh^{\otimes^n} ,
$$
where we have set $\hh^{\otimes 0} := \C$.
We introduce the vacuum vector $\Omega := (1,0,0,...) \in \FF(\hh)$.
The space  $\mathcal{F}(\hh)$ is an inner product space
where the inner product is induced from the inner product in $\hh$. That is, on vectors $\eta_1 \otimes \cdots \eta_n,  \varphi_1 \otimes \cdots \varphi_n  \in \FF^{(n)}(\hh)$
we have
$$
\langle \eta_1 \otimes \cdots \eta_n , \varphi_1 \otimes \cdots \varphi_n  \rangle :=  \prod_{i=1}^n
\langle \eta_{i} , \varphi_{i} \rangle_\hh .
$$
This definition extends to all of $\FF(\hh)$ by bilinearity and continuity. We introduce the bosonic Fock space
$$
\FF_s(\hh) := \bigoplus_{n=0}^\infty \mathcal{F}_s^{(n)}(\hh)  , \qquad \mathcal{F}_s^{(n)}(\hh) := S_n \FF^{(n)}(\hh) ,
$$
where $S_n$  denotes the orthogonal projection onto the subspace of totally symmetric
tensors in  $\FF^{(n)}(\hh)$. For $h \in \hh$ we introduce the so called creation operator $a^*(h)$ in $\FF_s(\hh)$
which is defined on vectors  $\eta \in \FF^{(n)}_s(\hh)$
by
\beqn  \label{eq:formala}
a^*(h) \eta := \sqrt{n+1} S_{n+1} ( h \otimes \eta ) \; .
\eeqn
The operator $a^*(h)$ extends by linearity to a densely defined  linear operator on $\FF(\hh)$.
One can show that $a^*(h)$ is closable, c.f. \cite{reesim2}, and we denote its closure by the same symbol.
We introduce the annihilation operator by  $a(h) := (a^*(h))^*$.
For a closed operator $A \in \hh$ with domain $D(A)$ we introduce the operator $\Gamma(A)$ and $d\Gamma(A)$ in $\FF(\hh)$ defined
on vectors $\eta = \eta_1 \otimes \cdots \otimes \eta_n \in \FF^{(n)}(\hh)$, with $\eta_i \in D(A)$, by
$$
\Gamma(A) \eta =  A \eta_1 \otimes \cdots \otimes A  \eta_n
$$
and
$$
d \Gamma(A) \eta = \sum_{i=1}^n  \eta_1 \otimes \cdots \otimes \eta_{i-1} \otimes A \eta_i \otimes \eta_{i+1} \otimes \cdots \otimes \eta_n
$$
and extended by linearity to  a densely defined linear operator on $\FF(\hh)$. One can show
that  $d \Gamma(A)$ and $\Gamma(A)$ are closable, c.f. \cite{reesim2}, and we denote their closure by the same symbol.
The operators $\Gamma(A)$ and $d \Gamma(A)$ leave the subspace $\FF_s(\hh)$ invariant, that is,
their restriction to $\FF_s(\hh)$ is densely defined, closed,  and has range contained in $\FF_s(\hh)$.
To define qed, we fix
$$
\hh := L^2( \R^3 \times \Z_2 )
$$
and set $\FF := \FF_s(\hh)$.  We define the operator of the free field energy
by
$$
H_f := d \Gamma(M_\omega) ,
$$
where $\omega(k,\lambda) := \omega(k) := |k|$ and $M_\varphi$ denotes the operator of multiplication with the function $\varphi$.
For $f \in \hh$ we write
$$
a^*(f) = \sum_{\lambda=1,2} \int f(k,\lambda) a^*(k,\lambda)  , \qquad a(f) = \sum_{\lambda=1,2} \int \overline{f(k,\lambda)} a^*(k,\lambda) .
$$
where $a(k,\lambda)$ and $a^*(k,\lambda)$ are operator-valued distributions.
They satisfy the
following commutation relations, which are to be understood in the sense of distributions,
$$
[a(k,\lambda), a^*(k',\lambda') ] = \delta_{\lambda \lambda'} \delta(k - k')   , \qquad [a^\#(k,\lambda), a^\#(k',{\lambda'}) ]  = 0 \; ,
$$
where $a^{\#}$ stands for  $a$ or $a^*$. For $\lambda=1,2$ we introduce the so called polarization vectors
$$
\varepsilon(\cdot , \lambda) : S^2 := \{ k \in \R^3 | |k| = 1 \} \to \R^3
$$
to be measurable maps such that for each $k \in S^2$ the vectors $\varepsilon(k,1), \varepsilon(k,2),k$ form an orthonormal basis of $\R^3$.
We extend $\varepsilon(\cdot , \lambda)$ to $\R^3 \setminus \{ 0 \}$ by setting
$
\varepsilon(k,\lambda) := \varepsilon(k/|k|,\lambda)
$
for all nonzero $k$.  For $x \in \R^3$ we define the field operator
\beqn \label{eq:afield}
A_{\Lambda}(x) =  \sum_{\lambda=1,2} \int \frac{dk \kappa_{\Lambda}(k)}{\sqrt{2 |k|}} \left[ e^{-ik \cdot x}
\varepsilon(k,\lambda) a^*(k,\lambda)  + e^{ik \cdot x} \varepsilon(k, \lambda) a(k, \lambda) \right] \ ,
\eeqn
where the function  $\kappa_{\Lambda}$  serves as a cutoff, which satisfies
$\kappa_{\Lambda}(k) = 1$ if $ |k| \leq \Lambda$  and which
is zero otherwise. $\Lambda > 0$ is an ultraviolet cutoff, which we assume to be finite.
Next we introduce the atomic Hilbert space, which describes the configuration of
$N$ electrons, by
$$
\HH_{\rm at}  := \{ \psi \in L^2(\R^{3N}) | \psi(x_{\sigma(1)},...,x_{\sigma(N)}) =
{{\rm sgn}(\sigma)} \psi(x_1,...,x_N) ,  \sigma \in  \mathfrak{S}_N \}   ,
$$
where $\mathfrak{S}_N$ denotes the group of permutations of $N$ elements, ${\rm sgn}$ denotes the signum of the permutation,
and  $x_j \in \R^3$ denotes the coordinate of the $j$-th electron.
We will consider the following operator in $\HH := \HH_{\rm at} \otimes \FF$,
\begin{equation} \label{eq:hamiltoniandefinition}
H_{g,\beta}  =  \ \ : \sum_{j=1}^N  ( p_j + g A_{\Lambda}(\beta x_j) )^2 :  +  V   + H_f ,
\end{equation}
where
$p_j = - i \partial_{x_j}$,
$V = V(x_1,...,x_N)$ denotes the potential, and  $:( \, \cdot \, ):$ stands for the Wick product.
We will make the following assumptions on the potential $V$,  which are related to the atomic Hamiltonian
$$
H_{\rm at} := - \Delta   + V  ,
$$
which acts in $\HH_{\rm at}$.  We introduced the Laplacian $-\Delta := \sum_{j=1}^N p_j^2$.

\vspace{0.5cm}
\noindent
{\bf Hypothesis (H)} The potential $V$ satisfies the following properties:
\begin{itemize}
\item[(i)] $V$ is symmetric under permutations and invariant under rotations.
\item[(ii)] $V$ is infinitesimally operator bounded with respect to $-\Delta$.
\item[(iii)] $E_{\rm at} := \inf \sigma(H_{\rm at} )$ is a non-degenerate isolated eigenvalue of $H_{\rm at}$.
\end{itemize}

\vspace{0.5cm}

Note that for the Hydrogen, $N=1$, the potential $V(x_1) = - {|x_1|}^{-1}$
satisfies Hypothesis  (H). Moreover (ii) of Hypothesis (H) implies that
$H_{g,\beta}$ is a self-adjoint operator with domain $D(-\Delta + H_f)$ and that $H_{g,\beta}$ is essentially
self adjoint on any operator core for $-\Delta +H_f$,
 see for example \cite{H02,HH08}. For a precise definition of the  operator  in    \eqref{eq:hamiltoniandefinition}, see  Appendix A.
We  will use the notation $D_r(w) := \{ z \in \C | |z - w| < r \}$ and $D_r := D_r(0)$.
Let us now state the main result of the paper.

\begin{theorem} \label{thm:main1}   Assume Hypothesis (H) and let $k \in \N_0$. Then there exists
a positive constant $g_0$ such that for all  $g \in D_{g_0}$ and  $ \beta \in \R$
the operator $H_{g,\beta}$ has an eigenvalue $E_{\beta}(g)$ with eigenvector $\psi_{\beta}(g)$ and eigen-projection
$P_{\beta}(g)$ satisfying the following properties.
\begin{itemize}
\item[(i)] For $g \in \R \cap D_{g_0}$ we have  $E_{\beta}(g) = {\rm inf}  \sigma ( H_{g,\beta}) $, and
 for all $g \in D_{g_0}$ we have $P_{\beta}(g)^* = P_{\beta}(\overline{g})$.
\item[(ii)] $g \mapsto  E_{(\cdot)}(g)$,  $g \mapsto  \psi_{(\cdot)}(g)$, and
 $g \mapsto  P_{(\cdot)}(g)$ are analytic  functions on $D_{g_0}$ with values in $C_B^k(\R)$,
 $C^k_B(\R;\HH)$, and $C^k_B(\R;\mathcal{B}(\HH))$, respectively.
 \item[(iii)] There exists a finite and positive $C$ such that for all $g \in D_{g_0}$ we have
$$
\| E_{(\cdot)}(g) \|_{C^k(\R)} \leq C  , \quad \| \psi_{(\cdot)}(g) \|_{C^k(\R;\HH)} \leq C , \quad
\| P_{(\cdot)}(g) \|_{C^k(\R;\mathcal{B}(\HH))} \leq C   .
$$
\end{itemize}
\end{theorem}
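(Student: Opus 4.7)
The plan is to implement the operator theoretic renormalization analysis of Bach--Fr\"ohlich--Sigal, in the form used in \cite{HH10-2} and \cite{HH10-1}, but now additionally tracking $k$ derivatives in $\beta$ at every stage. The proof decomposes into an initial Feshbach step that maps $H_{g,\beta}-z$ to an effective operator acting on a small spectral subspace associated with $E_{\rm at}$, followed by an iteration of a renormalization transformation whose fixed point recovers the true ground state and ground state energy.

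First I would perform the initial Feshbach transformation with projection $P_{\rm at}\otimes\chi(H_f\le \rho)$, where $P_{\rm at}$ is the rank-one projection onto the atomic ground state and $\rho$ is a small energy scale. Because the atomic eigenvalue $E_{\rm at}$ is isolated and non-degenerate (Hypothesis (H)(iii)), the complementary operator is invertible uniformly in $g$ small; rotation invariance of $V$ forces the matrix elements linear in $a^*,a$ to vanish, which is what makes the renormalization contractive without an infrared regulator. The output of this step is an effective Hamiltonian $H^{(0)}_{g,\beta}(z)$ on ${\rm Ran}\, P_{\rm at}\otimes\chi(H_f\le\rho)$. The content of Theorem \ref{ini:thm1b} and Lemma \ref{initial:thmE22} is precisely that this object lies in a suitable Banach space of generalized Wick-ordered operators and depends analytically on $g$ and $C^k$ on $\beta$, with explicit bounds on the first $k$ derivatives $\partial_\beta^j$ inherited from differentiating under the integrals defining $A_\Lambda(\beta x_j)$ and from Neumann-expanding the resolvent in the Feshbach formula.

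Next I would apply the renormalization transformation $\mathcal{R}_\rho$ of \cite{BFS98} repeatedly. Analyticity in $g$ on a disk $D_{g_0}$ follows from \cite{HH10-2}; what is new is propagating $C^k_B(\R)$-estimates in $\beta$ through the iteration. This is encoded in Lemma \ref{codim:thm3first}, Theorem \ref{contk1}, and Theorem \ref{thm:bcfsmain}(d): if the effective Hamiltonian at step $n$ lies in a polydisk of operators whose coefficient functions of $\beta$ are bounded together with $k$ derivatives by a fixed constant, then the same holds at step $n+1$, with contractive bounds on the off-diagonal part. The main obstacle is this step, because differentiating the renormalization map brings in products and compositions of coefficient functions, and one must organize the Leibniz and chain rule expansions so that the $C^k_B$-norm does not blow up under iteration. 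The key delicate estimate, embodied in Theorem \ref{contk1} and already previewed in Lemma \ref{initial:thmE22}, controls these Faà di Bruno-type sums uniformly in the iteration index by exploiting the contraction on the non-diagonal coefficients together with smallness of $g$.

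Finally, I would assemble the eigenvalue, eigenvector, and eigenprojection. The fixed point $E_\beta(g)$ of the sequence of spectral parameters yields the eigenvalue; its complex conjugation symmetry and its identification with $\inf\sigma(H_{g,\beta})$ for real $g$ (statement (i)) follow as in \cite{HH10-2} from the self-adjointness of $H_{g,\beta}$ for real $g,\beta$ and uniqueness of the construction. The eigenvector $\psi_\beta(g)$ is obtained by pulling the vacuum through the finite sequence of Feshbach maps, using Theorem \ref{thm:bcfsmain} to convert the limit of effective vacua back to a vector in $\HH$; analyticity in $g$ and the $C^k_B(\R;\HH)$ bounds in $\beta$ (statements (ii) and (iii)) then follow because each pull-back is a finite product of bounded, uniformly convergent factors each of which is analytic in $g$ and satisfies uniform $C^k_B$ bounds in $\beta$ by the previous step. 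The projection $P_\beta(g)$ is constructed analogously, with the adjoint symmetry $P_\beta(g)^*=P_\beta(\overline g)$ coming from the standard observation that $H_{g,\beta}^*=H_{\overline g,\beta}$ for real $\beta$.
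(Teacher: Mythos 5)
Your proposal follows the paper's architecture faithfully: initial Feshbach transformation onto $\ran P_{\rm at}\otimes P_{\rm red}$, renormalization iteration with $C^k$-bounds in $\beta$ carried through by Lemma \ref{codim:thm3first}, Theorem \ref{contk1}, and Theorem \ref{thm:bcfsmain}(d), and assembly of eigenvalue, eigenvector (via $Q_{\chi^{(I)}}$ applied to $\psi_{(0)}[w^{(0)}]$), and rank-one projection, with (i) and (iii) read off as you indicate. However, there is a genuine gap in the claim that the $C^k$-bounds at the initial step are ``inherited from differentiating under the integrals defining $A_\Lambda(\beta x_j)$ and from Neumann-expanding the resolvent.'' Each derivative $\partial_\beta$ acting on the phase $e^{i\beta k\cdot x_j}$ in \eqref{defofwI} brings down a factor $ik\cdot x_j$, so $\partial_\beta^l W_{g,\beta}$ carries unbounded multiplication operators $x_j^l$ on $\HH_{\rm at}$. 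The free resolvent $(H_0-\zeta)^{-1}\chib^{(I)}$ that controls the Neumann expansion of $Q_{\chi^{(I)}}$ and $F_{\chi^{(I)}}$ does not absorb polynomial growth in $x$; nothing in your outline regularizes these $x_j$ factors, so the term-by-term differentiation you describe does not give a finite $C^k_B(\R)$-valued bound.

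What closes this gap in the paper's proofs of Theorem \ref{ini:thm1b} and Lemma \ref{initial:thmE22} is the insertion of telescoping exponential weights $e^{\pm\gamma_j\langle x\rangle}$ between consecutive factors of the Neumann (respectively Wick-ordered) expansion, with $\gamma_j$ decremented by $\gamma_1/k$ each time a nontrivial $\beta$-derivative lands on one of the interaction factors. Since at most $k$ derivatives fall in total, all $\gamma_j$ remain nonnegative; the conjugated resolvents $e^{\gamma\langle x\rangle}(H_0-\zeta)^{-1}\chib^{(I)}e^{-\gamma\langle x\rangle}$ stay bounded uniformly for $\gamma$ small by a contour-integral and numerical-range argument for $H_{\rm at}(\gamma)$, and the whole chain terminates on $\|e^{\gamma_1\langle x\rangle}\varphi_{\rm at}\|<\infty$, which is where the exponential decay of the atomic ground state (available because $E_{\rm at}$ is an isolated eigenvalue) enters essentially. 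This is the mechanism that makes the coefficients $L^{k+1}C^L|g|^{\,|\umm|+|\unn|+|\upp|+|\uqq|}$ in Lemma \ref{initial:thmE22} summable, and it is precisely the ingredient the paper singles out as one of the two most delicate estimates. Without it the proposed proof does not yield assertions (ii) and (iii) of the theorem.
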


The next result states that the expansions coefficients of the eigenvalue, eigenfunction, and the corresponding eigenprojection
are $C^\infty$ as functions of  $\beta$.

\begin{corollary} \label{thm:main2}
Assume Hypothesis (H) and let $k \in \N_0$. Then there exists
a positive constant $g_0$ such that for all  $g \in D_{g_0}$ and  $ \beta \in \R$
the operator $H_{g,\beta}$ has an eigenvalue $E_{\beta}(g)$ with eigenvector $\psi_{\beta}(g)$ and eigen-projection
$P_{\beta}(g)$ satisfying the following properties.
On $D_{g_0}$ we have the convergent expansions
\begin{equation} \label{eq:expansion1}
E_{\beta}(g) = \sum_{n=0}^\infty E^{(2n)}_{\beta} g^{2n}, \quad
\psi_{\beta}(g) = \sum_{n=0}^\infty \psi_{\beta}^{(n)} g^n  , \quad P_{\beta}(g) = \sum_{n=0}^\infty P^{(n)}_{\beta} g^n .
\end{equation}
There exist  finite and positive constants $C$ and $r$ such that
$$
\| E_{(\cdot)}^{(2n)} \|_{C^k(\R)} \leq C r^{2n} , \quad \| \psi_{(\cdot)}^{(n)} \|_{C^k(\R;\HH)} \leq C r^n , \quad
\| P_{(\cdot)}^{(n)} \|_{C^k(\R;\mathcal{B}(\HH))} \leq C r^n  .
$$
The expansion coefficients are as functions of $\beta$ in $C^\infty(\R)$, $C^\infty(\R;\HH)$, and $C^\infty(\R;\mathcal{B}(\HH))$, respectively.
\end{corollary}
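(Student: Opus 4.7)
The plan is to derive the corollary directly from Theorem \ref{thm:main1} together with a discrete parity symmetry of the Hamiltonian, reducing the statement to standard Banach-space-valued complex analysis. Fix $k \in \N_0$ and let $g_0$ be the constant furnished by Theorem \ref{thm:main1}. By part (ii), the maps $g \mapsto E_{(\cdot)}(g)$, $g \mapsto \psi_{(\cdot)}(g)$, $g \mapsto P_{(\cdot)}(g)$ are analytic on $D_{g_0}$ with values in the Banach spaces $C^k_B(\R)$, $C^k_B(\R;\HH)$, $C^k_B(\R;\mathcal{B}(\HH))$. Banach-space-valued analyticity therefore gives norm-convergent Taylor expansions
\beqn
E_\beta(g) = \sum_{n=0}^\infty \ov{E}^{(n)}_\beta g^n, \quad \psi_\beta(g) = \sum_{n=0}^\infty \psi^{(n)}_\beta g^n, \quad P_\beta(g) = \sum_{n=0}^\infty P^{(n)}_\beta g^n
\eeqn
with coefficients in the corresponding Banach space. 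The Cauchy integral formula on the circle $|g| = g_0/2$, combined with the uniform bound of Theorem \ref{thm:main1}(iii), yields $\|\ov{E}^{(n)}_{(\cdot)}\|_{C^k(\R)} \leq C\,(2/g_0)^n$ and analogous estimates for $\psi^{(n)}$, $P^{(n)}$, so the bounds in the corollary hold with $r := 2/g_0$ (modulo the reduction to even powers in the energy).

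Next I would show that all odd Taylor coefficients of $E_\beta(g)$ vanish via a photon-parity symmetry. Let $U := 1_{\HH_{\rm at}} \otimes \Gamma(-1)$, which is unitary on $\HH$ and acts as $(-1)^n$ on the $n$-photon sector, so that $U a^\#(h) U^{-1} = -a^\#(h)$ and therefore $U A_\Lambda(x) U^{-1} = -A_\Lambda(x)$. Since $p_j$, $V$, $H_f$, and $:A_\Lambda(\beta x_j)^2:$ are $U$-invariant while the cross term $p_j\, A_\Lambda(\beta x_j) + A_\Lambda(\beta x_j)\, p_j$ changes sign, a direct computation gives
$$
U H_{g,\beta} U^{-1} = H_{-g,\beta}.
$$
By unitary equivalence and the characterization of $E_\beta$ as $\inf \sigma(H_{g,\beta})$ for real $g$ (Theorem \ref{thm:main1}(i)), this forces $E_\beta(g) = E_\beta(-g)$ on $D_{g_0} \cap \R$, and then on all of $D_{g_0}$ by analyticity. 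Hence $\ov{E}^{(2n+1)}_\beta \equiv 0$ and, setting $E^{(2n)}_\beta := \ov{E}^{(2n)}_\beta$, one recovers the expansion \eqref{eq:expansion1} together with $\|E^{(2n)}_{(\cdot)}\|_{C^k(\R)} \leq C r^{2n}$.

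To upgrade the coefficients from $C^k(\R)$ to $C^\infty(\R)$, I would observe that, for each fixed $\beta$, the coefficients $E^{(2n)}_\beta$, $\psi^{(n)}_\beta$, $P^{(n)}_\beta$ are extracted by a one-dimensional Cauchy contour integral in the scalar/$\HH$/$\mathcal{B}(\HH)$-valued function $g \mapsto E_\beta(g)$ etc., so they are intrinsic to these pointwise functions and do not depend on which $C^{k'}_B$-space is used to organize the analyticity statement. Applying Theorem \ref{thm:main1} successively for each $k' \geq k$ (with an appropriately small $g_0(k')$) produces the very same coefficients, each of which therefore lies in $C^{k'}_B$ for every $k'$. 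Since $\bigcap_{k' \geq 0} C^{k'}_B = C^\infty$, membership in $C^\infty(\R)$, $C^\infty(\R;\HH)$, and $C^\infty(\R;\mathcal{B}(\HH))$ follows.

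There is no genuine obstacle here beyond Theorem \ref{thm:main1}: the corollary is essentially a bookkeeping exercise in Banach-space-valued analytic function theory. The only substantive point not already contained in the theorem is the suppression of odd powers in the energy expansion, which is handled by the $\Gamma(-1)$ conjugation above and is entirely standard.
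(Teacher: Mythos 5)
Your proposal is correct and follows essentially the same route as the paper: Cauchy's formula applied to the Banach-space-valued analytic maps of Theorem \ref{thm:main1}(ii)--(iii) to extract the Taylor coefficients and their geometric bounds, the photon-parity conjugation (the paper writes $(-1)^N H_{g,\beta}(-1)^N = H_{-g,\beta}$, which is exactly your $\Gamma(-1)$) to eliminate odd energy coefficients, and the $k$-independence of the extracted coefficients to pass to $C^\infty$. You spell out the identity-theorem step for $E_\beta(g)=E_\beta(-g)$ and the $C^\infty$ upgrade a bit more explicitly than the paper does, but these are the same arguments.
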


Various conclusions can be drawn from Theorem  \ref{thm:main1}. For instance, if we set
$\beta = \alpha \geq 0$ and $g = \alpha^{3/2}$ then we  obtain the following corollary.
It  states that the ground state and the ground state energy of an atom in qed, in a  scaling limit where the ultraviolet cutoff is
of the order of the  Rydberg energy, can
be differentiated arbitrarily many times as functions of $\alpha$ and $\alpha^{1/2}$, respectively,
provided one chooses $\alpha$ sufficiently small (depending on the number of derivatives).
As a  conclusion it follows  that no logarithmic terms
appear in this scaling limit.

\begin{corollary} \label{cor:main1}  Assume Hypothesis (H). There exists a positive  $\alpha_0$ such that for $ 0 \leq \alpha \leq \alpha_0$ the operator
$H_{\alpha^{3/2},\alpha}$ has a ground state $\psi(\alpha^{1/2})$ with ground
state energy $E(\alpha)$ such that we have the convergent expansions on $[0,\alpha_0)$
\begin{equation} \label{eq:expansion}
E(\alpha) = \sum_{n=0}^\infty E^{(2n)}_\alpha \alpha^{3 n }  ,\qquad
\psi(\alpha^{1/2}) = \sum_{n=0}^\infty \psi_\alpha^{(n)} \alpha^{3n/2} .
\end{equation}
The coefficients   $E_\alpha^{(n)}$ and $\psi_\alpha^{(n)}$ are as functions of $\alpha$
in $C^\infty([0,\infty))$ and $C^\infty([0,\infty);\HH)$, respectively.
For every $k \in \N_0$ there exists a positive  $\alpha_0^{(k)}$ such that
$\psi(\cdot)$ and $E(\cdot)$ are $k$-times continuously differentiable on $[0,\alpha_0^{(k)})$.
\end{corollary}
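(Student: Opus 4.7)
The plan is to reduce Corollary \ref{cor:main1} to Theorem \ref{thm:main1} and Corollary \ref{thm:main2} by the substitution $\beta = \alpha$, $g = \alpha^{3/2}$. Given $k \in \N_0$, let $g_0 = g_0(k)$ and $r = r(k)$ be the constants furnished by Corollary \ref{thm:main2}, and pick $\alpha_0^{(k)}$ positive and small enough so that every convergence argument below succeeds. For $\alpha \in [0, \alpha_0^{(k)})$ the coupling $g = \alpha^{3/2}$ lies in $D_{g_0} \cap \R$, so Theorem \ref{thm:main1}(i) identifies $E(\alpha) := E_\alpha(\alpha^{3/2}) = \inf\sigma(H_{\alpha^{3/2},\alpha})$ as the ground state energy and $\psi(\alpha^{1/2}) := \psi_\alpha(\alpha^{3/2})$ as an associated ground state. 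Substituting $\beta = \alpha$ and $g = \alpha^{3/2}$ into the power series of Corollary \ref{thm:main2} yields the convergent expansions \eqref{eq:expansion}, and the $C^\infty$-smoothness of the coefficients $E^{(n)}_\alpha$, $\psi^{(n)}_\alpha$ on $[0, \infty)$ is inherited from the $C^\infty$-smoothness in $\beta \in \R$ asserted there.

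For $k$-fold continuous differentiability of $E$, I would differentiate the series $E(\alpha) = \sum_{n \geq 0} E^{(2n)}_\alpha \alpha^{3n}$ term by term. Leibniz gives
\[
\partial_\alpha^k\bigl(E^{(2n)}_\alpha\, \alpha^{3n}\bigr) = \sum_{l=0}^{\min(k,3n)} \binom{k}{l} \bigl(\partial_\alpha^{k-l} E^{(2n)}_\alpha\bigr) \frac{(3n)!}{(3n-l)!}\, \alpha^{3n-l},
\]
and combining $\|E^{(2n)}_{(\cdot)}\|_{C^k(\R)} \leq C r^{2n}$ from Corollary \ref{thm:main2} with $(3n)!/(3n-l)! \leq (3n)^l$ produces a majorant of the form $C_k\, r^{2n}(1+n)^k \alpha^{\max(0,3n-k)}$. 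For $\alpha_0^{(k)}$ small enough, the series $\sum_n r^{2n}(1+n)^k \alpha^{3n-k}$ converges uniformly on $[0,\alpha_0^{(k)})$, justifying the term-by-term differentiation and giving $E \in C^k([0, \alpha_0^{(k)}))$. Because only integer powers of $\alpha$ occur, no logarithmic terms can arise.

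The argument for $\psi$ proceeds analogously in the natural variable $s = \alpha^{1/2}$: write $\psi(s) = \sum_{n \geq 0} \psi^{(n)}_{s^2}\, s^{3n}$. The prefactor $\psi^{(n)}_{s^2}$ is $C^k$ in $s$ by the chain rule applied to the smooth substitution $\beta = s^2$, with uniform bound $|\partial_s^j \psi^{(n)}_{s^2}| \leq C_k\, r^n$ on compact $s$-intervals coming from $\|\psi^{(n)}_{(\cdot)}\|_{C^k(\R;\HH)} \leq C r^n$. A second application of Leibniz to $\psi^{(n)}_{s^2}\cdot s^{3n}$ produces the majorant $C_k\, r^n(1+n)^k s^{\max(0,3n-k)}$, absolutely summable for $s$ small. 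After shrinking $\alpha_0^{(k)}$ once more if necessary, this gives $\psi(\cdot) \in C^k([0, \alpha_0^{(k)}); \HH)$, as required.

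The main obstacle is the bookkeeping of constants: the radii in Theorem \ref{thm:main1} and Corollary \ref{thm:main2} shrink with $k$, forcing $\alpha_0^{(k)}$ to decrease with $k$, and one must keep track of the polynomial-in-$n$ factors generated by successive derivatives so that the differentiated series remain uniformly convergent on $[0, \alpha_0^{(k)})$. All the genuine analytic content—existence of the ground state, analyticity in $g$, and the $C^k$-estimates in $\beta$—has already been carried out in Theorem \ref{thm:main1} and Corollary \ref{thm:main2}; what remains here is essentially a convergent-power-series manipulation that transports those results from the pair $(g,\beta)$ to the single variable $\alpha$.
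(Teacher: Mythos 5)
Your proposal is correct and follows essentially the same route the paper takes: the paper's explicit ``Proof of Corollary \ref{cor:main1}'' is really the Cauchy-formula argument establishing Corollary \ref{thm:main2} (the coefficient bounds $\| E^{(2n)}_{(\cdot)}\|_{C^k(\R)} \leq C r^{2n}$, etc., plus the $(-1)^N$-parity argument showing $E_\beta(g)$ is even in $g$), and the final step --- substituting $\beta = \alpha$, $g = \alpha^{3/2}$ and differentiating the resulting series in $\alpha$ term by term --- is left implicit. You spell out precisely that step, including the Leibniz-plus-majorant bookkeeping and the crucial observation that only integer powers of $\alpha$ (respectively $\alpha^{1/2}$) occur because the $g$-expansion of $E$ contains only even powers, which is what makes one-sided $C^k$-differentiability at $\alpha = 0$ work despite $\alpha \mapsto \alpha^{3/2}$ not being smooth there.
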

In \cite{BFP06,BFP09} it was shown  that there exist
coefficients of the type  \eqref{eq:expansion}   which   have
slower growth than $\alpha^{-t}$ for any $t > 0$. Corollary  \ref{cor:main1}
states that  the  coefficients   $E_\alpha^{(n)}$ and $\psi_\alpha^{(n)}$ are   in fact smooth.
Let us note that  Corollary  \ref{cor:main1} implies the following corollary which
states that the ground state and the ground state energy can be written in terms of an asymptotic
series with constant coefficients in the sense of \cite{reesim4}.

\begin{corollary} \label{cor2}  Assume Hypothesis (H). There exist formal power series with constant coefficients
$\sum_{n=0}^\infty c^{(n)} \alpha^{n/2}$ and $
\sum_{n=0}^\infty e^{(n)} \alpha^{n}  $
which  are asymptotic to the
ground state and
the ground state energy  of
$H_{\alpha^{3/2},\alpha}$  as $\alpha \downarrow 0$, respectively.
\end{corollary}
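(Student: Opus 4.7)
The plan is to derive Corollary \ref{cor2} from Corollary \ref{cor:main1} by Taylor-expanding its smooth $\alpha$-dependent coefficients at $\alpha=0$ and rearranging the resulting double expansion into a single formal power series in $\alpha$ (respectively $\alpha^{1/2}$).

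Starting from the convergent expansion $E(\alpha) = \sum_{n=0}^\infty E_\alpha^{(2n)} \alpha^{3n}$ on $[0,\alpha_0)$, with each $\alpha \mapsto E_\alpha^{(2n)}$ in $C^\infty([0,\infty))$, I would define the constants
\begin{equation*}
e^{(k)} \;:=\; \sum_{\substack{m,n \geq 0 \\ m + 3n = k}} \frac{1}{m!}\, \partial_\beta^m E_\beta^{(2n)} \Big|_{\beta=0},
\end{equation*}
each sum being finite, and claim that $\sum_{k\ge 0} e^{(k)} \alpha^k$ is asymptotic to $E(\alpha)$ at $0$ in the sense of \cite{reesim4}. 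To verify this, fix $N \in \N_0$ and choose $n_0$ with $3n_0 > N$. Corollary \ref{thm:main2} with regularity index $N+1$ furnishes constants $C,r$ such that $\|E_{(\cdot)}^{(2n)}\|_{C^{N+1}(\R)} \leq C r^{2n}$; consequently the tail $\sum_{n \geq n_0} E_\alpha^{(2n)}\alpha^{3n}$ is $O(\alpha^{3n_0}) = O(\alpha^{N+1})$ for small $\alpha$. For each of the finitely many $n < n_0$, I would Taylor-expand $E_\alpha^{(2n)}$ around $\alpha=0$ to order $N-3n$; the remainder, controlled by the $C^{N+1}$-norm, contributes $O(\alpha^{N+1})$ after multiplication by $\alpha^{3n}$. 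Collecting the resulting polynomial terms by total degree in $\alpha$ then yields $E(\alpha) = \sum_{k=0}^N e^{(k)} \alpha^k + O(\alpha^{N+1})$, which is the required asymptotic relation.

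The ground-state statement follows by the same argument applied to $\psi(\alpha^{1/2}) = \sum_{n=0}^\infty \psi_\alpha^{(n)} \alpha^{3n/2}$. Taylor expansion of $\psi_\alpha^{(n)} \in C^\infty([0,\infty);\HH)$ in $\alpha$ introduces integer powers of $\alpha = (\alpha^{1/2})^2$, so the combined exponents take the form $k/2$ with $k = 2m+3n \in \N_0$. Setting $c^{(k)} := \sum_{2m+3n = k} \tfrac{1}{m!}\,\partial_\beta^m \psi_\beta^{(n)}\big|_{\beta=0}$ produces the asserted formal series in $\alpha^{1/2}$, asymptotic in the $\HH$-norm by the same splitting-plus-bounds argument with $\HH$-valued Taylor remainders.

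There is no single hard step here; the whole argument is a rearrangement plus a standard asymptotic-series estimate. The only piece of bookkeeping worth care is the interplay between the outer series (in $g = \alpha^{3/2}$) and the inner Taylor expansion (in $\beta = \alpha$): this is clean because the outer series comes with the geometric bound $Cr^{2n}$ of Corollary \ref{thm:main2}, making tails negligible at any prescribed polynomial order, while the $C^\infty$ regularity of each coefficient makes each inner Taylor remainder of arbitrary prescribed order in $\alpha$. Together these control the double sum and license its reorganization into a single asymptotic series with constant coefficients.
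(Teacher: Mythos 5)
Your argument is correct and fills in precisely the details the paper leaves implicit: the paper states that Corollary~\ref{cor2} follows from Corollary~\ref{cor:main1} by Taylor-expanding the smooth $\alpha$-dependent coefficients $E_\alpha^{(2n)}$ and $\psi_\alpha^{(n)}$, and your truncation-at-order-$n_0$ plus Taylor-remainder argument, with tails controlled by the geometric $C^{N+1}$-bounds from Corollary~\ref{thm:main2}, is exactly the right way to make that asymptotic rearrangement rigorous. (It is worth noting explicitly that since $2m+3n \neq 1$ the coefficient $c^{(1)}$ vanishes, but this is harmless for the statement.)
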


In view of  Corollary  \ref{cor:main1} and the continuity in the infrared cutoff which has been established  in \cite{HH10-2} one can calculate
$c^{(n)}$ and $ e^{(n)}$ of Corollary
\ref{cor2} using for example ordinary Rayleigh Schr\"odinger perturbation theory to determine first $\psi_\alpha^{(n)}$
and  $E^{(2n)}_\alpha$, in Eq.  \eqref{eq:expansion}, and then using a Taylor expansion of these coefficients.

\section{Outline  of the Proof}
\label{sec:outline}

The main method used in  the proof of Theorem  \ref{thm:main1}   is
operator theoretic renormalization \cite{BFS98,BCFS03} and the fact that renormalization preserves
analyticity \cite{GH09,HH10-1}. The renormalization procedure is an iterated application of
the so called smooth Feshbach map. The smooth Feshbach map is reviewed in Appendix  B and necessary properties of
it are summarized.
In this paper we will use many results stated in  the  previous papers \cite{HH10-1} and \cite{HH10-2}.
The generalization from the Fock space over $L^2(\R^3)$, as considered in \cite{HH10-1}, to a Fock space
over $L^2(\R^3 \times \Z_2)$ is straight forward. To be able to show that  the renormalization transformation
is a suitable contraction we use a   rotation invariance argument, as explained in \cite{HH10-2}.
The main new ingredient is to control  derivatives with respect the $\beta$. The subtleties
originate from the reparameterization of the spectral parameter
In Section \ref{sec:symmetries} we define an   $SO(3)$ action
on the atomic Hilbert space and the Fock space, which leaves the Hamiltonian invariant.
In Section \ref{sec:ban} we introduce spaces which are needed
to define the renormalization transformation.
In Section \ref{sec:ini} we show that after an initial Feshbach transformation the Feshbach map is
in a suitable Banach space.
This allows us to perform a renormalization analysis, which is
the content of  Section \ref{sec:ren:def}. We use
 results from   \cite{HH10-1} and complement it with new estimates needed to
 control differentiation with respect to $\beta$.
In Section \ref{app:contraction} we prove the contraction property of the
renormalization transformation.
 In Section \ref{sec:prov} we put  the pieces together and
prove  Theorem \ref{thm:main1}. The proof is based  on Theorems  \ref{thm:inimain1}  and   \ref{thm:bcfsmain}.

We use the notation  $\R_+ = [0, \infty)$. For a multi-index $\umm \in \N_0^l$ we use the usual convention  $|\umm| = \sum_{i=1}^l m_i $ and
$\umm!= \prod_{i=1}^l ( m_i!)$.
We shall make repeated use of the so called pull-through formula which is given in  Lemma \ref{lem:pullthrough}, in Appendix A.
We refer the reader to the appendix for notation of  function spaces and will use
Lemma \ref{lem:weakstronanalyt1}.
Finally, let us note that using  an appropriate
scaling we can assume without loss of generality that the distance between the lowest
eigenvalue of $H_{\rm at}$ and the rest of the spectrum is one, i.e.,
\begin{equation} \label{eq:hatscale}
E_{\rm at,1} - E_{\rm at} = 1  ,
\end{equation}
where $E_{\rm at,1} := \inf \left\{ \sigma( H_{\rm at}) \setminus\{ E_{\rm at}\} \right\}$. Any Hamiltonian of the form
\eqref{eq:hamiltoniandefinition} satisfying Hypothesis (H)  is up to a positive multiple unitarily equivalent to an operator
satisfying  \eqref{eq:hatscale} and Hypothesis (H), but with a rescaled potential and with different values for $\Lambda$, $\beta$,
and $g$, see \cite{HH10-2}.

\section{Symmetries}

\label{sec:symmetries}

Let us introduce  a representation of $SO(3)$ on $\HH_{\rm at}$ and $\hh$. For details  see \cite{HH10-2}. For $R \in SO(3)$ and $\psi \in \HH_{\rm at}$ we define
$$
\mathcal{U}_{\rm at}(R) \psi(x_1,...,x_N) = \psi(R^{-1} x_1, ... , R^{-1} x_N) .
$$
To define an $SO(3)$ action on Fock space it is convenient to consider a different but equivalent representation of the
Hilbert space $\hh$.
We introduce the Hilbert space $\hh_0 := L^2(\R^3 ; \C^3)$. We consider the
subspace   of transversal vector fields
$$
\hh_T := \{ f \in \hh_0 | k \cdot f(k)  = 0  \} .
$$
It is straightforward to verify that  the map
$\phi : \hh  \to \hh_T $ defined by
\begin{eqnarray*}
 (\phi f)(k) :=  \sum_{\lambda=1,2} f(k,\lambda) \varepsilon(k,\lambda)
\end{eqnarray*}
establishes a unitary isomorphism with inverse
$$
(\phi^{-1}f)(k,\lambda) =   f(k) \cdot  \varepsilon(k,\lambda)  .
$$
We define the action of $SO(3)$ on $\hh_T$ by
$$
(\mathcal{U}_T(R) f )(k) = R f(R^{-1} k) .
$$
The function $R \mapsto \phi^{-1} \mathcal{U}_T(R) \phi$ defines a representation of $SO(3)$ on $\hh$ which we denote
by  $\UU_\hh$. This yields a representation on Fock space which we  denote  by $\UU_\FF$. It is characterized by
$$
\UU_\FF(R) a^\#(f) \UU_\FF(R)^* = a^\#(\UU_\hh(R) f) \quad , \quad \UU_\FF(R) \Omega = \Omega .
$$
It is straight forward to show that the Hamiltonian $H_{g,\beta}$ is $SO(3)$ invariant.

\section{Banach Spaces of Hamiltonians}
\label{sec:ban}

In this section we introduce Banach spaces of integral kernels, which
parameterize  certain subspaces of the space of bounded operators on Fock space.
These spaces are used to control the renormalization transformation. Then we
introduce Banach spaces, which we call extended Banach spaces, which are used to  control derivatives with respect to $\beta$.

The renormalization transformation will be defined on operators acting on the reduced Fock space
$\mathcal{H}_{\rm red}:= P_{\rm red} \FF$,
where we introduced the notation  $P_{\rm red}:=  \chi_{[0,1]}(H_f)$.
We will investigate bounded operators in $\mathcal{B}(\mathcal{H}_{\rm red})$ of the form
\beqn \label{eq:sum}
H(w) := \sum_{m+n \geq 0} H_{m,n}(w) ,
\eeqn
with
\begin{align}
& H_{m,n}(w) := H_{m,n}(w_{m,n}) ,   \nonumber \\
& H_{m,n}(w_{m,n}) := P_{\rm red} \int_{\un{B}_1^{m+n}} \frac{ d \mu( {K}^{(m,n)})}{|{K}^{(m,n)}|^{1/2}} a^*({K}^{(m)}) w_{m,n}(H_f, {K}^{(m,n)}) a(\widetilde{{K}}^{(n)}) P_{\rm red}  , \quad m+n \geq 1  ,  \label{eq:defhmn11} \\
& H_{0,0}(w_{0,0}) := w_{0,0}(H_f) , \nonumber
\end{align}
where $w_{m,n} \in L^\infty([0,1] \times \un{B}_1^m \times \un{B}_1^n)$
is an integral kernel for $m+n \geq 1$, $w_{0,0} \in L^\infty([0,1])$, and $w$ denotes the sequence of
integral kernels $(w_{m,n})_{m,n \in \N_0^2}$.
We have used and will henceforth use the following notation.  We set $K = (k, \lambda ) \in  \R^3 \times \Z_2$, and write
\begin{align*}
& \un{X} := X \times \Z_2  \quad , \quad B_1 := \{ x \in \R^3 | |x|< 1 \} \\
& K^{(m)} := ({K}_1, ... ,{K}_m ) \in  \left( {\R}^{3} \times \Z_2 \right)^m  ,
\quad \widetilde{{K}}^{(n)} := (\widetilde{{K}}_1, ... , \widetilde{{K}}_n ) \in \left( {\R}^{3} \times \Z_2 \right)^n , \\
&  {K}^{(m,n)}  := ({K}^{(m)}, \widetilde{{K}}^{(n)})  \\
& \int_{\un{X}^{m+n}} d  {K}^{(m,n)}  :=  \int_{{X}^{m+n}} \sum_{(\lambda_1,...,\lambda_m,\widetilde{\lambda}_1,...,\widetilde{\lambda}_n) \in \Z_2^{m+n} } dk^{(m)} d\widetilde{k}^{(n)} \\
&  dk^{(m)} := \prod_{i=1}^m {d^3 k_i} , \quad d\widetilde{k}^{(n)}  := \prod_{j=1}^n {d^3 \widetilde{k}_j}  ,  \quad
d K^{(m)} := d K^{(m,0)} , \quad  d \widetilde{K}^{(n)} := d K^{(0,n)} ,  \\
& d \mu (K^{(m,n)}) := (8 \pi )^{-\frac{{m+n}}{2}} d K^{(m,n)} \\
& a^*({K}^{(m)}) :=  \prod_{i=1}^m a^*({K}_i) , \quad a(\widetilde{{K}}^{(m)}) :=  \prod_{j=1}^m a(\widetilde{{K}}_j) \\
& | {K}^{(m,n)}| := | {K}^{(m)} | \cdot | \widetilde{{K}}^{(n)}| , \quad | {K}^{(m)} | := |k_1| \cdots |k_m | , \quad  | \widetilde{{K}}^{(m)} | := |\widetilde{k}_1| \cdots |\widetilde{k}_m | , \\
& \Sigma[{K}^{(m)}] := \sum_{i=1}^n |k_m |  \; .
\end{align*}
Note that in view of the pull-through formula  \eqref{eq:defhmn11} is equal to
\beqn \label{eq:defintegralkernel}
\int_{\underline{B}_1^{m+n}} \frac{ d \mu(K^{(m,n)})}{|K^{(m,n)}|^{1/2}} a^*(K^{(m)})  \chi(H_f +  \Sigma[K^{(m)}] \leq 1 ) w_{m,n}(H_f ; K^{(m,n)})
\chi(H_f +  \Sigma[\tilde{K}^{(n)}] \leq 1) a(\tilde{K}^{(n)} ) \; .
\eeqn
Thus we can restrict attention to integral kernels $w_{m,n}$ which are  essentially supported on the sets
\begin{eqnarray*}
\underline{Q}_{m,n} &:=& \{ ( r , K^{(m,n)}) \in [0,1] \times \underline{B}_1^{m+n}  \ | \ r  \leq 1 -
\max(\Sigma[K^{(m)}],
\Sigma[\widetilde{K}^{(m)}]) \} , \quad m + n \geq 1 .
\end{eqnarray*}
Moreover, note that  integral kernels can always be assumed to be symmetric. That is, they lie in the range of the symmetrization operator,
which is defined as follows,
\begin{eqnarray} \label{eq:symmetrization}
w_{M,N}^{({\rm sym})}(r;K^{(M,N)}) := \frac{1}{N!M!} \sum_{\pi \in S_M} \sum_{\widetilde{\pi} \in S_N} {w}_{M,N}(r,
K_{\pi(1)},\ldots,K_{\pi(N)}, \widetilde{K}_{\widetilde{\pi}(1)},\ldots,\widetilde{K}_{\widetilde{\pi}(M)}).
\end{eqnarray}

Note that \eqref{eq:defhmn11} is understood in the sense of forms. It defines a densely defined form
which can be seen to be bounded using   Lemma \ref{kernelopestimate}.
Thus it uniquely determines a bounded operator which we denote by $H_{m,n}(w_{m,n})$. This is explained in more
detail in  Appendix A. We have the following lemma.
\begin{lemma} For $w_{m,n} \in L^\infty([0,1] \times \un{B}_1^m \times \un{B}_1^n)$\label{lem:operatornormestimates} we have
\beqn \label{eq:operatornormestimate1}
\|H_{m,n}(w_{m,n}) \|_{} \leq \| w_{m,n} \|_{\infty} ( n! m!)^{-1/2} \; .
\eeqn
%where $\| \cdot \|_{\rm op}$ denotes the operator norm of $\HH_{\rm red}$.
\end{lemma}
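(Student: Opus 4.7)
The plan is to bound the operator norm by estimating the sesquilinear form $\langle\phi, H_{m,n}(w_{m,n})\psi\rangle$ for arbitrary $\phi,\psi \in \HH_{\rm red}$. First I would move the creation operators across the inner product to get
$$\langle \phi, H_{m,n}(w_{m,n}) \psi \rangle = \int_{\un{B}_1^{m+n}} \frac{d\mu(K^{(m,n)})}{|K^{(m,n)}|^{1/2}} \langle a(K^{(m)})\phi,\, w_{m,n}(H_f, K^{(m,n)})\, a(\tilde K^{(n)})\psi\rangle,$$
which is legitimate because $P_{\rm red}$ restricts everything to $\HH_{\rm red}$ and the pull-through formula absorbs the $\chi(H_f + \Sigma[\cdot]\le1)$ factors coming from \eqref{eq:defintegralkernel}. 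Then I would apply $|w_{m,n}| \leq \|w_{m,n}\|_\infty$ pointwise together with Cauchy--Schwarz inside the inner product, split the weight as $|K^{(m,n)}|^{-1/2} = |K^{(m)}|^{-1/2}|\tilde K^{(n)}|^{-1/2}$, and apply Cauchy--Schwarz to the integral itself to obtain
$$|\langle\phi, H_{m,n}(w_{m,n})\psi\rangle| \leq \|w_{m,n}\|_\infty \Bigl(\int d\mu \tfrac{\|a(K^{(m)})\phi\|^2}{|K^{(m)}|}\Bigr)^{1/2}\Bigl(\int d\mu \tfrac{\|a(\tilde K^{(n)})\psi\|^2}{|\tilde K^{(n)}|}\Bigr)^{1/2}.$$

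The heart of the proof is then the integrated multi-annihilation estimate asserting that, on $\HH_{\rm red}$,
$$\int_{\un{B}_1^m}\frac{dK^{(m)}}{|K^{(m)}|}\,\|a(K^{(m)})\phi\|^2 \;\leq\; \frac{(8\pi)^m}{m!}\,\|\phi\|^2.$$
To prove this I would decompose $\phi = \sum_l \phi_l$ into bosonic Fock sectors and use the explicit formula $\|a(K^{(m)})\phi_l\|^2 = \frac{l!}{(l-m)!}\int|\phi_l(K^{(m)},Q^{(l-m)})|^2\, dQ^{(l-m)}$. Symmetry of $\phi_l$ lets me symmetrize the singular weight $1/(|K_1|\cdots|K_m|)$ over all $m$-element subsets of $\{1,\dots,l\}$, turning the integrand into $|\phi_l|^2$ times the elementary symmetric polynomial $e_m(1/|K_1|,\dots,1/|K_l|)/\binom{l}{m}$. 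The support constraint from $P_{\rm red}$ forces $\sum_i |K_i| \leq 1$, which combined with the simplex volume computation $\int_{\sum r_i\le 1,\, r_i\ge 0}\prod_{i=1}^m r_i\, dr = 1/(2m)!$ and a Maclaurin-type estimate produces the claimed $1/m!$ factor.

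Combining these ingredients with the normalization $d\mu(K^{(m,n)}) = (8\pi)^{-(m+n)/2}\,dK^{(m,n)}$ and the elementary volume $\int_{\un{B}_1^n} d\tilde K^{(n)} = (8\pi/3)^n$, the $(8\pi)^{\pm \cdot}$ factors cancel exactly, and the two Cauchy--Schwarz factors contribute respectively $\|\phi\|/\sqrt{m!}$ and $\|\psi\|/\sqrt{n!}$ (modulo harmless constants $\leq 1$). This yields $\|H_{m,n}(w_{m,n})\| \leq \|w_{m,n}\|_\infty(n!m!)^{-1/2}$ as required. Alternatively this whole calculation can be packaged and invoked through the kernel operator estimate (Lemma~\ref{kernelopestimate}) of Appendix~A, which is already referenced in the paragraph preceding the lemma to establish well-definedness of $H_{m,n}(w_{m,n})$ as a bounded operator.

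The main obstacle is the bookkeeping in the key integrated annihilation bound. Naively, $\int dK^{(m)}/|K^{(m)}|\cdot\|a(K^{(m)})\phi\|^2$ looks as though it ought to blow up because $\sum_i 1/|K_i|$ is unbounded on $\HH_{\rm red}$, and indeed any argument that only uses the number bound $a^*a \leq \|\cdot\|^2 N$ fails. The decisive input is the full constraint $\sum_i|K_i|\le 1$ imposed by $P_{\rm red}$: the effective simplex volume scales like $1/(2m)!$, which beats the $m!$ coming from the symmetric-tensor normalization and produces the required $1/\sqrt{m!n!}$ gain. Getting this interplay exactly right in the combinatorics is the only delicate point; everything else is routine Cauchy--Schwarz.
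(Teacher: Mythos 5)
The key estimate you propose — namely that on $\HH_{\rm red}$
$$
\int_{\un{B}_1^m}\frac{dK^{(m)}}{|K^{(m)}|}\,\|a(K^{(m)})\phi\|^2 \;\leq\; \frac{(8\pi)^m}{m!}\,\|\phi\|^2
$$
— is false, and this is a genuine gap rather than a bookkeeping issue. Already for $m=1$ the left side is $\langle\phi,\, d\Gamma(|k|^{-1})\phi\rangle$, which is not controlled by $\|\phi\|^2$ even on the range of $P_{\rm red}$. For instance the one-photon state $\phi_1(K)=\chi(|K|<1)\,|K|^{-5/4}$ lies in $\HH_{\rm red}$ with finite norm, while $\int_{B_1}|K|^{-1}|\phi_1(K)|^2\,dK=\int_{B_1}|K|^{-7/2}\,dK$ diverges. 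The support constraint $\sum_i|K_i|\le1$ keeps each $|K_i|$ small and therefore makes $e_m(1/|K_1|,\dots,1/|K_l|)$ \emph{large}, not small; the simplex volume $\int_{\sum r_i\le 1}\prod r_i\,dr=1/(2m)!$ that you invoke has the $r_i$ in the numerator and is the wrong object, and there is no Maclaurin-type inequality that reverses the direction.

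What the paper actually uses is a different placement of the singular weight. Instead of pushing $|K^{(m)}|^{-1}$ inside the annihilation-norm integral, it splits $|K^{(m)}|^{-1/2}=|K^{(m)}|^{-1}\cdot|K^{(m)}|^{1/2}$ and keeps the $|K^{(m)}|^{-1}$ \emph{outside}, in a square-integrable weight, while the $|K^{(m)}|^{1/2}$ factor is absorbed into the exact identity \eqref{eq:trivialA},
$$
\int_{\underline{X}^m}dK^{(m)}\,|K^{(m)}|\,
\Bigl\|\prod_{l=1}^m\bigl[H_f+\Sigma[K^{(l)}]\bigr]^{-1/2}a(K^{(m)})\phi\Bigr\|^2
=\|P_\Omega^\perp\phi\|^2 .
$$
The compensating factors $(H_f+\Sigma[K^{(l)}])^{-1/2}$ are essential and have no counterpart in your estimate. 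This gives Lemma \ref{kernelopestimate} with the $\sharp$-norm weight $|K^{(m,n)}|^{-2}\prod(r+\Sigma[\cdot])$, and the support restriction $r+\Sigma[K^{(l)}]\le1$ lets one drop the product and reduce to the purely geometric bound \eqref{eq:intofwKminus2}, $\int_{\underline{S}_{m,n}}|K^{(m,n)}|^{-2}dK^{(m,n)}\le(8\pi)^{m+n}/(m!\,n!)$ — a simplex-volume computation that works precisely because the weight is $|K|^{-2}$ (cancelling the $3$-dimensional polar Jacobian), not $|K|^{-1}$. The "alternative" route through Lemma \ref{kernelopestimate} that you mention at the end is thus not an optional packaging but the actual proof; the direct route as you have set it up cannot be repaired, because the bound you need does not hold.
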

The proof follows using Lemma \ref{kernelopestimate}
and the estimate
\begin{equation} \label{eq:intofwKminus2}
 \int_{\underline{S}_{m,n}} \frac{d K^{(m,n)}}{|K^{(m,n)}|^{2}} \leq \frac{(8 \pi)^{m+n}}{{n! m!}} ,
\end{equation}
where $\underline{S}_{m,n} := \{ (K^{(m)},\widetilde{K}^{(n)}) \in \underline{B}_1^{m+n}
 \ | \Sigma[K^{(m)}] \leq 1 , \Sigma[\widetilde{K}^{(n)}] \leq 1 \}$.
The renormalization procedure will involve kernels which lie in the following Banach spaces.
We denote the norm of the Banach space  $L^\infty(\underline{B}_1^{m+n}; C[0,1])$ by
$\| \cdot \|_{\underline{\infty}}$.
We shall identify the space $L^\infty(\underline{B}_1^{m+n}; C[0,1])$ with a subspace of $L^\infty([0,1]\times \underline{B}_1^{m+n})$ by
setting
$$
w_{m,n}(r,K^{(m,n)}) := w_{m,n}(K^{(m,n)})(r) .
$$
This identification is used  for example in  (i) and (ii)  of  Definition \ref{def:wgartenhaag}.

\begin{definition} \label{def:wgartenhaag}
We define $\WW_{m,n}^\#$ to be the Banach space consisting of functions $w_{m,n} \in
L^\infty(\underline{B}_1^{m+n};C^1[0,1])$ satisfying the following properties:
\begin{itemize}
\item[(i)] $ w_{m,n} (1 - \chi_{\underline{Q}_{m,n}} ) = 0$, for $m + n \geq 1$,
\item[(ii)] $w_{m,n}(r, K^{(m)}, \widetilde{K}^{(n)})$ is totally
symmetric in the variables $K^{(m)}$ and $\widetilde{K}^{(n)}$
\item[(iii)] the following norm is finite
$$
\| w_{m,n} \|^\# := \| w_{m,n} \|_{\underline{\infty}} + \| \partial_r w_{m,n} \|_{\underline{\infty}} .
%\| w_{m,n} \|_\infty + \| \partial_r w_{m,n} \|_\infty < \infty .
 % \esssup_{K^{(m,n)}\in B_1^{m+n}} \sup_{r \in [0,1]} |w_{m,n}(r,K^{(m,n)})| + \esssup_{K^{(m,n)}\in B_1^{m+n}} \sup_{r \in [0,1]} |\partial_r w_{m,n}(r,K^{(m,n)})|
$$
\end{itemize}
For $0<\xi < 1$, we define the Banach space
$$
 \mathcal{W}^\#_{\xi} := \bigoplus_{(m,n) \in \N_0^2 } \mathcal{W}_{m,n}^\# \
$$
to consist of all sequences $w =( w_{m,n})_{m,n \in \N_0}$ satisfying
$$
\| w \|_\xi^\# := \sum_{(m,n)\in \N_0^2} \xi^{-(m+n)} \| w_{m,n}\|^\# < \infty .
$$
\end{definition}

Given $w \in \mathcal{W}_\xi^\#$, we
write $w_{\geq r}$ for the vector in $\mathcal{W}_\xi^\#$ given by
$$
(w_{\geq r})_{m + n} = \left\{ \begin{array}{ll} w_{m,n} & , \quad {\rm if} \ m+n \geq r \\ 0 & , \quad {\rm otherwise} . \end{array} \right.
$$
For $w \in \mathcal{W}^\#_{\xi}$, it is easy to see using \eqref{eq:operatornormestimate1} that
$
H(w) := \sum_{m,n} H_{m,n}(w)
$
converges in operator norm with bounds
\begin{align} \label{eq:opestimatgeq12}
 \| H(w_{\geq r} ) \|_{} \leq \xi^r \| w_{\geq r} \|_\xi^\# .
\end{align}
We shall use the notation
$$
W[w] := \sum_{m+n \geq 1} H_{m,n}(w) .
$$
We will use the following theorem, which is a straightforward generalization of a theorem proven in \cite{BCFS03}. A proof
can also be found in \cite{HH10-1}.

\begin{theorem} \label{thm:injective} The map $H : \WW_\xi^\# \to \mathcal{B}(\HH_{\rm red})$ is injective and bounded. Moreover
$\| H(w) \|_{} \leq \| w\|_\xi^\# $.
\end{theorem}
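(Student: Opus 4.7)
The bound on $\|H(w)\|$ follows directly by combining Lemma \ref{lem:operatornormestimates} with the definition of the $\WW_\xi^\#$ norm: since $\|w_{m,n}\|_{\underline{\infty}}\leq \|w_{m,n}\|^{\#}$ and $(m!\,n!)^{-1/2}\leq 1\leq \xi^{-(m+n)}$ for $\xi\in(0,1)$, we get $\|H_{m,n}(w_{m,n})\|\leq \xi^{-(m+n)}\|w_{m,n}\|^{\#}$, and summing over $(m,n)\in\N_0^2$ yields $\|H(w)\|\leq \|w\|_\xi^\#$; in particular the series defining $H(w)$ converges absolutely in operator norm.

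Injectivity is the substantive part. The plan is to reconstruct each kernel $w_{M,N}$ from suitably chosen matrix elements of $H(w)$, so that $H(w)=0$ forces every $w_{M,N}$ to vanish. Fix $M,N\in\N_0$, pick test functions $f_1,\dots,f_M,g_1,\dots,g_N\in L^2(\un{B}_1)$ with pairwise disjoint supports, and choose $p$ additional ``spectator'' functions $h_1,\dots,h_p$ whose supports are disjoint from all the $f_i$ and $g_j$ and tightly concentrated around points $\kappa_1,\dots,\kappa_p$ whose moduli sum to a prescribed value $r\in[0,1)$. Consider
\[
T(f,g;h) := \bigl\langle a^*(f_1)\cdots a^*(f_M)a^*(h_1)\cdots a^*(h_p)\Omega,\; H(w)\; a^*(g_1)\cdots a^*(g_N)a^*(h_1)\cdots a^*(h_p)\Omega\bigr\rangle.
\]
By particle-number conservation within each summand of \eqref{eq:sum} only those $H_{m,n}$ with $m-n=M-N$ contribute. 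Expanding with the canonical commutation relations, every surviving Wick contraction pairs each $f_i$ (and each bra-side $h_\ell$) either with an integration variable of $a^*(K^{(m)})$, with one of the $g_j$, or with a ket-side $h_\ell$; the disjoint-support assumption kills all ``cross'' pairings except the canonical $h_\ell\leftrightarrow h_\ell$ ones, forcing $m=M$ and $n=N$.

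The resulting expression is, up to a combinatorial constant arising from the symmetrization in \eqref{eq:symmetrization}, the evaluation of $w_{M,N}(r,\cdot)$ against the tensor product $\overline{f_1\otimes\cdots\otimes f_M}\otimes g_1\otimes\cdots\otimes g_N$, modulo errors that vanish as the spectators concentrate. Here one uses the pull-through formula (Lemma \ref{lem:pullthrough}) to replace the operator $w_{M,N}(H_f,\cdot)$ by multiplication by $w_{M,N}(r,\cdot)$ when acting on the state with total spectator energy $r$. Since $w_{M,N}\in L^\infty(\un{B}_1^{M+N};C^1[0,1])$ is continuous in $r$, varying the test functions $f_i,g_j$ and the parameter $r$ determines $w_{M,N}$ up to a null set in its momentum arguments, and the symmetry requirement (ii) of Definition \ref{def:wgartenhaag} upgrades this to equality as an element of $\WW_{M,N}^\#$.

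The main obstacle I expect is the combinatorial bookkeeping that isolates the single surviving diagonal term $(m,n)=(M,N)$ in the matrix element: without the disjoint-supports and spectator tricks, many terms in the Wick expansion conspire and one recovers only a linear combination of kernels. Once that is handled, the pull-through formula and the $C^1$-regularity in $r$ make the final extraction of $w_{M,N}(r,\cdot)$ routine. As remarked in the excerpt, this is a straightforward generalization of the argument in \cite{BCFS03} and the variant recorded in \cite{HH10-1}, and no genuinely new estimate beyond Lemma \ref{lem:operatornormestimates} is required.
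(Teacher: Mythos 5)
The boundedness half is fine: $\|w_{m,n}\|_{\underline{\infty}}\leq\|w_{m,n}\|^{\#}$, Lemma \ref{lem:operatornormestimates}, and $(m!\,n!)^{-1/2}\leq 1\leq\xi^{-(m+n)}$ give $\|H(w)\|\leq\|w\|_\xi^{\#}$ upon summing.

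For injectivity your reconstruction strategy is the expected one (the paper itself defers to \cite{BCFS03} and \cite{HH10-1} for this theorem), but the combinatorial step ``forcing $m=M$ and $n=N$'' is not correct as stated once spectators are present. Disjoint supports together with particle-number conservation do \emph{not} exclude $H_{M+j,N+j}$ for $j\geq 1$: a bra-side $h_\ell$ may be contracted against an integration variable $K_i$ of $a^*(K^{(m)})$ while the ket-side $h_\ell$ is contracted against a $\widetilde K_i$, so that a spectator is absorbed and re-emitted by the kernel; such terms are generically nonzero and satisfy $m-n=M-N$ with $m>M$. What eliminates them is the concentration limit, which you invoke later only to control the $H_f$-argument. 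As each $h_\ell$ shrinks to an $L^2$-normalized bump around a fixed $\kappa_\ell\neq 0$, every term containing an $h_\ell\leftrightarrow K_i$ contraction carries a factor bounded by $\|w_{m,n}\|_{\underline{\infty}}\int|h_\ell(K)|\,|K|^{-1/2}\,dK$, which tends to $0$ because $|K|^{-1/2}\in L^2_{\rm loc}$ away from the origin and the $L^1$-mass of a concentrating $L^2$-normalized bump vanishes; the diagonal pairing by contrast contributes $\langle h_\ell,h_\ell\rangle=1$. Only after this limiting argument is $(m,n)=(M,N)$ genuinely isolated, at which point the pull-through step replaces the $H_f$-argument by $r$, and varying $f$, $g$, $r$ reconstructs $w_{M,N}$ a.e.\ on $\underline{Q}_{M,N}$; properties (i)--(ii) of Definition \ref{def:wgartenhaag} then upgrade this to equality in $\WW_{M,N}^{\#}$. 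So the plan is sound, but the mechanism that kills the off-diagonal $H_{m,n}$ is the concentration limit, not the support assumption, and this needs to be said.
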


The integral kernels depend on the spectral parameter. To accommodate
for this we introduce the  Banach space $ \WW_\xi^{}  := C^\omega_B(D_{1/2}; \mathcal{W}^\#_{\xi})$
with norm
\begin{align*}
 \| w \|_\xi^{}  :=  \sup_{z \in D_{1/2}} \| w(z)  \|_\xi^{\#}
\end{align*}
Moreover, the integral kernels depend  on the coupling constants. We introduce the following Banach space
\begin{align*}
& \WW_{\xi}^{(k)}(S) := C^{\omega,k}_B(S \times \R ; \WW_\xi^\#) ,
\end{align*}
with the  norm
\begin{align*}
 \| w \|_{\xi,S}^{(k)}  &:= \sup_{(s,\beta) \in S \times \R} \sum_{m,n} \xi^{-m-n}
\max_{0 \leq l \leq k} \| \partial_\beta^l w(\beta,s)_{m,n} \|^\# .
\end{align*}
Observe that this norm is different but equivalent to the natural norm,
\begin{eqnarray*}
\max_{0 \leq l \leq k}  \sup_{(s,\beta) \in S \times \R} \sum_{m,n} \xi^{-m-n}
  \| \partial_\beta^l w(\beta,s)_{m,n} \|^\#  \leq
\| w \|_{\xi,S}^{(k)}  \leq  k  \max_{0 \leq l \leq k}      \sup_{(s,\beta) \in S \times \R} \sum_{m,n} \xi^{-m-n}
 \| \partial_\beta^l w(\beta,s)_{m,n} \|^\# .
\end{eqnarray*}
For notational compactness we will use an abbreviation for the case $S = D_{1/2}$ and set
 $\WW_{\xi}^{(k)} :=   \WW_{\xi}^{(\omega,k)}(D_{1/2}) $ and
$ \| \cdot \|_{\xi}^{( k)} :=    \| \cdot  \|_{\xi,S}^{(k)}   $.
We introduce the Banach space
\begin{align*}
& \WW_{\xi}^{(\#,k)} := C^{k}_B( \R ; \WW_\xi^\#)  , \quad   \| \cdot \|_{\xi}^{(\#, k)}    .
\end{align*}
with the norm
\begin{align*}
 \| w \|_{\xi}^{(\#,k)}  &:= \sup_{\beta \in \R} \sum_{m,n} \xi^{-m-n}
\max_{0 \leq l \leq k} \| \partial_\beta^l w(\beta)_{m,n} \|^\# .
\end{align*}

For $w \in \WW_\xi$ we will use the notation $w_{m,n}(z, \cdot) := (w_{m,n}(z))(\cdot)$.
We extend the definition of $H(\cdot)$ to $\WW_\xi$ in the natural way: for $w \in \WW_\xi$, we set
$$
\left( H(w) \right) (z) := H(w(z))
$$
and likewise for $H_{m,n}(\cdot)$ and $W[\cdot]$.
We say that a kernel $w \in \WW_\xi$ is symmetric
 if $w_{m,n}(\overline{z}) = \overline{w_{n,m}(z)}$  for all $z \in D_{1/2}$. Note that because of Theorem
\ref{thm:injective} we have the following lemma.
\begin{lemma} \label{lem:symmetry} Let $ w \in \WW_\xi$. Then
 $w$ is symmetric if and only if $H(w(\overline{z}))= H(w(z))^*$ for all $z \in D_{1/2}$.
\end{lemma}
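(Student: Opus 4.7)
The plan is to reduce the lemma to a direct computation of the adjoint of each piece $H_{m,n}(w_{m,n}(z))$ and then invoke the injectivity of the map $H$ from Theorem~\ref{thm:injective}.

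First I would establish, by an elementary adjoint calculation in the sense of quadratic forms on $\mathcal{H}_{\rm red}$, the identity
\begin{equation*}
H_{m,n}\bigl(w_{m,n}(z)\bigr)^{*} \;=\; H_{n,m}\bigl(\widehat{w}_{n,m}(z)\bigr),
\end{equation*}
where
\begin{equation*}
\widehat{w}_{n,m}(z)\bigl(r;K^{(n)},\widetilde{K}^{(m)}\bigr) \;:=\; \overline{w_{m,n}(z)\bigl(r;\widetilde{K}^{(m)},K^{(n)}\bigr)}.
\end{equation*}
This is immediate from $(a^{*})^{*}=a$, self-adjointness of $H_{f}$ and $P_{\rm red}$, and a relabeling of the $m+n$ integration variables (which is permissible because the kernel is symmetric in the creation variables and separately in the annihilation variables by Definition~\ref{def:wgartenhaag}(ii)). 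I would also check that $\widehat{w}_{n,m}(z)\in\WW^{\#}_{n,m}$: the support condition (i) and the $C^{1}$-norm (iii) are preserved under complex conjugation and the swap of groups of variables, and the induced sequence lies in $\WW^{\#}_{\xi}$ with the same norm.

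For the forward direction, suppose $w_{m,n}(\overline z)=\overline{w_{n,m}(z)}$, interpreted in the natural sense that $w_{m,n}(\overline z)(r;K^{(m)},\widetilde K^{(n)})=\overline{w_{n,m}(z)(r;\widetilde K^{(n)}, K^{(m)})}$. Then $\widehat{w}_{n,m}(z)=w_{n,m}(\overline z)$, and summing the pointwise identity over $(m,n)$ (the sums converge in operator norm by \eqref{eq:opestimatgeq12}, which justifies interchanging the sum with the adjoint) gives
\begin{equation*}
H(w(z))^{*} \;=\; \sum_{m,n} H_{n,m}\bigl(w_{n,m}(\overline z)\bigr) \;=\; H(w(\overline z)),
\end{equation*}
after relabeling the summation indices.

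For the converse, assume $H(w(\overline z))=H(w(z))^{*}$ for every $z\in D_{1/2}$. Define the kernel $\widehat{w}(z)\in\WW^{\#}_{\xi}$ by $\widehat{w}(z)_{m,n}:=\widehat{w}_{m,n}(z)$. The adjoint computation above, together with the norm-convergent expansion, yields $H(\widehat{w}(z))=H(w(z))^{*}=H(w(\overline z))$. By Theorem~\ref{thm:injective}, the map $H$ is injective on $\WW^{\#}_{\xi}$, so $\widehat{w}(z)=w(\overline z)$ in $\WW^{\#}_{\xi}$; unpacking this componentwise is exactly the stated symmetry $w_{m,n}(\overline z)=\overline{w_{n,m}(z)}$.

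The only nonroutine step is the adjoint formula for $H_{m,n}(w_{m,n})$, which must be handled at the level of the symmetric kernel representation \eqref{eq:defintegralkernel} in order to exchange the creation and annihilation variables cleanly; everything else is either a relabeling of indices or a direct appeal to the injectivity theorem.
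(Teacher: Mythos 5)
Your proof is correct and follows essentially the same route the paper has in mind: the paper omits the argument, merely remarking that the lemma is a consequence of the injectivity statement in Theorem~\ref{thm:injective}, and your write-up fills in exactly the intended details — the termwise adjoint identity $H_{m,n}(w_{m,n})^{*}=H_{n,m}(\widehat{w}_{n,m})$, norm-convergence of the sum so the adjoint passes inside, and injectivity of $H$ for the converse direction.
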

We define on the space of kernels $\WW_{m,n}^\#$ a
natural representation of $SO(3)$, $\mathcal{U}$, which
is uniquely determined by
\begin{equation} \label{eq:rinvkerop}
H (\mathcal{U}(R) w_{m,n}) = \mathcal{U}(R) H(w_{m,n}) \mathcal{U}^*(R) , \quad \forall R \in SO(3) ,
\end{equation}
\cite{HH10-2}.
 The representation on $\WW_{m,n}^\#$ yields a natural representation
on $\WW_\xi^\#$, which is given by $(\mathcal{U}(R) w )_{m,n} = \mathcal{U}(R) w_{m,n}$ for all $R \in SO(3)$.
It lifts to a represention on $\WW_\xi$ by setting $(\mathcal{U}(R) w )(z) =  \mathcal{U}(R) w(z) $ for all $w \in \WW_\xi$.
We say that a kernel $w_{m,n} \in \WW_{m,n}^\# $ is rotation invariant if $\mathcal{U}(R) w_{m,n} = w_{m,n}$ and we say
a kernel $w \in \WW_{\xi}^\# $ is rotation invariant if $\mathcal{U}(R) w = w$.
We will use the following lemma which is proven in \cite{HH10-2}.

\begin{lemma} \label{lem:wHinvequiv} (i)
Let $w_{m,n} \in \WW_{m,n}^\#$. Then $H(w_{m,n})$ is rotation invariant if and only if $w_{m,n}$ is rotation invariant.
  Let $w \in \WW_{\xi}^\#$. Then $H(w)$ is rotation invariant if and only if $w$ is rotation invariant.
(ii) If
$w_{m,n} \in \WW_{m,n}^\#$ with $m+n=1$  is rotation invariant, then $w_{m,n} = 0$.
\end{lemma}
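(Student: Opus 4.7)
My plan is to prove (i) from two ingredients developed in Section \ref{sec:ban}: the defining identity \eqref{eq:rinvkerop} for the $SO(3)$ action on kernels, and injectivity of $H$ (Theorem \ref{thm:injective}). For a single sector $w_{m,n}$, the forward implication is immediate: if $\mathcal{U}(R) w_{m,n} = w_{m,n}$, then \eqref{eq:rinvkerop} yields $\mathcal{U}(R) H(w_{m,n}) \mathcal{U}^*(R) = H(w_{m,n})$. Conversely, rotation invariance of $H(w_{m,n})$ together with \eqref{eq:rinvkerop} gives $H(\mathcal{U}(R) w_{m,n}) = H(w_{m,n})$, and Theorem \ref{thm:injective}, applied via the natural embedding $\WW_{m,n}^\# \hookrightarrow \WW_\xi^\#$, forces $\mathcal{U}(R) w_{m,n} = w_{m,n}$. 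To pass to $w \in \WW_\xi^\#$, I would use the norm-convergent expansion $H(w) = \sum_{m,n} H_{m,n}(w_{m,n})$ (from \eqref{eq:opestimatgeq12}) together with the componentwise definition $(\mathcal{U}(R) w)_{m,n} = \mathcal{U}(R) w_{m,n}$ to verify that $\mathcal{U}(R) H(w) \mathcal{U}^*(R) = H(\mathcal{U}(R) w)$, and then repeat the same dichotomy.

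For (ii), I would transfer to the transversal representation of Section \ref{sec:symmetries}. Consider first $(m,n) = (1,0)$; the case $(0,1)$ follows by the same argument applied to the adjoint. Define the transversal vector field $\widetilde{w}(r,k) := \sum_{\lambda = 1,2} w_{1,0}(r,k,\lambda)\, \varepsilon(k,\lambda)$, so that $\widetilde{w}(r,\cdot) \in \hh_T$ for each $r \in [0,1]$. A direct calculation using $\mathcal{U}_\hh(R) = \phi^{-1} \mathcal{U}_T(R) \phi$ together with \eqref{eq:rinvkerop} shows that rotation invariance of $w_{1,0}$ is equivalent to the equivariance $\widetilde{w}(r, Rk) = R\, \widetilde{w}(r,k)$ for all $R \in SO(3)$. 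Fix $k \neq 0$ and restrict $R$ to the stabilizer $SO(2)_k$: then $R\, \widetilde{w}(r,k) = \widetilde{w}(r,k)$, which forces $\widetilde{w}(r,k)$ to lie along the fixed axis $k$. Combined with the transversality $\widetilde{w}(r,k) \cdot k = 0$ built into $\hh_T$, this yields $\widetilde{w}(r,k) = 0$, and hence $w_{1,0} = 0$ by unitarity of $\phi$.

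The main obstacle is the bookkeeping in step (ii): identifying the abstract kernel action $\mathcal{U}(R)$, defined implicitly through \eqref{eq:rinvkerop}, with the concrete geometric action $(\mathcal{U}_T(R) f)(k) = R f(R^{-1} k)$ on the transversal field. This requires chasing the polarization vectors $\varepsilon(k,\lambda)$, whose own $k$-dependence is not itself covariant under rotations, through the definitions of $a^*(K)$, of $\phi$, and of $\mathcal{U}_\hh$, and checking that this non-covariance cancels at the level of $\widetilde{w}$. Once this identification is in place, (ii) reduces to the elementary statement that a rotation-equivariant tangent vector field on $S^2$ must vanish. Since the lemma is attributed to \cite{HH10-2}, the detailed verification of this cancellation can be imported from there.
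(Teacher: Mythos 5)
The paper does not actually prove this lemma; it cites \cite{HH10-2}, so there is no in-text argument to compare against. That said, your outline is correct and is the natural one given the framework the paper sets up. Part (i) is essentially immediate: the action $\mathcal{U}(R)$ on kernels is \emph{defined} by the identity \eqref{eq:rinvkerop}, so the forward direction is a rewrite, and the converse follows from Theorem \ref{thm:injective} via the zero-padded embedding $\WW_{m,n}^\# \hookrightarrow \WW_\xi^\#$; the passage to a full $w\in\WW_\xi^\#$ by norm-convergence of $\sum H_{m,n}$ and the componentwise action is also fine, since conjugation by a unitary is operator-norm continuous. For part (ii), the stabilizer argument is the right one: after identifying rotation invariance of $w_{1,0}$ with $SO(3)$-equivariance of the transversal field $\widetilde{w}(r,k)=\sum_\lambda w_{1,0}(r,k,\lambda)\,\varepsilon(k,\lambda)$, fixing $k\neq 0$ and restricting to $SO(2)_k$ forces $\widetilde{w}(r,k)\parallel k$, and transversality then gives $\widetilde{w}(r,k)=0$, hence $w_{1,0}=0$ by unitarity of $\phi$. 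You correctly identify the one nontrivial step, namely that the abstract kernel action induced by \eqref{eq:rinvkerop} really does become the geometric action $(\mathcal{U}_T(R)f)(k)=Rf(R^{-1}k)$ at the level of $\widetilde{w}$ once the $\varepsilon(k,\lambda)$ non-covariance cancels through $\phi$ and $\mathcal{U}_\hh=\phi^{-1}\mathcal{U}_T\phi$; this is exactly what \cite{HH10-2} supplies, and since the present paper also defers there, importing it is appropriate. One minor caveat worth noting: $\varepsilon(\cdot,\lambda)$ is only measurable, so the equivariance identity and hence the stabilizer step should be read as holding for a.e.\ $k$ and a.e.\ $R$, which is still enough to conclude $\widetilde{w}=0$ in $L^\infty$.
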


We will use the following polydiscs to define the renormalization transformation.
\begin{align*}
& \mathcal{B}^\#(\alpha,\beta,\gamma) := \left\{ w \in \mathcal{W}_\xi^\# \left| \| \partial_r w_{0,0} - 1 \|_\infty \leq \alpha , \
|w_{0,0}(0) | \leq \beta
, \ \| w_{\geq 1} \|_{\xi}^\# \leq \gamma \right. \right\}  ,
\\
& \mathcal{B}(\alpha,\beta,\gamma)   := \left\{ w \in \mathcal{W}_\xi^{} \left| \sup_{z \in D_{1/2}} \| \partial_r w_{0,0}(z) - 1 \|_\infty \leq \alpha , \,
\sup_{z \in D_{1/2}} | w_{0,0}(z,0) + z | \leq \beta
, \, \| w_{\geq 1} \|_{\xi} \leq \gamma \right. \right\} \\
& \mathcal{B}_0(\alpha,\beta,\gamma) := \{ w \in \mathcal{B}(\alpha,\beta,\gamma) |
 w(z) \ {\rm is \ rotation \ invariant \ for \ all } \ z \in D_{1/2} \  \}
\end{align*}
To control the derivatives with respect to $\beta$, we introduce the following extended polydisc.
\begin{align*}
&\mathcal{B}^{(\#, k)}(\alpha,\beta,\gamma)
:= \Big\{ w \in \mathcal{W}_\xi^{(\#,k)} \Big|  \| \partial_r w_{0,0}  - 1 \|_{C^{k}(\R ; C_B[0,1])} \leq \alpha ,
\|w_{0,0}(0) \|_{C^{k}(\R)} \leq \beta
, \ \| w_{\geq 1} \|_{\xi}^{(\#,k)} \leq \gamma    \Big\}  \\
&\mathcal{B}^{(k)}(\alpha,\beta,\gamma)   := \Bigg\{ w \in \mathcal{W}_\xi^{(k)} \Bigg| \sup_{z \in D_{1/2}}
\| \partial_r w_{0,0}(z)  - 1 \|_{C^{k}(\R; C_B[0,1])} \leq \alpha , \\
&\qquad \qquad \qquad  \sup_{z \in D_{1/2}} \| w_{0,0}(z,0) + z \|_{C^{k}(\R)}  \leq \beta
, \ \| w_{\geq 1} \|_{\xi}^{( k)} \leq \gamma   \Bigg\}  \\
& \mathcal{B}_0^{(k)}(\alpha,\beta,\gamma)  :=
\{ w \in \mathcal{B}^{(k)}(\alpha,\beta,\gamma)  | w(z)  \ {\rm is \ rotation \  invariant \ for \ all \ }  z \in D_{1/2}  \ \} .
\end{align*}

\section{Initial Feshbach Transformation}

\label{sec:ini}

In this section we shall assume that the assumptions of  Hypothesis (H) hold. Without loss of generality, see Section \ref{sec:outline}, we
assume  that the distance between the lowest  eigenvalue of $H_{\rm at}$ and the rest of the spectrum is one, that is
\begin{equation} %\label{eq:hatscale1}
 \inf \left(  \sigma( H_{\rm at}) \setminus \{ E_{\rm at} \}  \right) - E_{\rm at} = 1  .
\end{equation}
Let $\chi_1$ and  $\chib_1$ be two functions  in $C^\infty(\R_+;[0,1])$  with $\chi_1^2 + \chib_1^2 = 1$,
$\chi_1 = 1$ on $[0,3/4)$, and $\supp \chi_1 \subset [0,1]$.
We use the abbreviations $\chi_1 = \chi_1(H_f)$ and $\chib_1  = \chib_1(H_f)$.
It should be clear from the context whether $\chi_1$ or $\chib_1$ denotes a function or an operator.
By  $\varphi_{\rm at}$ we denote  a fix choice for a  normalized eigenstate of $H_{\rm at}$ with eigenvalue
 $E_{\rm at}$ and by  $P_{\rm at}$ we denote
the eigen-projection of $H_{\rm at}$ corresponding to the
 eigenvalue $E_{\rm at}$.
 By Hypothesis (H) the range of
$P_{\rm at}$ is one dimensional. Thus
to every $\psi \in \ran P_{\rm at} \otimes P_{\rm red}$
there exists a unique $\iota(\psi) \in \HH_{\rm red}$ such that
$\psi = \varphi_{\rm at} \otimes \iota(\psi)$. It follows that
 $\iota : \ran P_{\rm at} \otimes P_{\rm red} \to  \HH_{\rm red}$ is unitary
and commutes with the $SO(3)$ action.
We will use $\iota$ to identify the range of
$P_{\rm at} \otimes P_{\rm red}$ with $\HH_{\rm red}$.
We define  $\chi^{(I)}(r) := P_{\rm at} \otimes \chi_1(r)$ and
$\chib^{(I)}(r)  = \bar{P}_{\rm at}  \otimes 1 + P_{\rm at} \otimes \chib_1(r)$,
with $\bar{P}_{\rm at} = 1 - P_{\rm at}$. We set
$\chi^{(I)} := \chi^{(I)}(H_f)$ and  $\chib^{(I)} := \chib^{(I)}(H_f)$. It follows directly from the definition that
 ${\chi^{(I)}}^2 + {\chib^{(I)}}^2 = 1$. We use an initial transformation based on the smooth Feshbach map
and its associated auxiliary operator, see  Appendix B.

\begin{theorem} \label{thm:inimain1}  Assume Hypothesis (H).  Let $k \in \N$.
For any $0 < \xi < 1$ and any positive numbers $\delta_1,\delta_2,\delta_3$ there exists a positive number
$g_0$ such that following is satisfied. For all
$(g,\beta,z) \in D_{g_0} \times \R  \times D_{1/2}$
the pair of operators
$(H_{g,\beta} - z - E_{\rm at}, H_0 -  z - E_{\rm at} )$ is a Feshbach pair for $\chi^{(I)}$.
The operator valued map
$$
Q_{\chi^{(I)}}(g,\beta,z) := Q_{\chi^{(I)}} (H_{g,\beta} - z - E_{\rm at}, H_0 -  z - E_{\rm at} )
$$
is uniformly bounded in $(g,\beta,z)$ and the function $(g,z) \mapsto  Q_{\chi^{(I)}}(g,\cdot,z)$ is in $
C^{\omega}_B( D_{g_0} \times D_{1/2} ;  C_B^k(\R ; \mathcal{B}(\HH_{\rm red} , \HH ))$.
There exists a unique kernel
$w^{(0)}(g,\beta,z) \in \WW_\xi^{\#}$ such that
\begin{equation} \label{eq:inimainA}
H(w^{(0)}(g,\beta,z)) =  \iota ( F_{\chi^{(I)}}( H_{g,\beta} - z - E_{\rm at}  , H_0 - z - E_{\rm at} ) \upharpoonright \ran P_{\rm at} \otimes P_{\rm red} ) \iota^{-1}.
\end{equation}
Moreover,  $w^{(0)}$ satisfies the following properties.
\begin{itemize}
\item[(a)]  We have
$w^{(0)}(g) := w^{(0)}(g,\cdot , \cdot ) \in \mathcal{B}_0^{(k)}(\delta_1,\delta_2,\delta_3)$
for all $g \in D_{g_0}$.
\item[(b)]
$w^{(0)}(g,\beta,\cdot)$ is a symmetric kernel  for all $(g,\beta) \in ( D_{g_0}  \cap \R) \times \R $.
\item[(c)]
The function $(g, z,\beta ) \mapsto w^{(0)}(g,\beta,z)$ is in
$C^{\omega,k}_B(      D_{g_0} \times D_{1/2} \times \R ; \WW_\xi^{\#})$.
\end{itemize}
\end{theorem}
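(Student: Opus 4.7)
The plan is to decompose $H_{g,\beta} - z - E_{\rm at} = T(z) + W_{g,\beta}$, where $T(z) = H_0 - z - E_{\rm at}$ and $W_{g,\beta}$ collects the interaction terms, then verify the Feshbach pair property via a Neumann series valid for small $g$, and finally extract the integral kernels $w^{(0)}_{m,n}$ by Wick ordering the resulting expression. The Wick-ordered interaction splits as $W_{g,\beta} = g W_1(\beta) + g^2 W_2(\beta)$, where $W_1$ contains the linear creation/annihilation terms coming from $2 p_j \cdot A_\Lambda(\beta x_j)$ and $W_2$ contains the (Wick ordered) diamagnetic term $:\! A_\Lambda(\beta x_j)^2\!:$, both of which are relatively bounded with respect to $H_0$ with small constant.

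For the Feshbach pair property I would first observe that on $\mathrm{Ran}\,\bar\chi^{(I)}$, the operator $\bar\chi^{(I)} T(z) \bar\chi^{(I)}$ has a bounded inverse uniform for $z \in D_{1/2}$: on $\bar P_{\rm at}\otimes \mathbf{1}$ the gap from Hypothesis (H)(iii) together with the rescaling \eqref{eq:hatscale} supplies the bound, while on $P_{\rm at}\otimes \bar\chi_1^2$ the support properties of $\bar\chi_1$ yield $H_f \geq 3/4$. With this resolvent in hand, a Neumann series in $g$ converges uniformly in $(\beta,z)$ to give
\[
Q_{\chi^{(I)}}(g,\beta,z) = \chi^{(I)} - \bar\chi^{(I)} \bigl( \bar\chi^{(I)}( T(z) + W_{g,\beta}) \bar\chi^{(I)} \bigr)^{-1} \bar\chi^{(I)} W_{g,\beta} \chi^{(I)},
\]
establishing the Feshbach pair property, the uniform boundedness of $Q_{\chi^{(I)}}$, and its joint analyticity in $(g,z)$.

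To obtain the kernel $w^{(0)}$, I would apply the identification $\iota$ to the restriction of $F_{\chi^{(I)}}$ to $\mathrm{Ran}\,P_{\rm at}\otimes P_{\rm red}$; since $\varphi_{\rm at}$ is non-degenerate, the resulting operator on $\HH_{\rm red}$ is a sum of terms of the form $\langle \varphi_{\rm at}, W_{g,\beta} R(z) W_{g,\beta}\cdots W_{g,\beta}\varphi_{\rm at}\rangle$ acting on $P_{\rm red}\FF$. Wick-ordering each term using the pull-through formula produces an expression of the form \eqref{eq:defhmn11} and thus identifies the kernels $w^{(0)}_{m,n}$; uniqueness then follows from Theorem \ref{thm:injective}. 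Using Lemma \ref{lem:operatornormestimates} together with the factor $g^{m+n}$ appearing in each $(m,n)$ term of the Neumann expansion, one checks that $\|w^{(0)}_{\geq 1}\|_\xi^\#$ is as small as desired for $g_0$ small, so that $w^{(0)}(g)$ lands in $\mathcal{B}_0^{(k)}(\delta_1,\delta_2,\delta_3)$; the diagonal part $w^{(0)}_{0,0}(g,\beta,z;r) = -z + r + O(g^2)$ delivers the bounds on $\|\partial_r w_{0,0}-1\|$ and on $\|w_{0,0}(z,0)+z\|$. Symmetry and rotation invariance are inherited from the self-adjointness of $H_{g,\beta}$ for real parameters and its $SO(3)$-invariance via Lemma \ref{lem:wHinvequiv}, together with the $SO(3)$-invariance of $\chi^{(I)}$ and of $\varphi_{\rm at}$.

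The hard part, and the reason the constants $\delta_i$ and $g_0$ enter together with $k$, is controlling the $k$-th $\beta$-derivatives. Each $\partial_\beta$ acting on $A_\Lambda(\beta x_j)$ pulls down a factor of $(\mp i k\cdot x_j)$ inside the momentum integral, so after $k$ differentiations the integrand of the interaction is multiplied by polynomials of degree $\leq k$ in $x_j$. These are unbounded operators, but they are always sandwiched between factors of $P_{\rm at}$ (or resolvents acting between such factors), and since $E_{\rm at}$ is an isolated eigenvalue the eigenfunction $\varphi_{\rm at}$ decays exponentially (Agmon), so $\|x_j^l \varphi_{\rm at}\|<\infty$ for every $l$. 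Distributing the $\partial_\beta^l$ derivatives over the Neumann series by the Leibniz rule and applying these bounds term by term produces combinatorial factors which are absorbed into constants depending on $k$; the decisive point is that each derivative still carries a factor of $g$ so that the $\beta$-derivative norms remain of order $g$ rather than $O(1)$, which is exactly what is needed to keep $w^{(0)}(g)$ inside $\mathcal{B}_0^{(k)}(\delta_1,\delta_2,\delta_3)$ after choosing $g_0$ small. The $C^{\omega,k}_B$ statement for $Q_{\chi^{(I)}}$ and $w^{(0)}$ then follows from the uniform convergence of the (term-wise differentiated) Neumann series.
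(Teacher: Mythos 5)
Your high-level architecture matches the paper's: decompose $H_{g,\beta}-z-E_{\rm at}=T(z)+W_{g,\beta}$, establish the Feshbach pair via a uniform bound on $(\bar\chi^{(I)} T(z)\bar\chi^{(I)})^{-1}$ plus a small-$g$ Neumann series, Wick-order with the pull-through formula to extract the $w^{(0)}_{m,n}$, invoke Theorem~\ref{thm:injective} for uniqueness, and use the exponential (Agmon) decay of $\varphi_{\rm at}$ to control the $x$-growth pulled down by $\partial_\beta$. You have also correctly identified that the $\beta$-derivatives are where the difficulty lies.

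However, there is a genuine gap in how you control those $\beta$-derivatives, and it is precisely the part the paper calls the most delicate step (Theorem~\ref{ini:thm1b} and Lemma~\ref{initial:thmE22}). Each $\partial_\beta$ inserts a factor $x_j$ inside the momentum integral, but that factor sits at an arbitrary interior position in the Neumann chain $W\bar\chi R W\bar\chi R\cdots W$, not next to $\varphi_{\rm at}$. The bound $\|x^l\varphi_{\rm at}\|<\infty$ is only directly usable at the two endpoints. To move the decay from the endpoints past the intervening resolvents one must show that conjugated resolvents of the form
\[
e^{\gamma\langle x\rangle}\,\bar\chi^{(I)}(H_0-\zeta)^{-1}\bar\chi^{(I)}\,e^{-\gamma\langle x\rangle}
\]
remain bounded (uniformly in $\zeta\in D_{1/2}+E_{\rm at}$). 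This is not a consequence of Agmon decay alone; the paper establishes it by a Combes--Thomas-type deformation $H_{\rm at}(\gamma)=e^{\gamma\langle x\rangle}H_{\rm at}e^{-\gamma\langle x\rangle}$ together with a contour-integral representation of $(H_{\rm at}-w)^{-1}\bar P_{\rm at}$ and numerical-range estimates on $(w-H_{\rm at}(\gamma))^{-1}$. Your sketch treats this propagation as automatic, but it is the substantive analytic content of the lemma.

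A second, related gap is the weight allocation. The paper inserts a decreasing sequence $e^{\gamma_j\langle x\rangle}$ with $\gamma_{j+1}=(1-k^{-1}\sum_{t\leq j}(1-\delta_{l_t,0}))\gamma_1$, so that each time a derivative lands the weight drops by exactly $\gamma_1/k$; since at most $k$ derivatives act, $\gamma_{L+1}\geq 0$ no matter how they are distributed along a chain of arbitrary length $L$. This careful budgeting is what produces the crucial bound $\|\partial_\beta^l V_{\umm,\upp,\unn,\uqq}\|^\#\leq L^{k+1}C^L|g|^{|\umm|+|\unn|+|\upp|+|\uqq|}$, with growth only polynomial in $L$, which is still summable against $(C|g|)^L$. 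Your remark that the Leibniz combinatorics "produce combinatorial factors which are absorbed into constants depending on $k$" misses that the factor is $L^{k+1}$ (polynomial in the series index $L$, not a fixed constant), and that without the telescoping weights one cannot even obtain operator bounds on individual terms. Without these two ingredients the estimate on $\|w^{(0)}_{\geq 1}\|_\xi^{(\#,k)}$ does not close.
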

The remaining part of this section is devoted to the proof of  Theorem  \ref{thm:inimain1}.
Throughout this section we assume that
\begin{equation} \label{eq:zzetarelation}
z = \zeta -  E_{\rm at }  \in D_{1/2} .
\end{equation}
To prove Theorem \eqref{thm:inimain1},    we write the interaction part of the Hamiltonian in terms
of  integral kernels as follows,
$$
H_{g,\beta}  =  H_{\rm at} +   H_f + : W_{g, \beta} :  ,
$$
\begin{equation} \label{eq:sumofws}
W_{g,\beta} := \sum_{m+n=1,2}  W_{m,n}(g,\beta) .
\end{equation}
where $W_{m,n}(g,\beta) := \underline{H}_{m,n}(w_{m,n}^{(I)}(g,\beta))$
with
\begin{align}
 \label{eq:defhlinemn}
& \underline{H}_{m,n}(w_{m,n}) := \int_{{(\underline{\R}^3)}^{m+n}} \frac{ dK^{(m,n)}}{|K^{(m,n)}|^{1/2}}
a^*(K^{(m)}) w_{m,n}(K^{(m,n)}) a(\widetilde{K}^{(n)}) ,
\end{align}
and
\begin{align}
&w^{(I)}_{1,0}(g,\beta)( K) :=  2  g  \sum_{j=1}^N p_j \cdot  \varepsilon(k,\lambda)  \frac{\kappa_{\Lambda}(k)e^{ i \beta k \cdot x_j }}{\sqrt{2}} ,
\label{defofwI}  \\
&w^{(I)}_{1,1}(g,\beta)(K,\widetilde{K}) :=  g^2 \sum_{j=1}^N \varepsilon(k,\lambda) \cdot \varepsilon(\widetilde{k},\widetilde{\lambda}) \frac{\kappa_{\Lambda}(k)e^{- i \beta k \cdot x_j }}{\sqrt{2}}   \frac{\kappa_{\Lambda}(\widetilde{k})e^{ i \beta \widetilde{k}\cdot  x_j }}{\sqrt{2}} ,
\nonumber \\
&w^{(I)}_{2,0}(g,\beta)( K_1, K_2 ) :=   g^2 \sum_{j=1}^N  \varepsilon(k_1,\lambda_1) \cdot \varepsilon({k}_2,{\lambda}_2)  \frac{\kappa_{\Lambda}(k_1)e^{ - i \beta k_1 \cdot x_j }}{\sqrt{2}}   \frac{\kappa_{\Lambda}(k_2)e^{ - i \beta k_2 \cdot x_j }}{\sqrt{2}} ,
\nonumber
\end{align}
$w^{(I)}_{0,1}(g,\beta)(\widetilde{K}) := w^{(I)}_{0,1}(\overline{g},\beta)(\widetilde{K})^*$,  and
$w^{(I)}_{0,2}(g,\beta)(\widetilde{K}_1,\widetilde{K}_2) := \overline{w^{(I)}_{2,0}(\overline{g},\beta)(\widetilde{K}_1,\widetilde{K}_2  )}$.
We note that  \eqref{eq:defhlinemn} is understood in the sense of forms, c.f.  Appendix A.
We set
\begin{align*}
w^{(I)}_{0,0}(z)(r)  := H_{\rm at} - z + r   .
\end{align*}
By $w^{(I)}$ we denote the vector consisting of the components $w^{(I)}_{m,n}$ with  $m+n=0,1,2$.

The next theorem establishes the Feshbach property.
To state it, we denote by   ${P}_0$ the orthogonal projection onto the closure of ${\ran \chib^{(I)}}$.
We will use the  convention that
$( H_0 - z)^{-1} \chib^{(I)}$
stands for $( H_0 - z \upharpoonright \ran \chib^{(I)} ) )^{-1} \chib^{(I)}$, and that
$( H_0 - z)^{-1} P_0$ stands for $( H_0 - z \upharpoonright \ran {P}_0  )^{-1} P_0 $.
The proof of the Feshbach property  is based on the fact that
\begin{equation} \label{ini:eq1}
{\rm inf} \sigma ( H_0 \upharpoonright \ran {P_0} ) \geq E_{\rm at} + \sfrac{3}{4} ,
\end{equation}
which follows directly from the definition, and the
fact that the interaction part of the Hamiltonian is bounded with respect to the free Hamiltonian.
A proof can be found in \cite{HH10-2}
\begin{theorem} \label{ini:thm1} Let $| E_{\rm at} - \zeta  | < \frac{1}{2}$. Then
\begin{equation}  \label{ini:thm1:eq1}
\left\| ( ( H_0 - \zeta ) \upharpoonright  \ran P_0  )^{-1} \right\| \leq 4
\end{equation}
There is a $C<\infty$ and $g_0 > 0$ such that for all $\beta$ and $|g| < g_0$,
\begin{equation} \label{ini:thm1:eq2}
\left\| ( H_0 - \zeta)^{-1} \chib^{(I)} W_{g,\beta} \right\| \leq C |g| , \quad \left\| W_{g, \beta} ( H_0 - \zeta)^{-1} \chib^{(I)}  \right\| \leq C |g| ,
\end{equation}
and $(H_{g,\beta}-\zeta,H_0-\zeta)$ is a Feshbach pair for $\chi^{(I)}$.
\end{theorem}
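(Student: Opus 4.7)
The proof breaks naturally into three stages, following the three claims of the theorem, each building on the previous.

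For the resolvent estimate \eqref{ini:thm1:eq1}, the plan is to exploit the decomposition $\ran P_0 = \ran(\bar P_{\rm at}\otimes 1)\oplus\ran(P_{\rm at}\otimes\chib_1(H_f))$ that is inherited from the definition of $\chib^{(I)}$. On the first summand, Hypothesis (H)(iii) together with the normalization \eqref{eq:hatscale} gives $H_{\rm at}\geq E_{\rm at}+1$, hence $H_0\geq E_{\rm at}+1$. On the second summand, since $\chib_1$ vanishes on $[0,3/4)$ by the properties of $\chi_1$, the spectral theorem gives $H_f\geq 3/4$ on $\ran\chib_1(H_f)$, so $H_0\geq E_{\rm at}+3/4$. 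Thus $H_0\upharpoonright\ran P_0$ is a self-adjoint operator with spectrum in $[E_{\rm at}+3/4,\infty)$, and from $|\zeta-E_{\rm at}|<1/2$ one reads off $\dist(\zeta,\sigma(H_0\upharpoonright\ran P_0))\geq 1/4$, yielding the bound $4$.

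For \eqref{ini:thm1:eq2}, I would first establish that $W_{g,\beta}$ is relatively bounded with respect to $-\Delta+H_f+1$ with a relative bound of order $|g|$. This uses the standard form-factor estimates $\|a(f)\psi\|\leq\|f/\sqrt{\omega}\|\,\|H_f^{1/2}\psi\|$ and $\|a^*(f)\psi\|\leq\|f/\sqrt{\omega}\|\,\|H_f^{1/2}\psi\|+\|f\|\,\|\psi\|$, applied term by term to the kernels $w^{(I)}_{m,n}(g,\beta)$ in \eqref{defofwI}; the integrals $\int |\kappa_\Lambda(k)|^2/|k|^3\,dk$ and $\int |\kappa_\Lambda(k)|^2/|k|\,dk$ are finite thanks to the ultraviolet cutoff, and the factors $e^{\pm i\beta k\cdot x_j}$ are unimodular, so no $\beta$-dependence enters the $L^2$ norms. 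Hypothesis (H)(ii) then lets me trade $-\Delta$ for $H_{\rm at}+c$, hence for $H_0+c$. Finally, $(H_0-\zeta)^{-1}\chib^{(I)}(H_0+c)$ is uniformly bounded by functional calculus combined with \eqref{ini:thm1:eq1}, and composing with the relative bound for $W_{g,\beta}$ yields the first estimate in \eqref{ini:thm1:eq2}; the second is obtained by taking adjoints (using $|g|=|\overline{g}|$).

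For the Feshbach pair property, I would verify the axioms of the smooth Feshbach map as listed in Appendix B. The decisive analytic input is the invertibility of $\chib^{(I)}(H_{g,\beta}-\zeta)\chib^{(I)}$ on $\ran\chib^{(I)}$: writing $H_{g,\beta}-\zeta=(H_0-\zeta)+W_{g,\beta}$ and factoring out the inverse of $(H_0-\zeta)$ on $\ran P_0$, part (2) shows that the Neumann series converges once $C|g|<1/2$, which fixes $g_0$. The remaining compatibility conditions---that $\chi^{(I)}$ and $\chib^{(I)}$ commute strongly with $H_0$, that $W_{g,\beta}$ is $H_0$-bounded with relative bound less than $1$, and that the natural domain identifications hold---follow immediately from the functional-calculus definitions of the cutoffs together with the first two steps.

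The main obstacle I anticipate is the uniformity in $\beta$ in step (2), in particular for the Wick-ordered quadratic piece coming from $:A_\Lambda(\beta x_j)^2:$, whose $w^{(I)}_{2,0}$ and $w^{(I)}_{1,1}$ pieces must be estimated separately while keeping the dependence on $\beta$ invisible. The resolution is the same in each case: $\beta$ enters only through phases $e^{\pm i\beta k\cdot x_j}$ which are unitaries on $\HH_{\rm at}$ and commute with $H_f$, so the relevant integral-kernel norms in \eqref{defofwI} are independent of $\beta$, and the resulting estimates on $W_{g,\beta}$ are genuinely uniform in $(g,\beta)\in D_{g_0}\times\R$.
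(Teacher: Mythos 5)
Your proof follows the same route as the paper's (the paper in fact defers the proof to \cite{HH10-2}, but explicitly identifies the two ingredients you use: the lower bound $\inf\sigma(H_0\upharpoonright\ran P_0)\geq E_{\rm at}+3/4$, which yields \eqref{ini:thm1:eq1}, and the relative boundedness of the interaction). Stage (1) is correct and complete, and stage (3) correctly reduces the Feshbach axioms of Lemma \ref{fesh:thm2} to stages (1) and (2) via a Neumann series.

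There is one place where your reasoning as stated is incomplete. In stage (2) and in your closing paragraph you argue that $\beta$ drops out of all relevant norms simply because $e^{\pm i\beta k\cdot x_j}$ is unitary on $\HH_{\rm at}$ and commutes with $H_f$. This disposes of the quadratic pieces $w^{(I)}_{2,0},w^{(I)}_{1,1},w^{(I)}_{0,2}$, which contain no differential operator. But it is not by itself sufficient for $w^{(I)}_{1,0}$ and $w^{(I)}_{0,1}$, which contain $p_j\cdot\varepsilon(k,\lambda)\,e^{\pm i\beta k\cdot x_j}$: if you commute the phase past $p_j$ you pick up an extra term $\mp\beta\,k\cdot\varepsilon(k,\lambda)\,e^{\pm i\beta k\cdot x_j}$, which would not be bounded uniformly in $\beta$. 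What saves you is transversality, $k\cdot\varepsilon(k,\lambda)=0$, which makes $p_j\cdot\varepsilon(k,\lambda)$ and $e^{\pm i\beta k\cdot x_j}$ commute; the paper itself flags exactly this ``slight subtlety'' in the proof of Theorem \ref{ini:thm1b}. Your argument needs to invoke this explicitly; once that is added, stage (2) closes and the proof is sound.
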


\begin{theorem} \label{ini:thm1b} For $g_0$ sufficiently small
\begin{equation}
(g,z) \mapsto  Q_{\chi^{(I)}}(g,\cdot,z)
\end{equation}
is in  $C^{\omega}_B(D_{g_0} \times D_{1/2} ; C_B^k(\R ; \mathcal{B}(\HH_{\rm red} , \HH )))$.
\end{theorem}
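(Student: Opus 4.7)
The plan is to use a Neumann-series representation of the resolvent appearing in the smooth Feshbach auxiliary operator and to control the $\beta$-derivatives by pushing the unbounded atomic multiplication factors produced by differentiation all the way to the right, where they are absorbed by the exponentially decaying atomic eigenfunction $\varphi_{\rm at}$.

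By Theorem \ref{ini:thm1}, for $g_0$ small enough and $(g,\beta,z)\in D_{g_0}\times\R\times D_{1/2}$ we have $\|(H_0-\zeta)^{-1}\chib^{(I)}W_{g,\beta}\|\le C|g|$ with $\zeta=z+E_{\rm at}$, so the standard Neumann expansion gives
\[
Q_{\chi^{(I)}}(g,\beta,z)=\chi^{(I)}-\chib^{(I)}\sum_{n=0}^\infty(-1)^n\bigl[(H_0-\zeta)^{-1}\chib^{(I)}W_{g,\beta}\chib^{(I)}\bigr]^n(H_0-\zeta)^{-1}\chib^{(I)}W_{g,\beta}\chi^{(I)},
\]
with norm convergence uniform in $(g,\beta,z)$. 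Since $W_{g,\beta}$ is polynomial in $g$ and, by \eqref{ini:thm1:eq1}, $(H_0-\zeta)^{-1}\upharpoonright\ran P_0$ is analytic in $z\in D_{1/2}$, each summand is jointly analytic in $(g,z)\in D_{g_0}\times D_{1/2}$, hence so is the series.

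The $\beta$-dependence sits only in the phases $e^{\pm i\beta k\cdot x_j}$ in the kernels $w^{(I)}_{m,n}$ of \eqref{defofwI}. Applying Leibniz $l$ times ($l\le k$) to each Neumann term produces $\binom{l+n}{n}$ summands of the form $\prod_{i=0}^n(H_0-\zeta)^{-1}\chib^{(I)}(\partial_\beta^{l_i}W_{g,\beta})$ followed by $\chi^{(I)}$; the factor $\partial_\beta^{l_i}W_{g,\beta}$ is an integral-kernel operator whose kernel carries the unbounded atomic multiplication $(ik\cdot x_j)^{l_i}$. The main technical task is to absorb these $x_j$-factors. The key identities are
\[
[x_j,H_f]=0,\quad[x_j,V]=0,\quad[x_j,-\Delta]=2ip_j,\quad[x_j,\chib^{(I)}]=[x_j,P_{\rm at}]\otimes(\chib_1-1),
\]
the last being bounded since $P_{\rm at}=|\varphi_{\rm at}\rangle\langle\varphi_{\rm at}|$ and $\|x_j\varphi_{\rm at}\|<\infty$. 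Consequently $[x_j,(H_0-\zeta)^{-1}]=-2i(H_0-\zeta)^{-1}p_j(H_0-\zeta)^{-1}$ is bounded on $\ran P_0$ by \eqref{ini:thm1:eq1}. Iterating these relations moves each $x_j^{l_i}$ rightwards through all intervening resolvents, cutoffs, and $W_{g,\beta}$'s, producing only finitely many bounded lower-order remainders; the leading term has $x_j^{l_i}$ acting directly on $\chi^{(I)}\eta=\varphi_{\rm at}\otimes\chi_1\eta$, and Hypothesis~(H)(iii) forces exponential decay of $\varphi_{\rm at}$ (Agmon/Combes--Thomas), so $\||x|^{l_i}\varphi_{\rm at}\|<\infty$ for every $l_i$. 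Combining these bounds gives, uniformly in $(g,\beta,z)\in D_{g_0}\times\R\times D_{1/2}$,
\[
\Bigl\|(H_0-\zeta)^{-1}\chib^{(I)}(\partial_\beta^{l_0}W_{g,\beta})\cdots(\partial_\beta^{l_n}W_{g,\beta})\chi^{(I)}\Bigr\|\le C_l\,(C'|g|)^{n+1}.
\]

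Summing the Leibniz expansion (polynomial in $n$ for fixed $l\le k$) and then the Neumann series (geometric in $|g|$) converges once $C'g_0<1$, yielding a uniform bound on $\|\partial_\beta^l Q_{\chi^{(I)}}\|$ for $l\le k$. Norm-continuity in $\beta$ of each derivative follows by dominated convergence, using the UV cutoff $\kappa_\Lambda$ to control the $k$-integrals and strong $\beta$-continuity of $e^{i\beta k\cdot x_j}$ applied to the vectors $|x|^{l_i}\varphi_{\rm at}$. Thus $Q_{\chi^{(I)}}(g,\cdot,z)\in C^k_B(\R;\mathcal{B}(\HH_{\rm red},\HH))$, and because the Neumann--Leibniz series converges in the $C^k_B$-norm uniformly on $D_{g_0}\times D_{1/2}$, the joint $(g,z)$-analyticity from the first step upgrades to $C^\omega_B$-regularity with values in $C^k_B$. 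The main obstacle is the commutator bookkeeping in the third paragraph: one must organize the rightward push of each $x_j^{l_i}$ so that every residual term is bounded uniformly in $\beta$ and the remainders sum appropriately across the Leibniz multi-indices.
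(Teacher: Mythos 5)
Your Neumann-series and Leibniz setup coincides with the paper's: both expand $Q_{\chi^{(I)}}$ in powers of $(H_0-\zeta)^{-1}\chib^{(I)}W$ and apply the product rule $l$ times to each term, arriving at chains $\prod_i \chib^{(I)}(H_0-\zeta)^{-1}\chib^{(I)}\partial_\beta^{l_i}W\cdot\chi^{(I)}$ with $\sum l_i=l\le k$. But your mechanism for absorbing the polynomial $x$-growth is a genuinely different strategy, and in the form you present it, it contains an unresolved gap. Pushing $x_j^{l_i}$ rightward by commutators produces, at each of the up to $n$ intervening factors, one ``leading'' term where the monomial has passed through unchanged plus commutator terms that still carry $x_j^{l_i-1}$ and must themselves be pushed; iterating this spreads partial monomials across the chain and introduces extra $p_j$-factors and resolvents from $[x_j,(H_0-\zeta)^{-1}]$. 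The number of resulting summands and their structural variety grow with $n$, and there is a real subtlety you do not address: $(H_0-\zeta)^{-1}$ is only the restricted inverse on $\overline{\ran\,\chib^{(I)}}$, so the resolvent identity $[x_j,R]=-R[x_j,H_0]R$ is not immediately available and must be preceded by something like the paper's rewriting \eqref{eq:jun14-3} in terms of the global resolvent $(H_0+2-\zeta)^{-1}$. You assert the key uniform estimate $C_l(C'|g|)^{n+1}$ and then flag the ``commutator bookkeeping'' as the main obstacle---but that bookkeeping \emph{is} the proof, and it is not carried out.

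The paper avoids commutators altogether by conjugating with exponential weights $e^{\gamma_j\langle x\rangle}$, inserting $e^{-\gamma_{j+1}\langle x\rangle}e^{\gamma_{j+1}\langle x\rangle}$ between consecutive factors. The decreasing sequence $\gamma_1\ge\cdots\ge\gamma_{n+1}\ge0$ is chosen to drop by $\gamma_1/k$ precisely at those indices $j$ where $l_j>0$; since $\sum_j(1-\delta_{l_j,0})\le l\le k$, the total weight loss never exceeds $\gamma_1$, so $\gamma_{n+1}\ge0$ independently of $n$. Each weighted block $G_j=(e^{\gamma_{j+1}\langle x\rangle}\chib^{(I)}(H_0-\zeta)^{-1}\chib^{(I)}e^{-\gamma_{j+1}\langle x\rangle})(e^{\gamma_{j+1}\langle x\rangle}\partial_\beta^{l_j}We^{-\gamma_j\langle x\rangle})$ is then bounded by $C|g|$: the strictly positive gap $\gamma_j-\gamma_{j+1}$ (when $l_j>0$) dominates the polynomial growth $|x|^{l_j}$ produced by differentiation, and the weighted resolvent is controlled by a contour integral over $(w-H_{\rm at}(\gamma))^{-1}$ using a numerical-range estimate for large $w$ and a perturbation argument for small $w$. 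This yields the $n$-uniform bound $(n+1)^l(C|g|)^{n+1}$ by a simple product of operator norms, with no commutator cascade and no issues with restricted inverses. Your underlying intuition---that the exponential decay of $\varphi_{\rm at}$ absorbs the polynomial growth from the derivatives---is exactly right; the weight-insertion device is what converts that intuition into a clean, $n$-independent estimate. I recommend redoing your proof along those lines.
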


We write
$$\langle x \rangle := \left( 1 + \sum_{j=1}^N | x_j|^2 \right)^{1/2} . $$
We will use the Leibniz rule for higher derivatives
\begin{equation}\label{eq:higherleibniz}
\partial_\beta^l  ( f_1 \cdots f_L ) =
\sum_{\unn \in \N_0^L : |\unn|  = l} \frac{l!}{\unn!}
f_1^{(n_1)} \cdots f_L^{(n_L)} .
\end{equation}
\begin{proof} For notational simplicity we set  $W = W_{g,\beta}$ and $Q_{\chi^{(I)}} =  Q_{\chi^{(I)}}(g,\beta,z)$. We have
\begin{align} \label{eq:jun14-1}
 Q_{\chi^{(I)}} &=  \chi^{(I)}   - \sum_{n=0}^\infty (-1)^n \chib^{(I)} \left( (H_0 - \zeta )^{-1} \chib^{(I)} W \chib^{(I)} \right)^n (H_0  - \zeta )^{-1} \chib^{(I)} W \chi^{(I)} .\nonumber \\
& = \chi^{(I)}   - \sum_{n=0}^\infty (-1)^n \left(\chib^{(I)}(H_0 - \zeta )^{-1} \chib^{(I)} W \right)^{n+1}\chi^{(I)}.
\end{align}
Formally differentiating $l$ times with respect to $\beta$, the $n^{\rm{th}}$ term under the summation sign generates  $(n+1)^{l}$ terms, each of the form
\begin{equation} \label{eq:jun14-2}
\left(\chib^{(I)} (H_0 - \zeta)^{-1} \chib^{(I)} \partial_\beta^{l_{n+1}} W \right) \cdots
\left(\chib^{(I)} (H_0 - \zeta)^{-1} \chib^{(I)} \partial_\beta^{l_{1}} W  \right)\chi^{(I)}  ,
\end{equation}
where $l_1+ \cdots + l_{n+1} = l$. We write
\begin{equation}  \label{eq:jun14-3}
\chib^{(I)} (H_0 - \zeta )^{-1} \chib^{(I)} = \left( \chib^{(I)} \right)^2 (H_0 + 2 - \zeta )^{-1} + 2 (H_0 - \zeta)^{-1} \left( \chib^{(I)}\right)^2 (H_0 + 2 - \zeta )^{-1} .
\end{equation}
It is well known that $\| e^{ \gamma_1 \langle x \rangle } P_{\rm at} \| < \infty$ for some $\gamma_1 > 0$ \cite{reesim4}. Define
\begin{equation} \nonumber
\gamma_{j+1} = \left(1- k^{-1}\sum_{t=1}^j ( 1 - \delta_{l_t,0} )\right) \gamma_1 ; \\ j=1, \cdots , n .
\end{equation}
Since $\sum_{j=1}^n ( 1 - \delta_{l_j,0} ) \leq k$ , $\gamma_{n+1} \geq 0$. With
\begin{equation} \label{basicderexpressionj5}
G_j = \left( e^{\gamma_{j+1} \langle x \rangle }  \chib^{(I)} (H_0 - \zeta )^{-1} \chib^{(I)} e^{- \gamma_{j+1} \langle x \rangle } \right)
\left( e^{\gamma_{j+1} \langle x \rangle } \partial_\beta^{l_j} W e^{- \gamma_j \langle x \rangle }  \right)
\end{equation}
the expression in \eqref{eq:jun14-2} can be written as
\begin{equation}  \label{eq:jun14-4}
 e^{- \gamma_{n+1} \langle x \rangle } \left( G_{n+1} \cdots G_1 \right) e^{\gamma_1 \langle x \rangle } \chi^{(I)} .
\end{equation}
We claim that for small enough $\gamma_1  > 0$ (chosen independent of $n$), for $| g | \leq 1$, and for $\zeta \in D_{1/2} + E_{\rm at}$
\begin{equation}  \label{eq:jun14-5}
\| G_j \| \leq C | g | ,
\end{equation}
where $C$ is independent of $j$, $\zeta$, $\beta$, and $n$. It is clear that
$$
\left\| (H_0 + i )^{-1} e^{\gamma_{j+1} \langle x \rangle } \partial_\beta^{l_j} W e^{- \gamma_j \langle x \rangle } \right\| \leq C_1 |g| ,
$$
since if $l_j >0$, $\gamma_j - \gamma_{j+1} = \gamma_1 /k $. (There is a slight subtlety here with the term $W_{0,1}^{(I)}$ which contains $(e^{i\beta k \cdot x_j})(p_j \cdot \epsilon(k,\lambda))$. But note that the two terms in parentheses commute so that the bound is indeed independent of $\beta$.) It remains to show
$$
e^{\gamma \langle x \rangle } \chib^{(I)} (H_0 - \zeta )^{-1} \chib^{(I)} e^{- \gamma \langle x \rangle } (H_0 + i )
$$
is bounded with bound independent of $\gamma$ for small $\gamma$  and $\zeta \in D_{1/2} + E_{\rm at}$ .
We have
\begin{align*}
&H_0(\gamma) := e^{\gamma \langle x \rangle } H_0 e^{- \gamma  \langle x \rangle } = H_{\rm at}(\gamma) + H_f \\
 & H_{\rm at}(\gamma) := H_{\rm at} + i \gamma   \left( \frac{x}{\langle x \rangle} \cdot p + p \cdot \frac{x}{\langle x \rangle } \right)
 - \gamma^2 \frac{|x|^2}{\langle x \rangle^2} .
\end{align*}
and thus for all small $\gamma$
$$
\| ( H_0(\gamma) + 2 - \zeta )^{-1} (H_0 + i ) \| \leq C_2 .
$$
For $\zeta \in D_{1/2} + E_{\rm at}$. Clearly
$\| e^{\gamma \langle x \rangle } ({ \chib^{(I)} })^2 e^{- \gamma \langle x \rangle}  \| \leq c_3$ for
$\gamma$ small so from \eqref{eq:jun14-3} it remains to bound
$$
e^{\gamma \langle x \rangle } (H_0 - \zeta )^{-1} ({\chib^{(I)}})^2 e^{- \gamma \langle x \rangle } .
$$
Since $({\chib^{(I)}})^2 = P_{\rm at} \otimes \chib_1(H_f)^2 +  \bar{P}_{\rm at}  \otimes 1 $ and
$$
(H_0 - \zeta )^{-1} P_{\rm at} \otimes  1 = ( 1 \otimes H_f + E_{\rm at} - \zeta )^{-1} P_{\rm at} \otimes 1
$$
we must only control
$$
e^{\gamma \langle x \rangle } (H_0 - \zeta )^{-1} ( \bar{P}_{\rm at}  \otimes 1  ) e^{- \gamma \langle  x \rangle }  .
$$
We write
\begin{equation}
(H_{\rm at} + t - \zeta )^{-1} \bar{P}_{\rm at}  = \frac{1}{2\pi i} \int_{\Gamma} ( w + t - \zeta )^{-1} ( w - H_{\rm at} )^{-1} dw ,
\end{equation}
where $\Gamma$ is the contour $\Gamma_- - \Gamma_+$ with
$$
\Gamma_{\pm}(s) = E_{\rm at} + 3/4 + e^{\pm i \pi /4 } s , \quad 0 \leq s < \infty .
$$
Thus (using an analytic continuation argument)
\begin{eqnarray} %\label{eq:i4A}
\lefteqn{ e^{\gamma \langle x \rangle } (H_0 - \zeta )^{-1} (  \bar{P}_{\rm at} \otimes 1 ) e^{- \gamma \langle x \rangle } } \nonumber  \\
&& = \frac{1}{2 \pi i } \int_\Gamma ( w - H_{\rm at}(\gamma) )^{-1} \otimes ( w + H_f - \zeta )^{-1} dw .  \label{eq:june14-6}
\end{eqnarray}
The expression  \eqref{eq:june14-6} is bounded using a numerical range argument for large $w$ and a perturbation argument for
small $w$. These estimates require $\gamma$ to be small. We have thus shown \eqref{eq:jun14-5}.
Moreover, it follows from the estimates above and Taylor's theorem with remainder that the derivative with respect to $\beta$ in \eqref{basicderexpressionj5} and thus \eqref{eq:jun14-2}
exists with respect to the operator norm topology.
It follows that for $l \leq k$,
\begin{equation} \label{eq:boundonQinitialder}
\left\| \partial_\beta^l \left( Q_{\chi^{(I)}} - \chi^{(I)} \right) \right\| \leq \sum_{n=0}^\infty (n+1)^{l} ( C |g| )^{n+1}
\left\| e^{\gamma_1 \langle x \rangle } P_{\rm at} \right\|
\end{equation}
for all $\beta \in \R$ and $\zeta \in D_{1/2} + E_{\rm at}$.
If   $g_0 >0 $ is sufficiently small, then \eqref{eq:boundonQinitialder} converges
for $|g| < g_0$.
The  expression in  \eqref{eq:jun14-2}     is complex differentiable in
$\zeta  \in D_{1/2} + E_{\rm at}$ and in $g$
with respect to the operator norm topology.
The bounds
 \eqref{eq:jun14-5} and   \eqref{eq:boundonQinitialder}  imply uniform convergence and that
$(g,z) \mapsto  Q_{\chi^{(I)}}(g,\cdot,z)$
is in  $C^{\omega}_B(D_{g_0} \times D_{1/2} ; C_B^k(\R ; \mathcal{B}(\HH_{\rm red} , \HH )))$.
\end{proof}

Next we want to show that there exists a $w^{(0)}(g,\beta,z) \in \WW_\xi^\#$ such that  \eqref{eq:inimainA} holds.
Uniqueness will follow from Theorem \ref{thm:injective}.
In view of Theorem  \ref{ini:thm1} for $z = \zeta -  E_{\rm at}  \in D_{1/2}$  and $g$ sufficiently small
we can define the Feshbach map and express it in terms of a Neumann series.
\begin{eqnarray*}
\lefteqn{ F_{\chi^{(I)}}( H_{g,\beta} - \zeta, H_0 - \zeta)  \upharpoonright X_{\rm at} \otimes  \HH_{\rm red}  } \\
&& = \left( T  + \chi^{}  W \chi^{} -    \chi^{}  W  \chib^{}        (  T +  \chib^{}  W_{}\chib^{}   )^{-1}
\chib^{}   W_{}  \chi^{} \right) \upharpoonright X_{\rm at} \otimes  \HH_{\rm red} \\
&& = \left( T^{} + \chi W^{} \chi - \chi W^{} \chib \sum_{n=0}^\infty \left( - {T^{}}^{-1} \chib W^{} \chib \right)^n
{T^{}}^{-1} \chib W^{} \chi \right) \upharpoonright  X_{\rm at} \otimes  \HH_{\rm red} \; ,
\end{eqnarray*}
where here we used   the abbreviations
$T^{} = H_0 - \zeta$, $W^{} = W^{}_{g,\beta}$, $\chi = \chi^{(I)}$,  $\chib = \chib^{(I)}$ and $X_{\rm at} = \ran P_{\rm at}$.
We normal order above expression, using the pull-through formula. To this end we use a generalized version
of the Wick theorem, see \cite{BFS99}, see also \cite{HH10-2}  Appendix B.
Moreover we will use the definition
\begin{eqnarray} \label{eq:defofWW}
\underline{W}_{p,q}^{m,n}[w](K^{(m,n)}) &:=& \int_{{(\underline{\R}^3)}^{p+q}} \frac{d X^{(p,q)}}{|X^{(p,q)}|^{1/2}} a^*(X^{(p)}) w_{m+p,n+q}(K^{(m)}, X^{(p)}, \widetilde{K}^{(n)},
 \widetilde{X}^{(q)}) a(\widetilde{X}^{(q)})  . \nonumber
\end{eqnarray}
We obtain a sequence of integral kernels
$\widetilde{w}^{(0)}$, which  are given as follows.
For $M+N \geq 1$,
\begin{eqnarray} \label{eq:defofwmnschlange}
\lefteqn{ \widetilde{w}^{(0)}_{M,N}(g,\beta,z)(r , K^{(M,N)})}   \label{initial:eq7}  \\ &&= ( 8 \pi )^{\frac{M+N}{2}} \sum_{L=1}^\infty (-1)^{L+1}
\sum_{\substack{ (\umm,\upp,\unn,\uqq)  \in \N_0^{4L}: \\ |\umm|=M,  |\unn|=N, \\  1 \leq m_l+p_l+q_l+n_l \leq  2 } }   \prod_{l=1}^L \left\{
\binom{ m_l + p_l}{ p_l}  \binom{ n_l + q_l}{   q_l }
\right\}  \nonumber \\ && \times
V_{(\umm,\upp,\unn,\uqq)}[w^{I}(g,\beta,\zeta)](r,K^{(M,N)}).  \nonumber
\end{eqnarray}
 Furthermore,
\begin{align*}
\widetilde{w}^{(0)}_{0,0}(g,\beta,z)(r) = - z + r  + \sum_{L=2}^\infty (-1)^{L+1}
\sum_{(\upp,\uqq)\in \N_0^{2L}: p_l+q_l = 1, 2}
V_{(\uzz,\upp,\uzz,\uqq)}[w^{(I)}(g,\beta,\zeta)](r) \; .
\end{align*}
Above we have used  the definition
\begin{eqnarray} \label{eq:defofV}
\lefteqn{ V_{\umm,\upp,\unn,\uqq}[w](r, K^{(|\umm|,|\unn|)}) := } \\
&&
\left\langle \varphi_{\rm at} \otimes \Omega  ,  F_0[w](H_f + r) \prod_{l=1}^L \left\{
\underline{W}_{p_l,q_l}^{m_l,n_l}[w](K^{(m_l,n_l)})
 F_l[w](H_f + r  + \widetilde{r}_l   ) \right\} \varphi_{\rm at} \otimes \Omega
\right\rangle \nonumber ,
\end{eqnarray}
where for $l=0,L$ we set  $F_l[w](r) := \chi_1(r )$ , and for $l=1,...,L-1$ we set
\begin{eqnarray*}
F_l[w](r) := F[w](r) := \frac{ { \chib}^{(I)}(r  )^2}{ w_{0,0}(r )}  .
\end{eqnarray*}
Moreover, we used the notation
\begin{align}
%& K^{(M,N)} := (K_1^{(m_1,n_1)}, ... , K_L^{(m_L,n_L)}) , \quad K_l^{(m_l,n_l)} := (k_l^{(m_l)},\widetilde{k}_l^{(n_l)}) , \nonumber
%\\
& r_l := \Sigma[\widetilde{K}_1^{(n_1)}] + \cdots + \Sigma[\widetilde{K}_{l-1}^{(n_{l-1})}] + \Sigma[{K}_{l+1}^{(m_{l+1})}] + \cdots + \Sigma[{K}_L^{(m_L)}] , \label{eq:rldef}
\\
& \widetilde{r}_l := \Sigma[\widetilde{K}_1^{(n_1)}] + \cdots + \Sigma[\widetilde{K}_{l}^{(n_{l})}] + \Sigma[{K}_{l+1}^{(m_{l+1})}] + \cdots + \Sigma[{K}_L^{(m_L)}] . \label{eq:rltildedef}
\end{align}
We have
$w^{(0)}(g,\beta,z) = \left( \widetilde{w}^{(0)} \right)^{({\rm sym})}(g,\beta,z) $.
So far we have determined $w^{(0)}$ on a formal
level only.

\begin{lemma}   \label{initial:thmE22} Let $k \in \N_0$.
The function $(g, \zeta ,\beta ) \mapsto V_{\umm,\upp,\unn,\uqq}[w^{(I)}(g,\beta,\zeta)]$ is in
$C^{\omega,k}_B(     \C \times D_{1/2}(E_{\rm at}) \times \R ; \WW_{|\umm|,|\unn|}^{\#})$.
There exists a finite constant $C$ such that for all
$(g,\beta, \zeta) \in \mathbb{C} \times \R \times  D_{1/2}(E_{\rm at})$ we have
\begin{equation} % \label{initial:thmE22:eq1A}
\max_{0 \leq l \leq k} \| \partial_\beta^l  V_{\umm,\upp,\unn,\uqq}[w^{(I)}(g,\beta,\zeta)] \|^{\#}  \leq L^{k+1} C^L  | g |^{|\umm|+|\unn| + |\upp| + | \uqq|} .
\end{equation}
\end{lemma}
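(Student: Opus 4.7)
The integrand in the definition \eqref{eq:defofV} is an expectation value between $\varphi_{\rm at}\otimes\Omega$ of a product of $L$ blocks that alternate field operators $\underline{W}^{m_l,n_l}_{p_l,q_l}[w^{(I)}(g,\beta,\zeta)]$ with resolvent/cutoff factors $F_l$. Only the $\underline{W}$ blocks carry $\beta$-dependence, which enters through the phases $e^{\pm i\beta k\cdot x_j}$ appearing in $w^{(I)}_{m,n}$, see \eqref{defofwI}. I would first apply the Leibniz formula \eqref{eq:higherleibniz} to distribute $\partial_\beta^l$ (for $l\leq k$) across these $L$ blocks, obtaining a sum of terms indexed by multi-indices $(n_1,\dots,n_L)\in\N_0^L$ with $\sum_t n_t=l$. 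In each term, the $t$-th block carries the differentiated kernel $\partial_\beta^{n_t} w^{(I)}_{m_t+p_t,n_t+q_t}$, which equals the undifferentiated kernel multiplied by a polynomial of degree $n_t$ in factors $(k\cdot x_j)$.

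Next, I would import the exponential-weight machinery from the proof of Theorem \ref{ini:thm1b}. Choose $\gamma_1>0$ small enough that $\|e^{\gamma_1\langle x\rangle}\varphi_{\rm at}\|<\infty$ (an Agmon-type bound, available under Hypothesis (H)). Interpolate with
$$
\gamma_{t+1} := \gamma_1\Bigl(1-k^{-1}\sum_{s\leq t}(1-\delta_{n_s,0})\Bigr),
$$
so that the weight drops by $\gamma_1/k$ exactly when a differentiated block is crossed and stays fixed otherwise. Since at most $k$ blocks carry a nontrivial derivative, $\gamma_{L+1}\geq 0$. Telescoping, insert $e^{-\gamma_t\langle x\rangle}e^{\gamma_t\langle x\rangle}$ between every pair of adjacent factors and rewrite the product as $e^{-\gamma_1\langle x\rangle}\bigl(\prod_{t}\widetilde G_t\bigr)e^{\gamma_1\langle x\rangle}$, where each $\widetilde G_t$ is a conjugated block consisting of one $F_l$ and one $\partial_\beta^{n_t}\underline{W}^{m_t,n_t}_{p_t,q_t}$, analogous to \eqref{basicderexpressionj5}.

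The heart of the argument is to bound $\|\widetilde G_t\|\leq C|g|^{m_t+p_t+n_t+q_t}$ uniformly in $\beta\in\R$ and $\zeta\in D_{1/2}(E_{\rm at})$. For the $\underline{W}$ piece, conjugation by $e^{\gamma\langle x\rangle}$ leaves the momenta $p_j$ and the bounded factor $\kappa_\Lambda(k)$ alone, while the polynomial $(k\cdot x_j)^{n_t}$ growth in $\langle x\rangle$ is absorbed by the gap $\gamma_t-\gamma_{t+1}=\gamma_1/k$, using $\sup_x|x|^{n_t}e^{-(\gamma_1/k)\langle x\rangle}<\infty$ and the compact support of $\kappa_\Lambda$. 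For the $F_l$ piece, $F_l=\chib^{(I)}(r)^2/w_{0,0}(r)$ with $w_{0,0}=H_{\rm at}-\zeta+r$; on $\ran P_{\rm at}$ this is a trivial resolvent in $H_f$, while on $\ran\overline{P}_{\rm at}$ the conjugation yields $(H_{\rm at}(\gamma)-\zeta+r)^{-1}\overline{P}_{\rm at}$, which is bounded uniformly for $\zeta\in D_{1/2}(E_{\rm at})$ and small $\gamma$ via the same contour-integral argument over $\Gamma_\pm$ used in \eqref{eq:june14-6}. Since the deformation $H_{\rm at}\mapsto H_{\rm at}(\gamma)$ is only a small perturbation for small $\gamma$, the step size $\gamma_1/k$ must be chosen with $\gamma_1$ independent of $L$ but with $C$ depending on $k$; this is the main obstacle and the $k$-dependence of the constant originates here.

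Multiplying the bounds $\|\widetilde G_t\|$ across the $L$ blocks yields $C^L|g|^{|\umm|+|\upp|+|\unn|+|\uqq|}$; the Leibniz multinomial $l!/\unn!\leq k!$ is absorbed into $C$; the number of multi-indices with $|\unn|=l$ is bounded by $\binom{L+k-1}{k}\leq L^k$, and summation over $l\leq k$ costs an additional factor of $k+1\leq k+L\leq 2L$, producing the claimed $L^{k+1}C^L$. Analyticity in $(g,\zeta)\in\C\times D_{1/2}(E_{\rm at})$ follows because each term in the differentiated Leibniz expansion is jointly holomorphic (polynomial in $g$, resolvent-analytic in $\zeta$) and the series defining $V$ is uniformly summable; finally, Taylor's theorem with remainder, applied exactly as in the proof of Theorem \ref{ini:thm1b}, upgrades the formal differentiation identity to genuine $C^k$ regularity in $\beta$ with values in $\WW_{|\umm|,|\unn|}^\#$.
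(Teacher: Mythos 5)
Your overall strategy is the same as the paper's: distribute $\partial_\beta^l$ by Leibniz across the $L$ blocks, interpolate exponential weights $e^{\gamma_t\langle x\rangle}$ with the decrements $\gamma_t-\gamma_{t+1}=\gamma_1/k$ tied to the nontrivial sub-derivatives (so that $\gamma_{L+1}\ge 0$), absorb the polynomial growth $(k\cdot x_j)^{n_t}$ into the weight gap, conjugate the resolvent pieces $F_l$ and control $(H_{\rm at}(\gamma)-\zeta+\cdot)^{-1}\overline P_{\rm at}$ via the contour/numerical-range argument, and then upgrade the formal computation to genuine $C^k$ regularity by the difference-quotient/Taylor argument. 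That is precisely what the paper does, with the same auxiliary bound $\|A_l^{j_l}\|\le C|g|^{m_l+p_l+n_l+q_l}$ playing the role of your $\|\widetilde G_t\|\le C|g|^{\cdots}$.

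The one genuine defect is your accounting for the factor $L^{k+1}$. The lemma's left-hand side is a \emph{max} over $l\le k$, not a sum, so ``summation over $l\le k$ costs an additional factor of $k+1\le 2L$'' is not a valid source for the extra power of $L$. Moreover you never address the $\partial_r$ half of the $\|\cdot\|^\#$ norm, and that is exactly where the extra $L$ actually comes from: $\|w\|^\#$ includes $\|\partial_r w\|_{\underline\infty}$, and applying $\partial_r$ to $V_{\umm,\upp,\unn,\uqq}$ by Leibniz distributes a derivative over the roughly $2L+1$ factors ($L$ blocks $\underline W$ and $L+1$ factors $F_l$, each carrying an $H_f+r+\widetilde r_l$ argument), contributing the additional factor $L$; combined with the Leibniz multinomial identity $\sum_{|\ujj|=s}s!/\ujj!=L^s\le L^k$ this gives the stated $L^{k+1}$. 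You should also note, for completeness, the boundary blocks $l=0,L$ where $F_l$ is the cutoff $\chi_1$ rather than a resolvent (the paper replaces $F_L$ by $(H_0-E_{\rm at}+1)^{-1}$ in its definition of $A_L^{j_L}$); these are easy but need to be treated to close the telescoping of the weights all the way to $e^{\gamma_1\langle x\rangle}\varphi_{\rm at}$.
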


\begin{proof} For compactness we shall drop the $\zeta$ and $\beta$ dependence in the notation.
We show
\begin{equation} \label{initial:thmE22:eq1A}
| \partial_\beta^s V_{\umm,\upp,\unn,\uqq}[w^{(I)}(g)](r,K^{(|\umm|,|\unn|)} ) | \leq L^s C^L  | g |^{|\umm|+|\unn| + |\upp| + | \uqq|} .
\end{equation}
\begin{equation} \label{initial:thmE22:eq2}
\left| \partial_r \partial_\beta^s V_{\umm,\upp,\unn,\uqq}[w^{(I)}(g)](r,K^{|\umm|,|\unn|})\right| \leq L^{s+1}C^L  | g|^{|\umm| + |\unn| + |\upp| + |\uqq|} .
\end{equation}
Consider
\begin{eqnarray}
\lefteqn{
\partial_\beta^s V_{\umm,\upp,\unn,\uqq}[w^{(I)}](r,K^{(|\umm|,|\unn|)}) } \label{eq:derVbetaj5-1} \\
&& = \sum_{\substack{\ujj \in \N_0^L \\ |\ujj|=s}} \frac{s!}{\ujj!} \Bigg\langle \varphi_{\rm at} \otimes \Omega, F_0[w^{(I)}](H_f + r)\nonumber \\&&
\times \prod_{l=1}^L \left\{ \partial_\beta^{j_l} \underline{W}^{m_l,n_l}_{p_l,q_l}[w^{(I)}](K^{(m_l,n_l)}) F_l[w^{(I)}](H_f + r + \widetilde{r}_l ) \right\} \varphi_{\rm at} \otimes\Omega \Bigg\rangle .
\nonumber
\end{eqnarray}

To  estimate (\ref{eq:derVbetaj5-1}) we will use the same technique as in Theorem \ref{ini:thm1b}.  For $l = 1,\cdots,L-1$ define
\begin{equation} \label{eq:derVbetaj5-2}
A_l^{j_l} = e^{\gamma_{l+1} \langle x \rangle } \partial_\beta^{j_l} \underline{W}_{p_l,q_l}^{m_l,n_l}[w^{(I)}](K^{(m_l,n_l)})F_l[w^{(I)}](H_f + r + \widetilde{r}_l) e^{-\gamma_{l} \langle x \rangle } .
\end{equation}
and similarly for $A_L^{j_L}$ except that we replace $F_L[w^{(I)}](H_f + r + \widetilde{r}_l)$ by $(H_0 - E_{\rm at} + 1 )^{-1}$.  Here
$$ \gamma_{l+1} = \left(1- k^{-1}\sum_{t=1}^l ( 1 - \delta_{j_t,0} )\right) \gamma_1 ; \quad l=1, \cdots , L .
$$
Note again that since $s\leq k, \gamma_{L+1} \geq 0$.  It follows that
$$
| \partial_\beta^s V_{\umm,\upp,\unn,\uqq}[w^{(I)}](r,K^{(|\umm|,|\unn|)})| \leq
\sum_{ \substack{\ujj \in \N_0^L \\ |\ujj|=s}       }
\frac{s!}{\ujj!} \left( \prod_{l=1}^L \| A_l^{j_l} \| \right) \| e^{\gamma_1 \langle x \rangle} \varphi_{\rm at} \|.
$$
We will show the bound
\begin{equation} \label{eq:Abd}
\| A_l^{j_l} \| \leq C | g |^{m_l+p_l+n_l+q_l},
\end{equation}
which gives \eqref{initial:thmE22:eq1A}.
%,
%\begin{equation}
%| \partial_\beta^s V_{\umm,\upp,\unn,\uqq}[w^{(I)}](r,K^{(|\umm|,|\unn|)} ) | \leq L^k C^L  | g %|^{|\umm|+|\unn| + %|\upp| + | \uqq|} .
%\end{equation}
We write for $l \leq L$,
\begin{eqnarray}\label{estofAia}
 A_l^{j_l} && = e^{\gamma_{l+1} \langle x \rangle}\partial_\beta^{j_l} \underline{W}_{p_l,q_l}^{m_l,n_l}[w^{(I)}](K^{(m_l,n_l)})(H_0 - E_{\rm at} + 1 )^{-1}
e^{- \gamma_{l} \langle x \rangle } \\ \nonumber
&&
\times
e^{ \gamma_{l} \langle x \rangle } (H_0 - E_{\rm at} + 1 ) F_l[w^{(I)}](H_f + r + \widetilde{r}_l)
 e^{-\gamma_{l} \langle x \rangle } .
\end{eqnarray}
First we estimate the second factor. To this end we write
$$
F_l[w^{(I)}](H_f + r + \widetilde{r}_l) = (H_0 - E_{\rm at} - z + r + \widetilde{r}_l)^{-1}
\left( \overline{P}_{\rm at} \otimes 1 + P_{\rm at} \otimes \chi_1^2(H_f + r + \widetilde{r}_l )
\right) .
$$
Since $e^{\gamma_{l} \langle x \rangle } P_{\rm at} e^{- \gamma_{l} \langle x \rangle }$ is bounded for $\gamma_1$ small, it
is clear that
\begin{eqnarray*}
\lefteqn{
\| e^{\gamma_{l}\langle x \rangle } (H_0 - E_{\rm at} + 1 ) F_l[w^{(I)}](H_f + r + \widetilde{r}_l) e^{-\gamma_{l} \langle x \rangle } \| } \\
&&
\leq C_1 + (r + \widetilde{r}_l) \| e^{\gamma_{l} \langle x \rangle }(H_0 - E_{\rm at} - z + r + \widetilde{r}_l )^{-1}
\overline{P}_{\rm at} \otimes 1 e^{-\gamma_{l} \langle x \rangle } \| .
\end{eqnarray*}
For $u \geq 0$ we write
\begin{equation}
(H_{\rm at} - E_{\rm at} - z + u )^{-1} \overline{P}_{\rm at} = \frac{1}{2 \pi i} \int_\Gamma ( w - E_{\rm at} - z + u )^{-1} (w - H_{\rm at})^{-1} dw
\end{equation}
where $\Gamma$ is the contour $\Gamma = \Gamma_- - \Gamma_+$ with
$$
\Gamma_{\pm}(t) = E_{\rm at} + 3/4 + e^{\pm i \frac{\pi}{4} } t  , \quad 0 \leq t < \infty .
$$
and obtain for $\lambda$ small
\begin{equation} \label{eq:i4}
e^{ \lambda  \langle x \rangle} ( H_{\rm at} - E_{\rm at} - z + u )^{-1} \overline{P}_{\rm at} e^{ - \lambda \langle x \rangle } =
\frac{1}{2 \pi i} \int_\Gamma ( w - E_{\rm at} - z + u )^{-1} ( w - H_{\rm at}(\lambda) )^{-1} dw ,
\end{equation}
where $H_{\rm at}(\lambda)$ is given as in the proof of Theorem \ref{ini:thm1b}.  As in that proof we estimate (\ref{eq:i4})
for large $w \in \Gamma$ using a numerical range estimate to bound $\|( w - H_{\rm at}(\lambda) )^{-1}\|$ while for small
$w \in \Gamma$ the resolvent can be bounded using
$$
\|(H_{\rm at}(\lambda) - H_{\rm at} ) (-\Delta + 1 )^{-1} \| = O(|\lambda|)
$$
for small $\lambda$.
Then using the spectral
theorem, which allows us to substitute $u= H_f + r + \widetilde{r}_l$, we obtain for small $\gamma_1$
$$
y \| e^{\gamma_{l} \langle x \rangle } ( H_0 - E_{\rm at} - z + y )^{-1} \overline{P}_{\rm at} \otimes 1 e^{-\gamma_{l} \langle x \rangle } \| \leq C
$$
independent of $y \geq 0$.  In order to show \eqref{eq:Abd} for $1 \leq l \leq L$ it remains to bound  the first factor on the right hand side of
(\ref{estofAia}).
%$$
%e^{\lambda_l \langle x \rangle } \partial_\beta^{j_l} \underline{W}^{m_l,n_l}_{p_l,q_l}[w^{(I)}](K^{(m_l,n_l)})(H_0 - E_{\rm at} + 1 )^{-1} e^{-\lambda_{l+1} \langle x %\rangle } .
%$$
Using $\| ( \frac{x}{\langle x \rangle } \cdot p + p \cdot \frac{x}{\langle x \rangle } ) (-\Delta + 1 )^{-1} \| < \infty$ and
Hypothesis (H) we see that
$$
\| ( -\Delta \otimes 1 + 1 \otimes H_f + 1)e^{\gamma_{l} \langle x \rangle } (H_0 - E_{\rm at} + 1 )^{-1} e^{-\gamma_{l} \langle x \rangle } \|
$$
is bounded uniformly in $L$ for small $\gamma_1$. Thus to prove \eqref{eq:Abd} we need only bound
$$
e^{\gamma_{l+1} \langle x \rangle } \partial_\beta^{j_l}
\underline{W}_{p_l,p_l}^{m_l,n_l}[w^{(I)}](K^{(m_l,n_l)}) e^{-\gamma_{l} \langle x \rangle }
(-\Delta \otimes 1  + 1 \otimes H_f + 1 )^{-1}
$$
or carrying out the differentiations with respect to $\beta$ (if any) we need to bound
$$
\underline{W}^{m_l,p_l}_{p_l,q_l}(e^{\gamma_{l+1} \langle  x \rangle }
{w}^{(I,j_l)} e^{-\gamma_{l} \langle x \rangle } )(K^{(m_l,n_l)})
(-\Delta \otimes 1 + 1 \otimes H_f + 1 )^{-1} ,
$$
where  ${w}^{(I,j_l)} := \partial_\beta^{j_l} w^{(I)}$.
Referring to \eqref{defofwI} we have
\begin{equation} \label{eq:xy1}
\| e^{\gamma_{l+1} \langle x \rangle } {w}_{1,0}^{(I,j_l)}(K)
e^{-\gamma_{l} \langle x \rangle }(-\Delta + 1 )^{-1/2} \|_{\HH_{\rm at} \to \HH_{\rm at}} \leq
c_1|g| {\kappa_\Lambda(k)} ,
\end{equation}
and similarly for ${w}_{0,1}^{(I,j_l)}$, while for $m+n=2$
\begin{equation} \label{eq:xy2}
\| e^{\gamma_{l+1} \langle x \rangle } {w}_{m,n}^{(I,j_l)}(K^{(m,n)}) e^{-\gamma_{l} \langle x \rangle }\|_{\HH_{\rm at} \to \HH_{\rm at}} \leq
c_2|g|^2   {\kappa_\Lambda(K^{(m)})}  {\kappa_\Lambda(\widetilde{K}^{(n)})}          ,
\end{equation}
where $\kappa_\Lambda(K^{(m)}) = \prod_{j=1}^m \kappa_\Lambda(k_j)$.
Given \eqref{eq:xy1} and \eqref{eq:xy2} we need only consider $\underline{W}^{m_l,n_l}_{p_l,q_l}$ with $p_l + q_l \geq 1$.
>From Lemma \ref{kernelopestimate}, if $m_l + n_l \leq 1$, $p_l = 1$, $q_l = 0$.
\begin{eqnarray*}
\lefteqn{ \left\| \int \frac{dX}{|X|^{1/2}} a^*(X) e^{\gamma_{l+1} \langle x \rangle }
 {w}^{(I,j_l)}_{m_l+1,n_l}(K^{(m_l)},X,\widetilde{K}^{(n_l)})e^{-\gamma_{l}\langle  x \rangle}
(-\Delta + 1 )^{-1/2} \otimes (H_f + 1 )^{-1/2} \right\|^2 }
\\
&& \leq \int \frac{dX}{|X|^2} \sup_{r \geq 0} \left\|
e^{\gamma_{l+1}\langle x \rangle } {w}^{(I,j_l)}_{m_l+1,n_l}(K^{(m_l)},X ,\widetilde{K}^{(n_l)} )
e^{-\gamma_{l} \langle x \rangle }(- \Delta + 1 )^{-1/2} \right\|_{\HH_{\rm at} \to \HH_{\rm at}} \frac{r + |X|}{r + 1}
\\
 &&\leq c |g|^{m_l+n_l+1} ,
\end{eqnarray*}
and similarly if $p_l=0,q_l=1$. If $p_l=q_l=1$
\begin{eqnarray*}
\lefteqn{ \left\| \int \frac{dX^{(1,1)}}{|X^{(1,1)}|^{1/2}} a^*(X_1) e^{\gamma_{l+1} \langle x \rangle }
{w}^{(I,j_l)}_{1,1}(X_1,\widetilde{X}_2)e^{-\gamma_{l}\langle  x \rangle}  a(X_2) (H_f+1)^{-1}  \right\|^2 }
\\
&& \leq \int \frac{dX^{(1,1)}}{|X^{(1,1)}|^2} \sup_{r \geq 0} \left\| e^{\gamma_{l+1} \langle x \rangle } {w}^{(I,j_l)}_{1,1}(X_1,\widetilde{X}_2 )
e^{-\gamma_{l} \langle x \rangle } \right\|_{\HH_{\rm at} \to \HH_{\rm at}} \frac{(r + |X_1|)(r + | \widetilde{X}_2|)}{(r + |\widetilde{X}_2|)^2}
\\
 &&\leq c |g|^{2} ,
\end{eqnarray*}
and similarly if $p_l=2,q_l=0$ or $p_l=0$, $q_l=2$. Since
\begin{eqnarray*}
\| (-\Delta + 1 )^{1/2} \otimes (H_f + 1 )^{1/2} ( - \Delta \otimes 1 + 1 \otimes H_f + 1 )^{-1} \| &=& 1 \\
\| 1 \otimes (H_f + 1 )  ( - \Delta \otimes 1 + 1 \otimes H_f + 1 )^{-1} \| &=& 1
\end{eqnarray*}
we have proved \eqref{initial:thmE22:eq1A}.
 A similar argument gives \eqref{initial:thmE22:eq2}
\begin{equation}
\left| \partial_r \partial_\beta^s V_{\umm,\upp,\unn,\uqq}[w^{(I)}](r,K^{|\umm|,|\unn|})\right| \leq L^{k+1}C^L  | g|^{|\umm| + |\unn| + |\upp| + |\uqq|} .
\end{equation}
One can use  the same estimates as above
to  show that
the $\beta$ derivative in  \eqref{eq:derVbetaj5-2} exists in
 $L^\infty(\underline{B}_1^{(|\umm|,|\unn|)} ; C^1([0,1]; \mathcal{B}(\HH)))$.
To show this one
 replaces ${w}^{(I,j_l)}$ by
its difference from the differential quotient, i.e.,
$(\Delta \beta)^{-1}(  {w}^{(I,j_l-1)}(\beta +  \Delta \beta) - {w}^{(I,j_l-1)}(\beta))  - {w}^{(I,j_l)}(\beta) $ and using the explicit expressions for $w^{(I)}$ it is straight
forward to verify using  Taylor's theorem with remainder that the right hand side in
the corresponding estimates
 converge to  zero  as $\Delta \beta$ tends to zero. Likewise
one shows continuity in $\beta$.
It now follows that the  $\beta$ derivative in  \eqref{eq:derVbetaj5-1} exists in
$L^\infty(\underline{B}_1^{(|\umm|,|\unn|)} ; C^1[0,1])$
The mapping $(g,z) \mapsto V_{\umm,\upp,\unn,\uqq}[w^{(I)}(g,E_{\rm at} + z)]$ is
in $C^{\omega}_B(\mathbb{C} \times D_{1/2}
; C_B^k(\R ; \WW_{|\umm|,|\unn|}^\#))$. To this end, observe that for fixed  $z$,
 $ V_{\umm,\upp,\unn,\uqq}$ is a polynomial in
$g$ with  coefficients in $C_B^{\omega,k}(\R ; \mathcal{W}_{|\umm|,|\unn|}^{\#})$.
For fixed  $g$  it is straight forward to verify that  $ V_{\umm,\upp,\unn,\uqq}$
 is differentiable
 with respect to $z$. To this end observe that  only $w^{(I)}_{0,0}$ depends on $z$.  %This is explained in detail in  the proof of Lemma 21 in \cite{HH10-2}.

\end{proof}

Using Lemma   \ref{initial:thmE22} the proof of Theorem  \ref{thm:inimain1} (a) is analogous to
the proof of Theorem 17 (a) in \cite{HH10-2}. Below we summarize the main estimates of  the proof.
Let $S^L_{M,N}$ denote the set of tuples $(\umm,\upp,\unn,\uqq) \in \N_0^{4L}$ with
$|\umm|=M$, $|\unn|=N$, and $1 \leq m_l+p_l+q_l+n_l \leq 2$.
We find, with $\widetilde{\xi} := (8 \pi)^{-1/2} \xi$,
\begin{align}
\| w^{(0)}_{\geq 1}(g,z) \|_\xi^{(\#,k)} &= \sup_{\beta \in \R} \sum_{M+N \geq 1} {\xi}^{-(M+N)}
\max_{0 \leq l \leq k}
 \| \partial_\beta^l \widetilde{w}_{M,N}(g,\beta, z) \|^{\#} \nonumber \\
&\leq \sum_{M+N\geq 1} \sum_{L=1}^\infty \sum_{(\umm,\upp,\unn,\uqq) \in S^L_{M,N}}
\widetilde{\xi}^{-(M+N)} 4^L
 \sup_{\beta \in \R} \max_{0 \leq l \leq k}
 \| \partial_\beta^l
 V_{\umm,\upp,\unn,\uqq}[w^{(I)}(g,\beta, \zeta)] \|^{\#} \nonumber \\
&\leq \sum_{L=1}^\infty \sum_{M+N\geq 1} \sum_{(\umm,\upp,\unn,\uqq) \in S^L_{M,N} }
\widetilde{\xi}^{-|\umm|-|\unn|} L^{k+1} (4 C)^L g^{|\umm|+|\unn|+|\upp|+|\uqq|}  \nonumber \\
&\leq\sum_{L=1}^\infty  L^{k+1}  14^L \widetilde{\xi}^{-2L} \left( 4 C |g|  \right)^L \; , \label{eq:feb2:1}
\end{align}
for all $(g,z) \in D_1 \times D_{1/2} $.
A similar but simpler estimate yields
\begin{align}
\sup_{r \in [0,1]} \| \partial_r w^{(0)}_{0,0}(g,z)(r) - 1 \|_{C^k(\R)}
&\leq \sum_{L=2}^\infty \sum_{(\upp,\uqq) \in \N_0^{2L}: p_l+q_l= 1, 2}
\sup_{\beta \in \R } \max_{0 \leq l \leq k} \| \partial_\beta^l  V_{\uzz,\upp,\uzz,\uqq}[w^{(I)}(g,\beta,\zeta)] \|^{\#} \nonumber \\
&\leq \sum_{L=2}^\infty 3^L L^{k+1} \left( C |g|  \right)^L \; , \label{eq:feb2:2}
\end{align}
for all $(g,z) \in D_1 \times D_{1/2}$.
Analogously we have for all  $(g,z) \in D_1 \times D_{1/2}$,
\begin{align}
\| w^{(0)}_{0,0}(g,z)(0) + z \|_{C^k(\R)}
& \leq \sum_{L=2}^\infty \sum_{(\upp,\uqq) \in \N_0^{2L}: p_l+q_l= 1, 2}
\sup_{\beta \in \R } \max_{0 \leq l \leq k} \| \partial_\beta^l  V_{\uzz,\upp,\uzz,\uqq}[w^{(I)}(g,\zeta)] \|^{\#} \nonumber
\\ \label{eq:feb2:3}
& \leq \sum_{L=2}^\infty 3^L  L^{k+1}  \left(  C |g|  \right)^L .
\end{align}
The right hand sides in \eqref{eq:feb2:1}--\eqref{eq:feb2:3} can be made arbitrarily small for sufficiently small $|g|$.
This implies that $w^{(0)}(g)$ is in  $\mathcal{B}^{(k)}(\delta_1,\delta_2,\delta_3)$. Rotation invariance and the symmetry property have already been shown  in    Theorem 17 of \cite{HH10-2}.
Theorem \ref{thm:inimain1}  (c)   follows from  Lemma   \ref{initial:thmE22}
and   the  convergence for small $g$ established in  \eqref{eq:feb2:1}--\eqref{eq:feb2:3}.

\section{Renormalization Transformation}
\label{sec:ren:def}

In this section we define the Renormalization transformation as in \cite{BCFS03}. It is a combination of
the Feshbach transformation which cuts out higher photon energies, a rescaling of the resulting operator so that it acts on the fixed subspace $\HH_{\rm red}$ and a conformal transformation of the spectral parameter.
Let  $0<\xi<1$ and $0 < \rho < 1$.
For $w \in \mathcal{W}_\xi$ we  define the analytic function
$$E_\rho[w](z) := \rho^{-1} E[w](z) :=   - \rho^{-1} \langle \Omega , H(w(z)) \Omega \rangle$$
 and the set
$$
U[w] := \{ z \in D_{1/2}  | | E[w](z) | <  \rho  / 2 \} .
$$
\begin{lemma} \label{renorm:thm3} Let $0< \rho\leq 1/2$.
Then for all $w \in \mathcal{B}(\rho/8, \rho/8,  \rho/8 )$, the
function  $E_{\rho}[w]: U[w] \to D_{1/2}$ is an analytic  bijection,
 $D_{3\rho/8} \subset U[w] \subset D_{5\rho/8}$, and for all $z \in D_{5 \rho/8}$ we have
\begin{equation} \label{estonpartialE}
|\partial_z E[w](z) - 1 | \leq \frac{4 \rho }{( 4  - 5\rho)^{2}} .
\end{equation}
If   $w \in \mathcal{B}(\rho/32, \rho/32,  \rho/32 )$, then
$D_{15   \rho/32} \subset U[w] \subset D_{17\rho/32}$ and for all $z \in D_{17 \rho/32 }$ we have
\begin{equation} \label{estonpartialE2}
|\partial_z E[w](z) - 1 | \leq \frac{16 \rho }{( 16   - 17 \rho)^{2}} .
\end{equation}

\end{lemma}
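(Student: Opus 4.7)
The plan is to reduce the claim to a statement about the scalar analytic function $E[w]$ on $D_{1/2}$ and then apply Rouch\'e's theorem and the Cauchy integral formula. First I would observe that $H(w(z))\Omega$ is a sum of vectors $a^*(K^{(m)})(\cdots)\Omega$ with $m \geq 0$, so that $\langle \Omega, H(w(z))\Omega\rangle$ picks up only the $m=n=0$ contribution, giving $E[w](z) = -w_{0,0}(z,0)$. Setting $\phi(z) := E[w](z) - z$, the defining inequality for $w \in \mathcal{B}(\rho/8,\rho/8,\rho/8)$ is precisely $\sup_{z\in D_{1/2}}|\phi(z)| \leq \rho/8$, and the rest of the argument only uses this together with analyticity of $\phi$ on $D_{1/2}$.

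The two inclusions for $U[w]$ are then immediate from the triangle inequality: if $|z| < 3\rho/8$, then $|E[w](z)| \leq |z| + \rho/8 < \rho/2$, so $z \in U[w]$; conversely, $z \in U[w]$ forces $|z| \leq |E[w](z)| + |\phi(z)| < \rho/2 + \rho/8 = 5\rho/8$. For the bijectivity of $E_\rho[w]: U[w] \to D_{1/2}$, equivalently of $E[w]:U[w] \to D_{\rho/2}$, I would apply Rouch\'e's theorem to $g(z) := z + (\phi(z) - a)$ for each fixed $a \in D_{\rho/2}$ on the circle $|z| = 5\rho/8$; there
\begin{equation*}
|\phi(z)-a| \leq \rho/8 + |a| < \rho/8 + \rho/2 = 5\rho/8 = |z|,
\end{equation*}
so $g$ has the same number of zeros in $D_{5\rho/8}$ as the identity, namely one. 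Since $|E[w](z)| = |a| < \rho/2$, that unique zero lies in $U[w]$, and the inclusion $U[w] \subset D_{5\rho/8}$ upgrades this simultaneously to surjectivity and injectivity on $U[w]$; the inverse is then automatically analytic.

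The derivative bound follows from the Cauchy integral formula applied on the \emph{fixed} contour $|w| = 1/2$ (rather than one adapted to $z$):
\begin{equation*}
\phi'(z) = \frac{1}{2\pi i}\oint_{|w|=1/2} \frac{\phi(w)}{(w-z)^2}\,dw, \qquad |\phi'(z)| \leq \frac{\rho/16}{(1/2-|z|)^2}.
\end{equation*}
Substituting $|z| \leq 5\rho/8$, so that $1/2 - |z| \geq (4-5\rho)/8$, yields exactly the stated constant $4\rho/(4-5\rho)^2$. The second assertion is obtained by rerunning the same three steps with $\rho/8$ replaced throughout by $\rho/32$: the triangle-inequality step yields $D_{15\rho/32} \subset U[w] \subset D_{17\rho/32}$, the Rouch\'e comparison on $|z| = 17\rho/32$ still gives a strict inequality $\rho/32 + |a| < 17\rho/32$ for $a \in D_{\rho/2}$, and the Cauchy estimate evaluated at radius $17\rho/32$ reproduces $16\rho/(16-17\rho)^2$. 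I do not anticipate a serious obstacle; the one point that demands care is arranging the strict inequality in the Rouch\'e comparison on the boundary circle, which is exactly what dictates the numerical choice of radii in the definition of the polydisc $\mathcal{B}(\rho/8,\rho/8,\rho/8)$, and choosing the Cauchy contour at the fixed radius $1/2$ (rather than adapted to $z$) is what produces precisely the constants stated in the lemma.
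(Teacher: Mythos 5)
Your proposal is correct and amounts to the same argument the paper uses: the paper reduces the claim to the scalar analytic function $E[w]$ (with $E[w](z)-z = -(w_{0,0}(z,0)+z)$ bounded by $\rho/8$ resp.\ $\rho/32$) and then invokes its auxiliary Lemma~\ref{renorm:thm3A} with $r=\rho/2$, $\epsilon = \rho/8$ resp.\ $\epsilon = \rho/32$, citing \cite{HH10-1} and \cite{BCFS03} for that lemma's proof; you simply supply the proof of that auxiliary lemma directly via the triangle inequality, Rouch\'e's theorem, and the Cauchy estimate, and the arithmetic with $r=\rho/2$, $\epsilon=\rho/8$ (resp.\ $\rho/32$) reproduces the stated constants. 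The only cosmetic point is that $\phi$ is bounded on the open disc $D_{1/2}$ but need not extend to $|w|=1/2$, so the Cauchy integral should be taken on $|w|=1/2-\eta$ and then $\eta\downarrow 0$; the resulting bound is unchanged.
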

For a proof of the lemma we apply  following lemma with
$r = \rho/2$ and $\epsilon = \rho/8$ respectively $\epsilon = \rho/32$.  For a proof of Lemma  \ref{renorm:thm3A}
see  \cite{HH10-1} (Lemma 22) or \cite{BCFS03}.
\begin{lemma} \label{renorm:thm3A} Let $0 < \epsilon < 1/2$, and
let $E : D_{1/2} \to \C$ be an analytic function which satisfies
\begin{align*}
\sup_{z \in D_{1/2}} |E(z) - z | \leq \epsilon .
\end{align*}
Then for any $r > 0 $ with $r + \epsilon < 1/2$ the following is true.
\begin{itemize}
\item[(a)] For $w \in D_{r}$ there exists a unique $z \in D_{1/2}$ such that
$E(z) = w$.
\item[(b)] The map $E :  U_r := \{ z \in D_{1/2} | | E(z) | < r \} \to D_{r} $ is
biholomorphic.
\item[(c)]  We have $D_{r - \epsilon} \subset U_r \subset D_{r+\epsilon}$.
\item[(d)] If  $z \in  D_{r + \epsilon }$, then $|\partial_z E(z) - 1 | \leq \frac{\epsilon}{2} ( 1/2 - (r + \epsilon))^{-2}$.
\end{itemize}
\end{lemma}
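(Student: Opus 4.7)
The plan is to prove the four parts in the order (c), (d), (a), (b), using elementary triangle inequalities for the set inclusions, Cauchy's estimate for the derivative bound, and Rouch\'e's theorem for the existence and uniqueness of preimages. Throughout set $f(z) := E(z) - z$; the hypothesis then reads $\sup_{z \in D_{1/2}} |f(z)| \leq \epsilon$.

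Part (c) follows from two triangle inequalities. If $z \in U_r$ then $|z| \leq |E(z)| + |f(z)| < r + \epsilon$, giving $U_r \subset D_{r+\epsilon}$. Conversely, if $|z| < r - \epsilon$, then $z \in D_{1/2}$ (using $r + \epsilon < 1/2$) and $|E(z)| \leq |z| + |f(z)| < r$, so $z \in U_r$. For (d), I would apply Cauchy's integral formula to the holomorphic function $f$ on a circle $|w| = R$ with $|z_0| < R < 1/2$, obtaining $|f'(z_0)| \leq R \epsilon (R - |z_0|)^{-2}$ via the $ML$-estimate, and then let $R \uparrow 1/2$ to get $|\partial_z E(z_0) - 1| \leq (\epsilon/2)(1/2 - |z_0|)^{-2}$. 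For $z_0 \in D_{r+\epsilon}$ this yields the stated bound.

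Part (a) is the Rouch\'e argument. Fix $w \in D_r$, pick $R$ with $r + \epsilon < R < 1/2$, and write $E(z) - w = (z - w) + f(z)$. On the circle $\{|z| = R\}$ one has $|z - w| \geq R - |w| > R - r > \epsilon \geq |f(z)|$, so Rouch\'e's theorem gives that $E - w$ and $z \mapsto z - w$ have equal zero counts inside $\{|z| < R\}$, namely one. Any preimage $z_0 \in D_{1/2}$ of $w$ automatically lies in $U_r \subset D_{r+\epsilon} \subset \{|z| < R\}$ by (c), so this single zero is the unique preimage in all of $D_{1/2}$. Part (b) is then immediate: $E : U_r \to D_r$ is an analytic bijection from (a) and the definition of $U_r$, and an injective holomorphic map on a planar domain automatically has nowhere-vanishing derivative and is hence biholomorphic onto its image.

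The main obstacle, modest as it is, is making the Rouch\'e step in (a) yield global uniqueness on $D_{1/2}$ rather than just on the subdisk $\{|z| < R\}$. The key is to first establish the inclusion $U_r \subset D_{r+\epsilon}$ from part (c), which confines all preimages of $D_r$ inside the Rouch\'e region; without this preliminary inclusion one would need to argue separately that no additional zeros of $E - w$ lurk in the annular region $\{R \leq |z| < 1/2\}$.
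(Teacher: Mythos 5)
Your proof is correct, and the logical structure is sound. The paper does not supply its own proof of Lemma~\ref{renorm:thm3A}; it simply refers to Lemma~22 of \cite{HH10-1} (or to \cite{BCFS03}), so there is nothing in the text to compare against directly. That said, your argument is the standard one for this kind of ``small perturbation of the identity'' statement: Rouch\'e's theorem for existence and uniqueness of preimages, Cauchy's estimate for the derivative bound, and the injective-holomorphic-implies-biholomorphic fact for part~(b). Each step checks out. In particular, in~(d) the $ML$-estimate on $|\zeta|=R$ gives $|f'(z_0)|\leq R\epsilon(R-|z_0|)^{-2}$, which is decreasing in $R$ on $(|z_0|,1/2)$, so letting $R\uparrow 1/2$ is the right move and yields exactly the claimed constant $\epsilon/2$. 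In~(a) the strict inequality $R-r>\epsilon\geq|f|$ on $|z|=R$ is what Rouch\'e needs, and the preliminary inclusion $U_r\subset D_{r+\epsilon}\subset D_R$ (your part~(c)) is precisely the point that upgrades uniqueness from the subdisk $D_R$ to all of $D_{1/2}$; you correctly identified this as the one spot where naive bookkeeping could leave a gap, and closed it. The order (c)$\to$(d)$\to$(a)$\to$(b) is a sensible dependency chain and the proof is complete.
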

If $0 < \rho \leq 1/4$, then for  $w \in \mathcal{B}(\rho/32, \rho/32,  \rho/32 )$
we find using   \eqref{estonpartialE2}, that for all $z \in D_{17  \rho /32}$
\begin{equation} \label{eq:pzeestimate}
| \partial_z E_\rho[w] | \geq    \frac{1}{\rho} ( 1  -  | \partial_z E - 1   | ) \geq \frac{15}{16 \rho} .
\end{equation}
 Let $I_\rho[w]$ denote the inverse of $
E_{\rho}[w]: U[w] \to D_{1/2}$. It satisfies
\begin{equation} \label{eq:identityrelEI}
E_\rho[w] (I_\rho[w](z)) = z ,
\end{equation}
for all $z \in D_{1/2}$. For notational compactness we shall occasionally drop the dependence on $w$ and
write $E_\rho$ and $I_\rho$.
%In that case we have   for all $z \in D_{1/2}$,
%\begin{equation}  \label{eq:pzeestimate2}
%| \partial_z I_\rho[w]( z) | \leq \frac{4 \rho}{3}  .
%\end{equation}
In the previous section we introduced  smooth functions  $\chi_1$ and  $\chib_1$. We set
$$
\chi_\rho(\cdot)  = \chi_1(\cdot /\rho) \quad , \quad \chib_\rho(\cdot) = \chib_1(\cdot /\rho) \; ,
$$
and use the abbreviation $\chi_\rho = \chi_\rho(H_f)$ and $\chib_\rho  = \chib_\rho(H_f)$.
It should be clear from the context whether $\chi_\rho$ or $\chib_\rho$ denotes a function or an operator.
The following theorem is proven in \cite{BCFS03,HH10-1}.

\begin{lemma} \label{renorm:thm1} Let $0 < \rho \leq  1/2$.  Then for all $w \in \mathcal{B}(\rho/8,\rho/8,\rho/8)$, and
all $z \in D_{1/2}$ the pair of operators  $(H(w(E_\rho^{-1}(z)),H_{0,0}(E_\rho^{-1}(z)))$ is a Feshbach pair for $\chi_\rho$.
\end{lemma}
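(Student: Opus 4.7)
The plan is to verify the three Feshbach pair axioms of Appendix~B for the pair $(H(w(z')),H_{0,0}(w(z')))$ with cutoff $\chi_\rho$, where I abbreviate $z':=I_\rho[w](z)$. These reduce to (i) $\chi_\rho,\chib_\rho$ commute with the free operator $T:=H_{0,0}(w(z'))$; (ii) the interaction $W[w(z')]=H(w(z'))-T$ is bounded; and (iii) $T+\chib_\rho W[w(z')]\chib_\rho$ is boundedly invertible on $\overline{\ran\chib_\rho}$.

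First I would invoke Lemma~\ref{renorm:thm3}: since $0<\rho\le 1/2$ and $w\in\mathcal{B}(\rho/8,\rho/8,\rho/8)$, the map $E_\rho[w]$ is a biholomorphism of $U[w]\subset D_{5\rho/8}$ onto $D_{1/2}$, so $z'\in D_{5\rho/8}$ is well-defined. By definition of $E_\rho$, the identity $E_\rho[w](z')=z$ is equivalent to $w_{0,0}(z',0)=-\rho z$, whence $|w_{0,0}(z',0)|\le\rho/2$. Combined with the bound $\|\partial_r w_{0,0}(z')-1\|_\infty\le\rho/8$ from the definition of $\mathcal{B}(\rho/8,\rho/8,\rho/8)$, integration in $r$ gives
\begin{equation*}
w_{0,0}(z',r)=-\rho z+r+\eta(r),\qquad |\eta(r)|\le r\rho/8,\qquad r\in[0,1].
\end{equation*}
Since $\chib_1=0$ on $[0,3/4)$, the operator $\chib_\rho=\chib_1(H_f/\rho)$ is supported where $H_f\ge 3\rho/4$; on this spectral subspace,
\begin{equation*}
|w_{0,0}(z',H_f)|\ge H_f\bigl(1-\rho/8\bigr)-\rho/2\ge (3\rho/4)(7/8)-\rho/2=\tfrac{5\rho}{32}>0,
\end{equation*}
which will yield the bound $\|T^{-1}\chib_\rho\|\le 32/(5\rho)$.

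Axiom (i) is immediate because $T=w_{0,0}(z',H_f)$ and both $\chi_\rho,\chib_\rho$ are functions of $H_f$. For (ii) and the estimate required in (iii), I would apply Theorem~\ref{thm:injective} together with $\|w_{\ge 1}\|_\xi\le\rho/8$ to conclude $\|W[w(z')]\|\le\rho/8$. Combining the two bounds gives
\begin{equation*}
\bigl\|T^{-1}\chib_\rho W[w(z')]\chib_\rho\bigr\|\le\frac{32}{5\rho}\cdot\frac{\rho}{8}=\frac{4}{5}<1,
\end{equation*}
so a Neumann series produces a bounded inverse of $T+\chib_\rho W[w(z')]\chib_\rho$ on $\overline{\ran\chib_\rho}$, verifying (iii).

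I do not anticipate any serious obstacle: the proof is a quantitative check that the radii $\rho/8$ defining the polydisc $\mathcal{B}$ have been tuned so that the linear-in-$r$ term in $w_{0,0}(z',\cdot)$ dominates on $\supp\chib_\rho$. The one conceptually important point is that the reparametrization $z\mapsto z'=I_\rho[w](z)$ is precisely what normalizes $w_{0,0}(z',0)=-\rho z$ to stay of order $\rho$, so that the constant shift does not overwhelm the lower bound $H_f\ge 3\rho/4$ on $\ran\chib_\rho$. Versions of this computation already appear in \cite{BCFS03,HH10-1}, and I would simply adapt them to the present kernel conventions.
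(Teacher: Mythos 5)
Your proposal is correct. In the paper this lemma is not re-proved --- it is delegated to \cite{BCFS03,HH10-1} --- and what you have written is precisely the quantitative check those references make, adapted to the present normalizations: identify $E_\rho[w](z')=z$ with $w_{0,0}(z',0)=-\rho z$, integrate the bound $\|\partial_r w_{0,0}(z')-1\|_\infty\le\rho/8$ to control the linear behaviour of $w_{0,0}(z',\cdot)$, use $\supp\chib_1\subset[3/4,\infty)$ to bound $|w_{0,0}(z',H_f)|$ from below on $\ran\chib_\rho$, bound $\|W[w(z')]\|$ via Theorem~\ref{thm:injective}, and then invoke the sufficient conditions of Lemma~\ref{fesh:thm2}. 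The arithmetic is fine (your $5\rho/32$ is a slightly loose but valid lower bound; using $\rho\le 1/2$ directly one can get $13\rho/64$). The only cosmetic omission is that you state the one inequality $\|T^{-1}\chib_\rho W\chib_\rho\|<1$ and say ``a Neumann series produces a bounded inverse''; to match Lemma~\ref{fesh:thm2}(c$'$) verbatim one should also note that $\|\chib_\rho W T^{-1}\chib_\rho\|\le 4/5<1$ and that $T^{-1}\chib_\rho W\chi_\rho$ is bounded, but both are immediate by the same bounds, so this is not a gap in substance.
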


The definition of  the renormalization transformation involves a  scaling transformation $S_\rho$ which scales the energy value $\rho$ to the value 1.
For operators  $A \in \mathcal{B}(\FF)$ we define
$$
S_\rho(A) =  \rho^{-1}   \Gamma_\rho A \Gamma_\rho^* ,
$$
where  $\Gamma_\rho$ is the unitary dilation on $\FF$ which is  uniquely determined  by
\begin{align*}
& \Gamma_\rho a^\#(k) \Gamma_\rho^* = \rho^{-3/2} a^\#(\rho^{-1} k) , \qquad \Gamma_\rho \Omega = \Omega .
\end{align*}
It is easy to check that $\Gamma_\rho H_f \Gamma_\rho^* = \rho H_f$ and hence $\Gamma_\rho \chi_\rho \Gamma_\rho^* = \chi_1$.
We are now ready to  define the renormalization transformation, which  in
view of Lemmas \ref{renorm:thm3} and \ref{renorm:thm1} is well defined.

\begin{definition} Let $0 < \rho \leq 1/2$. For $w\in \mathcal{B}(\rho/8,\rho/8,\rho/8)$ we define the renormalization transformation
$$
\left( \mathcal{R}_\rho H(w) \right)(z) :=  S_\rho F_{\chi_{\rho}}(H(w(E_\rho^{-1}(z)),H_{0,0}(E_\rho^{-1}(z))) \upharpoonright \HH_{\rm red} ,
$$
where  $z \in D_{1/2}$.
\end{definition}

\begin{theorem} \label{thm:maingenerala} Let $0<\rho \leq 1/2$ and $0< \xi \leq 1/2$.
For
$w\in \mathcal{B}(\rho/8,\rho/8,\rho/8)$ there exists a unique integral kernel $\mathcal{R}_\rho(w) \in \WW_{\xi}$
$$
(\mathcal{R}_\rho H(w))(z) = H(\mathcal{R}_\rho(w)(z)) .
$$
If $w$ is symmetric then also $\mathcal{R}_\rho(w)$ is symmetric. If $w(z)$ is invariant under rotations for all $z \in D_{1/2}$
than also $\mathcal{R}_\rho(w)(z)$ is invariant under rotations for all $z \in D_{1/2}$.
\end{theorem}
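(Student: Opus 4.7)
The plan is to construct $\mathcal{R}_\rho(w)$ explicitly by expanding the Feshbach operator in a Neumann series, normal ordering using the generalized Wick theorem, and then applying the rescaling $S_\rho$. Uniqueness will follow directly from Theorem \ref{thm:injective}, and the preservation of symmetry and rotation invariance from Lemma \ref{lem:symmetry} together with the structural properties of the Feshbach map and $S_\rho$.

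First, fix $w \in \mathcal{B}(\rho/8,\rho/8,\rho/8)$ and, for $z \in D_{1/2}$, set $\zeta := E_\rho^{-1}(z) \in D_{1/2}$. By Lemma \ref{renorm:thm1}, $(H(w(\zeta)), H_{0,0}(w(\zeta)))$ is a Feshbach pair for $\chi_\rho$. Writing $T := H_{0,0}(w(\zeta))$ and $W := W[w(\zeta)]$, I would expand
\begin{equation*}
F_{\chi_\rho}(H(w(\zeta)),T) \upharpoonright \HH_{\rm red} = T + \chi_\rho W \chi_\rho - \chi_\rho W \chib_\rho \sum_{L=0}^\infty (-1)^L \bigl(T^{-1} \chib_\rho W \chib_\rho\bigr)^L T^{-1}\chib_\rho W\chi_\rho,
\end{equation*}
which converges in operator norm on $\HH_{\rm red}$ since the smallness assumption on $w$ guarantees $\|T^{-1}\chib_\rho W\chib_\rho\| < 1$ (this is the standard bound used in \cite{BCFS03,HH10-1} and it is already exploited in Lemma \ref{renorm:thm1}).

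Next I would normal order each term of the series using the generalized Wick theorem as done in Section \ref{sec:ini} (compare the expansion producing $\widetilde{w}^{(0)}_{M,N}$ in equation \eqref{initial:eq7}), together with the pull-through formula. This yields, for every $L$ and every assignment $(\umm,\upp,\unn,\uqq) \in \N_0^{4L}$ with $1 \leq m_l+p_l+q_l+n_l$, a contribution of the form $V_{\umm,\upp,\unn,\uqq}$ analogous to \eqref{eq:defofV} but with the resolvents $F_l$ built from $w_{0,0}$ and $\chi_\rho,\chib_\rho$. Summing over the multi-indices with $|\umm|=M$, $|\unn|=N$, symmetrizing as in \eqref{eq:symmetrization}, and finally applying $S_\rho$ (which by the scaling $\Gamma_\rho H_f \Gamma_\rho^* = \rho H_f$ and $\Gamma_\rho a^\#(k)\Gamma_\rho^* = \rho^{-3/2}a^\#(\rho^{-1}k)$ simply rescales the momentum variables and multiplies by an overall factor $\rho^{-1}$) produces an integral kernel $\mathcal{R}_\rho(w)(z)_{m,n}$ of the required form \eqref{eq:defhmn11}. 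Convergence in $\mathcal{W}_\xi^\#$ is obtained via the same geometric series estimate used for \eqref{eq:feb2:1}, with smallness coming from the polydisc bounds on $w$.

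Uniqueness is immediate from Theorem \ref{thm:injective}: if two kernels $w', w''\in \WW_\xi$ satisfied $H(w'(z)) = H(w''(z))$ for all $z$, then $w' = w''$. For symmetry, if $w$ is symmetric then $H(w(\overline z)) = H(w(z))^*$ by Lemma \ref{lem:symmetry}; since $H_0$, $\chi_\rho$, $\chib_\rho$ and the Feshbach construction respect adjoints, and $E_\rho[w](\overline z) = \overline{E_\rho[w](z)}$ (hence $I_\rho[w](\overline z) = \overline{I_\rho[w](z)}$), the operator $(\mathcal{R}_\rho H(w))(\overline z)$ is the adjoint of $(\mathcal{R}_\rho H(w))(z)$, and Lemma \ref{lem:symmetry} applied in reverse gives symmetry of $\mathcal{R}_\rho(w)$. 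For rotation invariance, one uses that $\UU_\FF(R)$ commutes with $H_f$, $\chi_\rho$, $\chib_\rho$, and $\Gamma_\rho$, so it commutes with $F_{\chi_\rho}$ and $S_\rho$ whenever $H(w(z))$ is rotation invariant; then Lemma \ref{lem:wHinvequiv}(i) transfers the invariance back to the kernel.

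The main obstacle is the bookkeeping of the normal-ordering expansion — making sure the contributions organize into a well-defined element of $\mathcal{W}_\xi$ with the correct essential support on the sets $\underline{Q}_{m,n}$, which is built into the construction by the cutoffs $\chi_\rho,\chib_\rho$ via \eqref{eq:defintegralkernel}. The analytical convergence itself is standard once one has the kind of bounds established in Lemma \ref{initial:thmE22}, so this part of the argument closely parallels, and in fact is simpler than, the initial Feshbach step.
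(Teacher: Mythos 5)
Your proposal is correct and follows essentially the same route the paper takes (and largely defers to \cite{BCFS03} and \cite{HH10-1}, Theorem 32): expand the Feshbach operator in a Neumann series, Wick-order using the pull-through formula, rescale by $S_\rho$, and read off the resulting kernels, with uniqueness from Theorem \ref{thm:injective} and symmetry and rotation invariance from Lemmas \ref{lem:symmetry} and \ref{lem:wHinvequiv} together with the observation that $F_{\chi_\rho}$, $S_\rho$, and $I_\rho[w]$ respect adjoints and the $SO(3)$ action. Your sketch is precisely the content of the paper's explicit formulas \eqref{eq:renormalizedkernels01}--\eqref{eq:defofv} and the convergence estimate supplied by Lemma \ref{codim:thm3first}.
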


A proof of  the existence of the integral kernel as stated in  Theorem \ref{thm:maingenerala} can be found in \cite{BCFS03}
or  \cite{HH10-1} (Theorem 32). The uniqueness follows from Theorem  \ref{thm:injective}. The statement about the symmetry and the  rotation invariance
follows from  Lemmas  \ref{lem:symmetry} and \ref{lem:wHinvequiv}  and the fact  that the renormalization transformation preserves
symmetry and rotation invariance, respectively.  This is explained in detail in \cite{HH10-2}.
The renormalized kernels are given as follows.
 For  $w \in \WW_{m+p,n+q}^\#$ we define
\begin{eqnarray*}
\lefteqn{ W_{p,q}^{m,n}[w]( r , K^{(m,n)} ) } \\
&:=& P_{\rm red} \int_{B_1^{p+q}} \frac{d X^{(p,q)}}{|X^{(p,q)}|^{1/2}} a^*(x^{(p)}) w_{p+m,q+n}(H_f + r , x^{(p)},
k^{(m)}, \widetilde{x}^{(q)}, \widetilde{k}^{(n)} )
 a(\widetilde{x}^{(q)}) P_{\rm red}
\end{eqnarray*}
which defines an operator for a.e. $K^{(m,n)} \in B_1^{m+n}$. In the case $m = n = 0$ we set   $W^{0,0}_{m,n}[w](r) := W_{m,n}[w](r)$.
%We  introduce the multi-indices $\umm =(m_1,...,m_L) \in \N_0^L$,  for $L \in \N$. We set
%$|\umm|= m_1 + \cdots + m_L$, and  $\underline{0} := (0,0,...,0) \in \N_0^L$.
  For $w \in \mathcal{B}(\rho/8, \rho/8,\rho/8)$  we have
\begin{equation} \nonumber
\mathcal{R}_\rho(w)(z) = \mathcal{R}_\rho^\#(w(I_\rho[w](z)))  \; .
\end{equation}
For $w \in \WW_\xi^\#$ we define
\begin{equation} \nonumber %\label{nenorm:eq1}
\mathcal{R}_\rho^\#(w) := \widehat{w}^{({\rm sym})} \; ,
\end{equation}
where the  kernels  $\widehat{w}$ are given as follows.
For $M + N \geq 1$,
\begin{align} \label{eq:renormalizedkernels01}
\widehat{w}_{M,N}( r , K^{(M,N)} ) & :=
\sum_{L=1}^\infty (-1)^{L-1} \rho^{M+N - 1}  \sum_{ \substack{ (\umm,\upp,\unn,\uqq) \in \N_0^{4L}:  \\    |\umm|=M, |\unn|=N , \\  m_l+p_l+n_l+q_l \geq 1 }}    \\
& \ \     \prod_{l=1}^L \left\{ \binom{ m_l + p_l }{ p_l} \binom{ n_l + q_l }{ q_l}    \right\} v_{\umm,\upp,\unn,\uqq}[w](r, K^{(M,N)}) , \nonumber
\end{align}
and
\begin{align} \label{eq:renormalizedkernels00}
\widehat{w}_{0,0}( r ) &:= \rho^{-1} w_{0,0}(\rho r)  + \rho^{-1} \sum_{L=2}^\infty (-1)^{L-1}
\sum_{ \substack{ (\upp,\uqq) \in \N_0^{2L}: \\ p_l + q_l \geq 1 } }
v_{\underline{0},\upp, \underline{0},\uqq}[w](r) \; .
\end{align}
Moreover, we have introduced  the expressions
\begin{eqnarray}
\lefteqn{ v_{\umm,\upp,\unn,\uqq}[w]( r, K^{(|\umm|,|\unn|)}) :=  } \label{eq:defofv} \\
&&
\left\langle \Omega  ,  F_0[w] (H_f + \rho (r + \widetilde{r}_0) )
\prod_{l=1}^L \left\{  {W}_{p_l,q_l}^{m_l,n_l}[w](\rho(r+r_l),
\rho K_l^{(m_l,n_l)} ) F_l[w]( H_f + \rho( r + \widetilde{r}_l ) ) \right\} \Omega
\right\rangle   , \nonumber
\end{eqnarray}
where $F_0[w](r) := \chi_\rho(r  )$ and  $F_L[w](r) := \chi_\rho(r  )$, and for $l = 1, ... , L - 1$
\begin{eqnarray} \label{eq:defofFFFF}
F_l[w](r)  := F[w](r) := \frac{ \overline{ \chi}_\rho^2(r )}{ w_{0,0}(r) } \; .
\end{eqnarray}
We used the notation introduced in
\eqref{eq:rldef} and
\eqref{eq:rltildedef}.
The next theorem states the contraction property.

\begin{theorem} \label{codim:thm1} For any positive numbers $\rho_0 \leq 1/4$ and $\xi_0 \leq 1/2$ there exist numbers $\rho, \xi, \epsilon_0$ satisfying
$\rho \in (0, \rho_0]$, $\xi \in (0, \xi_0]$, and $0 < \epsilon_0 \leq \rho/8$
such that the following property holds,
\begin{equation} \label{codim:thm1:eq}
\mathcal{R}_\rho : \mathcal{B}_0(\epsilon, \delta_1, \delta_2 ) \to
\mathcal{B}_0( \epsilon + \delta_2/2 , \delta_2/2 , \delta_2/2 ) \quad , \quad \forall \ \epsilon, \delta_1, \delta_2 \in [0, \epsilon_0) .
\end{equation}
\end{theorem}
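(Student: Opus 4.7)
\bigskip

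\noindent\textbf{Plan of proof.} The basic mechanism behind the contraction is the conjunction of two facts: the explicit formula \eqref{eq:renormalizedkernels01} carries an overall factor $\rho^{M+N-1}$, and rotation invariance (preserved by $\mathcal{R}_\rho$ by Theorem \ref{thm:maingenerala}) forces the would-be marginal kernels with $M+N=1$ to vanish by Lemma \ref{lem:wHinvequiv}(ii). Hence the only contributions from $\widehat{w}_{\geq 1}$ come from $M+N\geq 2$, and the factor $\rho^{M+N-1}\xi^{-(M+N)}=\rho^{-1}(\rho/\xi)^{M+N}$ will be summable and small provided $\rho/\xi$ is chosen small. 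Concretely, I will first fix $\rho$ small compared to $\rho_0$, then $\xi\in(0,\xi_0]$ with $\rho/\xi$ small, and finally $\epsilon_0\le\rho/8$ small enough that the geometric series below converge with the desired margin.

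\smallskip

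\noindent\textbf{Step 1 (resolvent and auxiliary bounds).} For $w\in\mathcal{B}_0(\epsilon,\delta_1,\delta_2)$ with $\epsilon,\delta_1,\delta_2<\epsilon_0\le\rho/8$, write $w_{0,0}(z)(r)=-z+r+(w_{0,0}(z)(r)+z-r)$. Using $|\partial_r w_{0,0}(z)-1|\le\epsilon$ and $|w_{0,0}(z,0)+z|\le\delta_1$, one obtains $|w_{0,0}(z)(r)|\ge r(1-\epsilon)-|z|-\delta_1$ on the support of $\chib_\rho$ (where $r\ge 3\rho/4$), uniformly for $z\in U[w]\supset D_{3\rho/8}$ and after reparameterization for $z\in D_{1/2}$ (where the relevant argument is $I_\rho(z)$ with $|I_\rho(z)|\le 17\rho/32$ by Lemma \ref{renorm:thm3}). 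This yields $\|F[w]\|_\infty\le C_1/\rho$ with a constant independent of the parameters, so each internal resolvent in \eqref{eq:defofv} is controlled by $C_1/\rho$.

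\smallskip

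\noindent\textbf{Step 2 (estimate of $v_{\umm,\upp,\unn,\uqq}$).} Combining Step 1 with the operator-norm bound of Lemma \ref{lem:operatornormestimates} applied to the $W_{p_l,q_l}^{m_l,n_l}[w]$, the monomial $v_{\umm,\upp,\unn,\uqq}[w]$ (and its $\partial_r$-derivative, via the analogous bound on $\partial_r F[w]$) is dominated by
\[
(C_1/\rho)^{L-1}\prod_{l=1}^{L}\|w_{m_l+p_l,n_l+q_l}\|^{\#}.
\]
Substituting into \eqref{eq:renormalizedkernels01} and summing the multinomial factor, the $\|\cdot\|_\xi^{\#}$-norm of the $(M,N)$ component is bounded by
\[
\rho^{M+N-1}\sum_{L\ge\lceil(M+N)/2\rceil}(C_2/\rho)^{L-1}\,(\xi_{}\,\|w_{\geq 1}\|_\xi)^{L}\,\xi^{-(M+N)}\cdot c_{M,N,L},
\]
with combinatorial factors $c_{M,N,L}$ growing at most exponentially in $L$, so that the sum converges once $\delta_2\le\epsilon_0$ is small enough relative to $\rho$.

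\smallskip

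\noindent\textbf{Step 3 (the three contraction bounds).}
For $\|\widehat w_{\geq 2}\|_\xi^{}$: the factor $\rho^{M+N-1}\xi^{-(M+N)}=\rho^{-1}(\rho/\xi)^{M+N}$ for $M+N\ge 2$ combined with the $L\ge 1$ sum above gives a bound of the form $C_3(\rho/\xi)^{2}\delta_2(1+o_\rho(1))$, which is $\le\delta_2/2$ once $\rho/\xi$ is small enough. The $M+N=1$ contribution vanishes by the rotation-invariance argument recalled above.

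For $\widehat w_{0,0}(z,0)+z$: by the defining property $E_\rho[w](I_\rho(z))=z$ of the reparameterization, the leading term $\rho^{-1}w_{0,0}(I_\rho(z))(0)$ in \eqref{eq:renormalizedkernels00} exactly cancels $-z$, leaving only the $L\ge 2$ correction, which is $O(\rho^{-1}(\delta_2/\rho)^{L-1}\delta_2^{L})\le\delta_2/2$ for $\epsilon_0$ small.

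For $\partial_r\widehat w_{0,0}(z)-1$: differentiating the scaling term gives $\partial_s w_{0,0}(\rho r)-1$, which contributes exactly $\le\epsilon$; the $L\ge 2$ correction is bounded as in Step 2 by $C_4\delta_2^{2}/\rho^{2}(1+o_\rho(1))\le\delta_2/2$ for $\epsilon_0$ small, so the total is $\le\epsilon+\delta_2/2$.

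\smallskip

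\noindent\textbf{Step 4 (conclusion).} Rotation invariance of $\mathcal{R}_\rho(w)(z)$ is guaranteed by Theorem \ref{thm:maingenerala}. The three estimates of Step 3 show $\mathcal{R}_\rho(w)\in\mathcal{B}_0(\epsilon+\delta_2/2,\delta_2/2,\delta_2/2)$. Tracking the dependencies, one first fixes $\rho\in(0,\rho_0]$ so that $C_3(\rho/\xi_0)^{2}<1/4$ is achievable, then $\xi\in(0,\xi_0]$ with $\rho/\xi$ small enough that the Step 2 series converges, and finally $\epsilon_0\in(0,\rho/8]$ small enough that all correction terms are $\le\delta_2/4$ (or $\le\delta_2/4$ and $\le\epsilon+\delta_2/4$, respectively).

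\smallskip

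\noindent\textbf{Main obstacle.} The delicate point is not any single estimate but the simultaneous tuning of the three parameters $\rho,\xi,\epsilon_0$: the factor $1/\rho$ from each resolvent $F[w]$ must be beaten by the contraction factor $\rho^{M+N-1}(\rho/\xi)^{M+N}$ and by the smallness of $\|w_{\geq 1}\|_\xi$, while at the same time the lower bound $|w_{0,0}|\gtrsim\rho$ used in Step 1 must remain valid uniformly in the reparameterized spectral parameter $I_\rho(z)\in D_{17\rho/32}$. Crucially, the argument would fail (the $M+N=1$ sector would produce a term of order $\delta_2$ with no $\rho$-gain) without the rotation-invariance input of Lemma \ref{lem:wHinvequiv}(ii); this is what makes the renormalization transformation a contraction here even in the absence of an infrared regularization.
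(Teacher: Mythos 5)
Your proposal follows essentially the same route as the paper (which in turn defers the computation to Theorem 38 of \cite{HH10-1} and only emphasizes the rotation-invariance input). You correctly isolate the two load-bearing facts: (i) $\mathcal{R}_\rho$ preserves rotation invariance (Theorem \ref{thm:maingenerala}) and Lemma \ref{lem:wHinvequiv}(ii) then annihilates the marginal $M+N=1$ sector, and (ii) the explicit prefactor $\rho^{M+N-1}$ in \eqref{eq:renormalizedkernels01}, once multiplied by the $(1/\rho)^{L-1}$ produced by the resolvents $F[w]$ and by $\xi^{-(M+N)}$, leaves a net gain for $M+N\geq 2$; this, together with the exact cancellation $\rho^{-1}w_{0,0}(I_\rho(z))(0)=-z$ coming from the reparameterization, yields all three membership conditions of $\mathcal{B}_0(\epsilon+\delta_2/2,\delta_2/2,\delta_2/2)$.

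Two bookkeeping remarks. First, the restriction $L\geq\lceil(M+N)/2\rceil$ in your Step 2 is not correct for the renormalized kernels: the constraint in \eqref{eq:renormalizedkernels01} is only $m_l+p_l+n_l+q_l\geq 1$, with no upper bound per factor (the bound $\leq 2$ holds only for the initial kernels $w^{(I)}$), so $L$ ranges over all of $\N$. This does not damage the argument — the $L$-sum is already geometric once $\tau\|w_{\geq 1}\|_\xi<1$ — but the claimed lower bound on $L$ should be dropped. Second, the form $C_3(\rho/\xi)^2\delta_2$ of your contraction bound is a slight overestimate: after absorbing each $\xi^{-(M+N)}$ into $\|w_{M,N}\|^\#\leq\xi^{M+N}\|w\|_\xi^\#$, the surviving small parameter is $\rho$ (the paper obtains $\sim C_\theta\,\rho\,\|w_{\geq 2}\|_\xi$), not $(\rho/\xi)^2$, which matters for the order in which $\rho,\xi,\epsilon_0$ are tuned; in the paper one fixes $\rho$ first, then $\xi$ (subject to a constraint of the type $\xi^4\lesssim\rho$), then $\epsilon_0\leq\rho/32$. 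With these corrections the argument coincides with the paper's.
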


A proof  of Theorem  \ref{codim:thm1} can be found in \cite{HH10-1} (Theorem 38).
The proof given there  relies on the  fact that there are no terms which are linear in
creation or annihilation operators. Since by rotation invariance and Lemma \ref{lem:wHinvequiv}
there are no terms which
are linear in creation and annihilation operators,  Theorem  \ref{codim:thm1} follows from
the same proof.
The contraction property allows us to iterate the renormalization transformation.  To this end we
introduce the following Hypothesis.

\vspace{0.5cm}

\noindent
($R$) \quad Let $\rho, \xi, \epsilon_0$ are positive numbers such that the contraction property \eqref{codim:thm1:eq}
holds and $\rho \leq 1/4$, $\xi \leq 1/4$ and $\epsilon_0 \leq \rho/8$.

\vspace{0.5cm}

Now we extend the renormalization transformation to  $\mathcal{B}^{(0)}(\rho/8,\rho/8,\rho/8)$ by setting
$$
\mathcal{R}_\rho(w )(\beta) = \mathcal{R}_\rho(w (\beta) )
$$
for $w \in \mathcal{B}^{(0)}(\rho/8,\rho/8,\rho/8)$ and
$$
\mathcal{R}_\rho^\#(w )(\beta) = \mathcal{R}_\rho^\#(w (\beta) )
$$
for $w \in \mathcal{B}^{(\#,0)}(\rho/8,\rho/8,\rho/8)$. That is we have
$$
\mathcal{R}_\rho(w )(\beta,z) = \mathcal{R}_\rho^\#(w (\beta,I_\rho(\beta,z) ) )
$$

The next theorem states that the extended renormalization transformation preserves the  $\mathcal{B}_0^{(k)}$-balls
and acts as a contraction on these balls in all but one dimension.

\begin{theorem} \label{contk1} For $k \in \N_0$ and positive numbers $\rho_0 \leq 1/4$ and $\xi_0 \leq 1/4$
there exists numbers
$\rho,\xi,\epsilon_0$ satisfying $\rho \in (0,\rho_0]$, $\xi \in (0,\xi_0]$, and $0 < \epsilon_0 \leq \rho/32$ such
that
\begin{equation} \label{codim:thm1:eq:ck}
\mathcal{R}_\rho : \mathcal{B}_0^{(k)}(\epsilon, \delta_1,\delta_2) \to \mathcal{B}_0^{(k)}(\epsilon +
\delta_2/4 + \delta_1/4,\delta_2/2,\delta_2/2) \quad , \quad \forall \epsilon , \delta_1 ,\delta_2 \in [0, \epsilon_0 ) .
\end{equation}
\end{theorem}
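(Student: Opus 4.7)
The plan is to factor the renormalization as
\begin{equation*}
\mathcal{R}_\rho(w)(\beta, z) = \mathcal{R}_\rho^\#\bigl(w(\beta, I_\rho[w](\beta, z))\bigr)
\end{equation*}
and control each piece separately in the $C^k$-in-$\beta$ norm, then combine via the chain rule. Rotation invariance of the image is automatic from Theorem \ref{thm:maingenerala}, and the pointwise-in-$\beta$ contraction is already supplied by Theorem \ref{codim:thm1}. What is new is the propagation of $C^k$ regularity in $\beta$, together with the appearance of the cross-term $\delta_1/4$ in the target $\epsilon$-slot.

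First, I would upgrade the contraction to the extended ball $\mathcal{B}_0^{(\#,k)}$ for $\mathcal{R}_\rho^\#$. Starting from the explicit formulas \eqref{eq:renormalizedkernels01}--\eqref{eq:renormalizedkernels00} and the product representation \eqref{eq:defofv} for $v_{\umm,\upp,\unn,\uqq}[w]$, apply $\partial_\beta^l$ for $l\le k$ via the higher Leibniz rule \eqref{eq:higherleibniz}. Each derivative hits either a $W_{p_l,q_l}^{m_l,n_l}[w]$ factor (bounded by the $\epsilon_0$-smallness of $w_{m,n}$) or the resolvent factor $F_l[w]=\chib_\rho^2/w_{0,0}$; for the latter apply the quotient rule iteratively, using that $\epsilon_0\le\rho/32$ forces $|w_{0,0}(\beta,r)|\ge c\rho$ on $\supp\chib_\rho$ uniformly in $\beta$, so each $\partial_\beta^j w_{0,0}$ is absorbed by $\delta_1+\epsilon$. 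The combinatorial factor $L^{k+1}$ arising from distributing $k$ derivatives among $L$ factors is swallowed by the geometric smallness $(\mathrm{const}\cdot\epsilon_0)^L$ in the Neumann sum, in the style of \eqref{eq:feb2:1}. This yields the $\mathcal{B}^{(\#,k)}$-analogue of Theorem \ref{codim:thm1}.

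Second, I would control $I_\rho[w]$ and its $\beta$-derivatives via implicit differentiation of \eqref{eq:identityrelEI}. Since $E_\rho[w](\beta,\zeta)=-\rho^{-1}w_{0,0}(\beta,\zeta,0)$,
\begin{equation*}
\partial_\beta I_\rho[w](\beta,z)=-\frac{(\partial_\beta w_{0,0})(\beta,I_\rho[w](\beta,z),0)}{(\partial_\zeta w_{0,0})(\beta,I_\rho[w](\beta,z),0)}.
\end{equation*}
The denominator has modulus at least $15/16$ by \eqref{eq:pzeestimate} (valid precisely because $\epsilon_0\le\rho/32$), and the numerator is bounded in $C^{k-1}(\R)$ by $\delta_1$ in view of $\|w_{0,0}(\cdot,0)+z\|_{C^k(\R)}\le\delta_1$. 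Thus $\|\partial_\beta I_\rho[w]\|\le 2\delta_1$, and higher $\beta$-derivatives are handled by Faà di Bruno and induction, producing bounds polynomial in $\delta_1,\epsilon,\delta_2$ with combinatorial constants absorbed by $\epsilon_0$.

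Finally, I would combine the two estimates. Expanding $\partial_\beta^l\bigl[\mathcal{R}_\rho^\#(w(\beta,I_\rho[w](\beta,z)))\bigr]$ by Faà di Bruno yields a principal term where every $\partial_\beta$ acts on $\mathcal{R}_\rho^\#(w(\beta,\cdot))$ at frozen $\zeta=I_\rho[w](\beta,z)$ --- this term inherits the pointwise contraction of Theorem \ref{codim:thm1}, producing the $\epsilon+\delta_2/4$ contribution to the $\epsilon$-slot and $\delta_2/2$ to the other slots. The remaining cross-terms each carry at least one factor $\partial_\beta^j I_\rho[w]$, hence at least one factor $\delta_1$ by Step two, and involve $\partial_\zeta^j\mathcal{R}_\rho^\# w$, controlled by Cauchy's formula on a disc of radius $\sim 1/2-17\rho/32$ around $I_\rho[w](\beta,z)$. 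These cross-terms provide exactly the new $\delta_1/4$ in the target $\epsilon$-slot, while contributing negligibly to the $\delta_1$- and $\delta_2$-slots. The main obstacle is precisely this combinatorial bookkeeping in the final step: coupling the Faà di Bruno expansion for $\partial_\beta^l$ of the composition to the multinomial distributions from Step one while preserving the sharp coefficient $1/4$ (rather than $1/2$) requires the tighter hypothesis $\epsilon_0\le\rho/32$ and the improved bound \eqref{eq:pzeestimate} in place of \eqref{estonpartialE}.
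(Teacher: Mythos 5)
Your plan --- factoring $\mathcal{R}_\rho$ as $\mathcal{R}_\rho^\#$ composed with the substitution $\zeta\mapsto I_\rho(\beta,z)$, obtaining an extended $C^k$-in-$\beta$ bound for $\mathcal{R}_\rho^\#$, bounding $\partial_\beta^j I_\rho$ by implicit differentiation of \eqref{eq:identityrelEI}, and combining via Fa\`a di Bruno --- is exactly the decomposition the paper uses (Lemma~\ref{codim:thm3first}, Lemma~\ref{lem:estonbderofinverse}, Steps 1--3 of Section~\ref{app:contraction}), and you correctly locate the new $\delta_1/4$ term as coming from cross-terms carrying $\partial_\beta^j I_\rho$.

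There is, however, a genuine flaw in your Cauchy estimate for $\partial_\zeta^j\mathcal{R}_\rho^\#(w(\beta,\cdot))$. You propose a contour at distance $\sim 1/2-17\rho/32$ from $I_\rho(\beta,z)$, but $\zeta\mapsto\mathcal{R}_\rho^\#(w(\beta,\zeta))$ is not defined, let alone uniformly bounded, on $D_{1/2}$: the resolvent factors $\chib_\rho^2/w_{0,0}(\beta,\zeta)$ in \eqref{eq:defofFFFF} require $|w_{0,0}(\beta,\zeta,0)|\leq 5\rho/8$, and since $w_{0,0}(\beta,\zeta,0)=-\zeta+O(\delta_1)$ this forces $|\zeta|\lesssim\rho$. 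The Cauchy radius is therefore $O(\rho)$ --- the paper takes $|\mu|=18\rho/32$ against $\zeta\in D_{17\rho/32}$ --- so $\|\partial_\zeta^s\mathcal{R}_\rho^\#w\|\sim s!(32/\rho)^s\cdot\rho\|w_{\ge 2}\|_\xi^{(k)}$, and the resulting $(32/\rho)^{|X|}$ blow-up must be cancelled by $\prod_{x\in X}|\partial_\beta^{|x|}I_\rho|\le(C_k\delta_1)^{|X|}\le(C_k\rho/32)^{|X|}$. This $\rho$-cancellation is the actual mechanism that makes $\epsilon_0\le\rho/32$ indispensable; your argument invokes the hypothesis but, by positing a large Cauchy disc where the integrand is actually unbounded, bypasses the very point at which it is needed. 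Only the $w_{0,0}$-contribution to $(\mathcal{R}_\rho w)_{0,0}$ (cf. \eqref{eq:eqstep2est}, \eqref{eq:estonmultder55-2}) enjoys an $O(1)$ Cauchy disc, because $w_{0,0}(\beta,\cdot)$ itself is analytic on $D_{1/2}$; the Neumann tail $T[w(\beta,I_\rho)]$ and all of $(\mathcal{R}_\rho w)_{\ge 2}$ require the small disc and the cancellation.
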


Theorem  \ref{contk1} will be shown below.
The next theorem states that the extended renormalization transformation preserves analyticity.

\begin{theorem} \label{thm:analytbasic}  Let $0<\rho \leq 1/2$ and $0< \xi \leq 1/2$. Let $S$ be an open  subset of $\C^\nu$ with $\nu \in \N$.
Suppose the map $w(\cdot,\cdot) : S \times \R  \to \WW_\xi^{\#}$ is in $C^{\omega,k}(S \times \R; \WW_\xi^{\#} ) $
and for all $s \in S$ we have $w(s,\cdot) \in \mathcal{B}^{(\#,k)}(\rho/32, 5 \rho/ 8 ,\rho/32)$. Then
$$
(s,\beta) \mapsto \mathcal{R}^\#_\rho(w(s,\beta))
$$
is in  $C^{\omega,k}_B(S \times \R; \WW_\xi^{\#} ) $.
\end{theorem}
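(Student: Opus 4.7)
The plan is to prove Theorem~\ref{thm:analytbasic} by passing the regularity through the explicit series \eqref{eq:renormalizedkernels01}--\eqref{eq:renormalizedkernels00} defining $\mathcal{R}_\rho^\#(w)$. The key quantities are the terms $v_{\umm,\upp,\unn,\uqq}[w]$ given in \eqref{eq:defofv}, each of which is an alternating product of factors of two types: creation/annihilation operator kernels $W_{p_l,q_l}^{m_l,n_l}[w]$ that depend linearly on the sub-entries of $w$, and resolvent-type factors $F_l[w] = \overline{\chi}_\rho^2(\cdot)/w_{0,0}(\cdot)$ whose dependence on $w$ is analytic provided $w_{0,0}$ does not vanish on the support of $\overline{\chi}_\rho^2$. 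Since $w(s,\cdot) \in \mathcal{B}^{(\#,k)}(\rho/32,5\rho/8,\rho/32)$, a direct estimate using $\|\partial_r w_{0,0} - 1\|\leq \rho/32$ together with the lower bound of the support of $\overline{\chi}_\rho$ at $3\rho/4$ shows that $|w_{0,0}(r)|$ is bounded below by a positive multiple of $\rho$ on $\supp \overline{\chi}_\rho^2$ uniformly in $(s,\beta)$; this is the standard mechanism that makes $F[w]$ well-defined and analytic in the kernels.

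First I would establish, for each fixed multi-index $(\umm,\upp,\unn,\uqq)$, that
\[
(s,\beta) \longmapsto v_{\umm,\upp,\unn,\uqq}[w(s,\beta)] \in \WW_{|\umm|,|\unn|}^\#
\]
lies in $C^{\omega,k}_B(S\times\R;\WW_{|\umm|,|\unn|}^\#)$. Analyticity in $s$ follows because $w(s,\beta)$ depends analytically on $s$ and the composition of analytic maps with the analytic maps $W^{m,n}_{p,q}[\cdot]$ (linear in the kernel) and $F[\cdot]$ (analytic on the set where $w_{0,0}$ does not vanish) is analytic, together with the fact that the product of $L$ bounded operators is jointly continuous and that norm convergence of difference quotients follows from standard estimates. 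For the $\beta$-derivatives up to order $k$, I would apply the Leibniz rule \eqref{eq:higherleibniz} to the $L$-fold product in \eqref{eq:defofv}, distributing the derivatives among the $W^{m_l,n_l}_{p_l,q_l}[w]$ and $F_l[w]$ factors; differentiating $F[w]$ produces terms involving products $(\partial_\beta^{j_1}w_{0,0})\cdots(\partial_\beta^{j_s} w_{0,0})/ w_{0,0}^{s+1}$, each of which is controlled using the uniform lower bound on $|w_{0,0}|$ and the $C^k$ bound on $\partial_\beta^l w_{0,0}$ built into the $\mathcal{B}^{(\#,k)}$ norm. Existence of the derivatives in the correct topology then follows from Taylor's theorem with remainder as was done in the proof of Lemma~\ref{initial:thmE22}.

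The second step is to show that the series \eqref{eq:renormalizedkernels01}--\eqref{eq:renormalizedkernels00} converges in $\WW_{\xi}^\#$ uniformly in $(s,\beta)\in S\times\R$ together with all its $\beta$-derivatives up to order $k$. For this I would mimic the estimates \eqref{eq:feb2:1}--\eqref{eq:feb2:3} used to establish Theorem~\ref{thm:inimain1}(a), combined with the quantitative bounds on $v_{\umm,\upp,\unn,\uqq}$ that come out of the Leibniz expansion above; each $\beta$-derivative produces at most a factor polynomial in $L$, which is harmless against the geometric decay in $L$ furnished by $\|w_{\geq 1}\|_\xi^{(\#,k)} \leq \rho/32 \leq \xi/2$ (shrinking $\xi_0$ if needed as in Theorem~\ref{codim:thm1}). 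Uniform convergence of the series and all its $\beta$-derivatives then transfers the $C^{\omega,k}_B$ regularity from the partial sums to the limit, using Lemma~\ref{lem:weakstronanalyt1} for the analyticity in $s$.

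The main obstacle I anticipate is the bookkeeping for the higher $\beta$-derivatives of $F_l[w] = \overline{\chi}_\rho^2/w_{0,0}$: after applying Fa\`a di Bruno (or iterated Leibniz) one obtains many terms of the form $\overline{\chi}_\rho^2 \cdot P_k(\partial_\beta^{\leq k} w_{0,0})/ w_{0,0}^{N}$, and one must verify that the combinatorial factors absorb cleanly into the bound \eqref{initial:thmE22:eq1A}-style estimate $L^{k+1} C^L$ on $v_{\umm,\upp,\unn,\uqq}$ rather than destroying the geometric convergence in $L$. Once this estimate is in place, the rest of the argument is a straightforward adaptation of the proof that convergent power series in Banach spaces inherit the $C^{\omega,k}_B$ regularity of their summands; the symmetrization built into $\mathcal{R}_\rho^\#(w) = \widehat{w}^{(\mathrm{sym})}$ is bounded on $\WW_\xi^\#$ and commutes with $\partial_\beta$ and with $\partial_s$, so preservation of the required regularity under symmetrization is immediate.
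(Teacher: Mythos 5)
Your proposal is correct and follows essentially the same route as the paper: pass $C^{\omega,k}$-regularity through each term $v_{\umm,\upp,\unn,\uqq}[w(s,\beta)]$ (the paper packages the required analyticity of $\chib_\rho^2/w_{0,0}$ and the continuity/multilinearity of the $v$-map into Lemma~\ref{lem:diff} rather than re-deriving it via Fa\`a di Bruno as you do, but the content is the same), and then establish locally uniform convergence of the series using the quantitative bound of Lemma~\ref{codim:thm3first}, which the paper carries out in the appendix on uniform convergence. The only cosmetic imprecision in your sketch is the throwaway inequality $\rho/32\le\xi/2$ — the actual convergence criterion is $t^{-1}G<1$ with $t=3\rho/32$, which holds because $\|w_{\geq 1}\|_\xi^{(\#,k)}\leq\rho/32<t$, not because of any relation between $\rho$ and $\xi$ — but this does not affect the validity of the argument.
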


%\begin{remark} \label{rem:gartenhaag}  { Note that $w \in \mathcal{B}(\alpha,\beta,\gamma)$ and $z \in U[w]$ imply $w(z) %\in \mathcal{B}^\#(\alpha,\rho/2,\gamma)$.}
%\end{remark}

\begin{theorem} \label{thm:analytbasic2}  Let $0<\rho \leq 1/2$ and $0< \xi \leq 1/2$.
Let $S$ be an open  subset of $\C$. Suppose
\begin{eqnarray*}
w(\cdot, \cdot,\cdot ) : && S \times D_{1/2} \times \R \to \WW_\xi^{\#} \\
&&(s,z,\beta) \mapsto w(s,z,\beta)
\end{eqnarray*}
is in $C^{\omega,k}(  S \times D_{1/2}  \times \R ;   \WW_\xi^{\#})$  and for all $s \in S$ we have
 $w(s,\cdot, \cdot ) \in \mathcal{B}^{(k)}(\rho/32,\rho/32,\rho/32)$. Then
$$
(s,z,\beta ) \mapsto (\mathcal{R}_\rho(w(s,\cdot,\beta)))(z)
$$
is  in $C^{\omega,k}_B(  S \times D_{1/2}  \times \R ;   \WW_\xi^{\#})$.
\end{theorem}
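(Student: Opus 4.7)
The plan is to decompose the extended renormalization transformation into its two constituent pieces,
\[
\mathcal{R}_\rho(w(s,\cdot,\beta))(z) \;=\; \mathcal{R}_\rho^\#\bigl(w(s,\,I_\rho[w(s,\cdot,\beta)](z),\,\beta)\bigr),
\]
and to show separately (i) that the substitution $(s,z,\beta)\mapsto I_\rho[w(s,\cdot,\beta)](z)$ is itself in $C^{\omega,k}_B$, and (ii) that the composed kernel satisfies the hypotheses of Theorem \ref{thm:analytbasic}. Since we already know by the quoted results of \cite{BCFS03,HH10-1} that the identity above holds pointwise (when the right-hand side is well defined), the theorem will follow from those two steps.

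\textbf{Step 1: Regularity of $I_\rho$.} For each $(s,\beta)\in S\times\R$ the hypothesis gives $w(s,\cdot,\beta)\in\mathcal{B}^{(k)}(\rho/32,\rho/32,\rho/32)\subset \mathcal{B}(\rho/32,\rho/32,\rho/32)$, so by Lemma \ref{renorm:thm3} the map $E_\rho[w(s,\cdot,\beta)]:U[w(s,\cdot,\beta)]\to D_{1/2}$ is a biholomorphism with $D_{15\rho/32}\subset U\subset D_{17\rho/32}$ and, by \eqref{eq:pzeestimate}, $|\partial_\zeta E_\rho[w(s,\cdot,\beta)](\zeta)|\geq 15/(16\rho)$ uniformly on $D_{17\rho/32}$. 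Because $E_\rho[w(s,\cdot,\beta)](\zeta)=-\rho^{-1}\langle\Omega,H(w(s,\zeta,\beta))\Omega\rangle=-\rho^{-1}w_{0,0}(s,\zeta,\beta)(0)$, the joint $C^{\omega,k}$ hypothesis on $w$ makes the function $(s,\zeta,z,\beta)\mapsto G(s,\zeta,z,\beta):=E_\rho[w(s,\cdot,\beta)](\zeta)-z$ jointly $C^{\omega,k}$ in $(s,\zeta,z,\beta)$. The uniform lower bound on $\partial_\zeta G$ allows one to invoke the implicit function theorem in the hybrid analytic--$C^k$ setting: solving $G=0$ for $\zeta=I_\rho[w(s,\cdot,\beta)](z)$ gives a mapping in $C^{\omega,k}_B(S\times D_{1/2}\times\R;\C)$, with explicit bounds on all mixed derivatives obtained inductively from the identity $\partial_z I_\rho=(\partial_\zeta E_\rho)^{-1}\circ I_\rho$ and the high-derivative Leibniz rule \eqref{eq:higherleibniz}.

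\textbf{Step 2: Verifying the polydisc hypothesis of Theorem \ref{thm:analytbasic}.} Put
\[
\widetilde{w}(s,z,\beta):=w(s,\,I_\rho[w(s,\cdot,\beta)](z),\,\beta).
\]
Since $I_\rho$ takes values in $D_{17\rho/32}\subset D_{1/2}$, the first and third bounds defining $\mathcal{B}^{(\#,k)}(\rho/32,5\rho/8,\rho/32)$ are inherited directly from $\mathcal{B}^{(k)}(\rho/32,\rho/32,\rho/32)$, while
\[
|\widetilde{w}_{0,0}(s,z,\beta)(0)|\;\leq\;|I_\rho(z)|+\tfrac{\rho}{32}\;\leq\;\tfrac{17\rho}{32}+\tfrac{\rho}{32}\;=\;\tfrac{18\rho}{32}\;<\;\tfrac{5\rho}{8}.
\]
The $\beta$-derivatives up to order $k$ of $\widetilde{w}_{0,0}(s,z,\beta)(0)$ are controlled by Faà di Bruno applied to the composition $w_{0,0}(s,I_\rho[w(s,\cdot,\beta)](z),\beta)(0)$, combining the bound on $\|w\|^{(k)}_{\xi}$ with the uniform bounds on the $\beta$-derivatives of $I_\rho$ from Step~1; the same kind of estimate handles the remaining components of $\|\widetilde{w}_{\geq 1}\|_\xi^{(\#,k)}$.

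\textbf{Step 3: Application of Theorem \ref{thm:analytbasic}.} We view $\widetilde{S}:=S\times D_{1/2}\subset\C^{\nu+1}$ with parameter $\widetilde{s}=(s,z)$, and apply Theorem \ref{thm:analytbasic} to the family $\widetilde{w}(\widetilde{s},\beta)=w(s,I_\rho[w(s,\cdot,\beta)](z),\beta)\in\mathcal{B}^{(\#,k)}(\rho/32,5\rho/8,\rho/32)$. Its conclusion yields that $(\widetilde{s},\beta)\mapsto\mathcal{R}^\#_\rho(\widetilde{w}(\widetilde{s},\beta))$ lies in $C^{\omega,k}_B(\widetilde{S}\times\R;\WW_\xi^\#)=C^{\omega,k}_B(S\times D_{1/2}\times\R;\WW_\xi^\#)$, which is exactly the claim. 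The main obstacle is Step~1: propagating the mixed analytic/$C^k$ regularity through the non-linear inversion $I_\rho$ and obtaining bounds that are uniform in $\beta\in\R$. The key ingredient here is precisely the quantitative lower bound on $|\partial_\zeta E_\rho|$ in \eqref{eq:pzeestimate}, which furnishes a denominator that is uniformly bounded away from zero and thereby permits the implicit-function-theorem estimates to survive iterated $\beta$-differentiation.
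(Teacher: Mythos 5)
The core difference between your route and the paper's is the order of operations. You form the composed kernel
\[
\widetilde w(s,z,\beta)\;=\;w\bigl(s,\,I_\rho[w(s,\cdot,\beta)](z),\,\beta\bigr)
\]
\emph{first} and then try to feed it into Theorem \ref{thm:analytbasic}, whereas the paper applies Theorem \ref{thm:analytbasic} to $w$ \emph{before} substituting $\zeta=I_\rho$. Concretely, the paper treats $(s,\zeta)$ as a two-complex-parameter family, checks the polydisc hypothesis for each \emph{fixed} $\zeta\in D_{17\rho/32}$, applies Theorem \ref{thm:analytbasic} to get $(s,\zeta,\beta)\mapsto\mathcal{R}_\rho^\#(w(s,\zeta,\beta))$ in $C^{\omega,k}$, and only then composes with $I_\rho$ via the ordinary chain rule for $C^{\omega,k}$ maps. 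That ordering is precisely what makes the polydisc check a one-line estimate.

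Your Step 2 contains a genuine gap. You write that ``the first and third bounds defining $\mathcal{B}^{(\#,k)}(\rho/32,5\rho/8,\rho/32)$ are inherited directly from $\mathcal{B}^{(k)}(\rho/32,\rho/32,\rho/32)$.'' This is false: after the substitution $\zeta=I_\rho(s,z,\beta)$ the argument $\zeta$ is $\beta$-dependent, so the $C^k(\R)$ norms in the definition of $\mathcal{B}^{(\#,k)}$ pick up chain-rule terms of the form $\partial_\zeta^j(\partial_r w_{0,0})\cdot\prod(\partial_\beta^{c_i}I_\rho)$, and similarly for $w_{\geq 1}$. These extra contributions are nonzero, so what you actually get for the first entry is something like $\rho/32+O(\rho^2)$ rather than $\leq\rho/32$, and Theorem \ref{thm:analytbasic} as stated does not tolerate any enlargement there (its proof, via Lemma \ref{codim:thm3first}, depends quantitatively on the $\rho/32$ bound to obtain the lower bound $|w_{0,0}(r)|\geq 3\rho/32$). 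The same objection applies to the third entry. The second entry you do estimate, but only at $l=0$; the higher $\beta$-derivatives are again deferred to ``Faà di Bruno'' without the estimate actually being carried out. So as written, the hypothesis of Theorem \ref{thm:analytbasic} is not verified for $\widetilde w$. The fix is exactly the paper's reordering: verify membership in $\mathcal{B}^{(\#,k)}(\rho/32,5\rho/8,\rho/32)$ with $\zeta$ frozen (then all three bounds really do transfer because $\partial_\beta^l\zeta=0$ for $l\geq 1$ and $\|\zeta\|_{C^k(\R)}=|\zeta|\leq 17\rho/32$), apply Theorem \ref{thm:analytbasic} on $S\times D_{17\rho/32}\times\R$, and compose with $I_\rho$ afterward, where the regularity of the composition is a standard chain-rule fact and no fresh polydisc membership needs to be checked.
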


To show Theorems \ref{contk1},  \ref{thm:analytbasic}, and   \ref{thm:analytbasic2}  we will use the explicit expression for the
 renormalized integral kernels introduced above.
For  $w \in \mathcal{B}^{(0)}(\rho/8,\rho/8,\rho/8)$ we define
\begin{align*}
%w(\beta,z) := w(\beta)(z) , \quad
E_\rho(\beta,z) := E_\rho[ w{(\beta)}](z) , \quad
I_\rho(\beta,z)  := I_\rho[  w{(\beta)}](z) .
\end{align*}

The crucial point of that following  estimate is that the constant $C_L$  grows at most polynomially in $L$
and  that  $\rho^{-1}$ occurs to
a power of at most $L-1$.

\begin{lemma} \label{codim:thm3first} Let $0 \leq \rho \leq 1/4$ and
  let   $w \in \mathcal{B}^{(\#,k)}(\rho/32,5 \rho/8,\cdot)$.
Then for $(\umm,\upp,\unn,\uqq) \in (\N_0^{L})^4$ we  have
\begin{equation} \label{eq:vestimate2}
 \max_{0 \leq l \leq k} || \partial_\beta^l  v_{\umm,\upp,\unn,\uqq}[w(\beta)] ||^{\#}   \leq
 C_L  \left( \frac{1}{t}\right)^{L-1}  \prod_{l=1}^{L}
    \frac{ \max_{0 \leq l \leq k} \| \partial_\beta^l  w_{m_l + p_l, n_l + q_l}(\beta) \|^{\#} }  { \sqrt{p_l ! q_l !}}  ,
\end{equation}
where
$t := 3 \rho /32$ and  $C_L$ is a constant which satisfies a bound
$$
C_L \leq c( 1 +  \| \partial_r \chi_1 \|_\infty)^k   ( 1 + L^k ) ,
$$
where $c$ is a finite numerical constant.
\end{lemma}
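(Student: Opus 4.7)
The plan is to combine Leibniz's rule \eqref{eq:higherleibniz} with the operator-norm bound for $v_{\umm,\upp,\unn,\uqq}[w]$ already contained in the proof of Theorem 38 of \cite{HH10-1}, which is precisely the $k = 0$ version of the present lemma. Set $t := 3\rho/32$. The elementary but essential preliminary observation is that the hypothesis $w \in \mathcal{B}^{(\#,k)}(\rho/32, 5\rho/8, \cdot)$, together with $\rho \le 1/4$, implies for $r \ge 3\rho/4$ (i.e.\ on the support of $\chib_\rho$)
$$
|w_{0,0}(r)| \;\ge\; r - \tfrac{5\rho}{8} - \tfrac{\rho}{32} \;\ge\; t
\qquad\text{and}\qquad
\|\partial_\beta^a w_{0,0}\|^{\#} \;\le\; \tfrac{5\rho}{8} + \tfrac{\rho}{32} \;\le\; c\, t \quad (1 \le a \le k),
$$
the latter because $\partial_\beta^a\partial_r w_{0,0} = \partial_\beta^a(\partial_r w_{0,0} - 1)$ for $a \ge 1$.

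First I would apply Leibniz's formula to $\partial_\beta^s v_{\umm,\upp,\unn,\uqq}[w(\beta)]$ by regarding the integrand in \eqref{eq:defofv} as a product of $2L+1$ operator-valued factors $F_0, W_1, F_1, \ldots, W_L, F_L$. Since $F_0 = F_L = \chi_\rho$ are independent of $\beta$, the $s \le k$ derivatives distribute over only the remaining $2L - 1$ factors, producing at most $\binom{s + 2L - 2}{2L - 2} \le c(1 + L^k)$ summands; this accounts for the $(1 + L^k)$ factor in $C_L$.

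Next I would differentiate the individual factors. For each $W_{p_l,q_l}^{m_l,n_l}[w]$, the $j_l^{\rm th}$ derivative is simply $W_{p_l,q_l}^{m_l,n_l}[\partial_\beta^{j_l}w]$. For each $F_l[w] = \chib_\rho^2/w_{0,0}$ with $1\le l\le L-1$, I would expand $\partial_\beta^{j_l}(1/w_{0,0})$ by Faà di Bruno into a finite sum of terms
$$
w_{0,0}^{-(N+1)}\prod_i(\partial_\beta^{a_i}w_{0,0})^{n_i}, \qquad \sum_i n_i = N,\;\; \sum_i a_i n_i = j_l,
$$
each bounded pointwise on the support of $\chib_\rho$ by $t^{-(N+1)}(c t)^N = c^N/t$, crucially uniform in $j_l \le k$. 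Each Leibniz summand thus has the same product form as the original $v_{\umm,\upp,\unn,\uqq}$ with modified kernels, so I would apply the $k=0$ estimate of \cite{HH10-1} termwise. Summing the $O(L^k)$ contributions and majorizing every intermediate $\|\partial_\beta^{j_l}w_{m_l+p_l,n_l+q_l}\|^\#$ by the outer $\max_{0\le m\le k}$ yields the claimed bound on the $L^\infty$ part of the $\#$-norm. The $\partial_r$ part is controlled analogously by distributing one further $\partial_r$ via Leibniz; each such distribution contributes at most one factor $\rho^{-1}\|\partial_r\chi_1\|_\infty$, which is absorbed into $t^{-(L-1)}$, the remaining $(1+\|\partial_r\chi_1\|_\infty)^k$ being a convenient loose upper bound.

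The main obstacle will be the accounting in the previous step, namely ensuring that inverse powers of $\rho$ do not accumulate beyond $L - 1$. Without the cancellation $|\partial_\beta^{j_l}(1/w_{0,0})| \le c/t$, the Faà di Bruno expansion would naively yield $\rho^{-(L - 1 + s)}$ instead of $\rho^{-(L - 1)}$. The match relies essentially on using both halves of the polydisc hypothesis at once: that $\|w_{0,0}(0)\|_{C^k(\R)}$ and $\|\partial_r w_{0,0} - 1\|_{C^k(\R ; C_B[0,1])}$ are both $O(\rho)$ simultaneously, which together force $\partial_\beta^a w_{0,0}$ to itself be of size $O(\rho)$ on $[0,1]$ and thus to exactly match the lower bound on $w_{0,0}$ on the support of $\chib_\rho$.
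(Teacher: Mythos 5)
Your proposal is correct and follows essentially the same route as the paper's proof: establish the $k=0$ bound (the multilinear $L^\infty$-plus-$\partial_r$ estimate), then distribute $\partial_\beta^s$ via Leibniz over the $2L-1$ $\beta$-dependent factors (counting $O(L^k)$ terms), expand $\partial_\beta^{j_l}(1/w_{0,0})$ by Fa\`a di Bruno, and exploit that $|w_{0,0}| \gtrsim t$ on $\supp\chib_\rho$ while $|\partial_\beta^a w_{0,0}| \lesssim t$ for $a \ge 1$ (a consequence of combining both the $w_{0,0}(0)$ and the $\partial_r w_{0,0} - 1$ constraints in the polydisc), so that no extra inverse powers of $\rho$ accumulate. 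One detail you leave implicit but the paper makes explicit: the reason the $\partial_r$-part of the $\#$-norm does not cost an additional $\rho^{-1}$ is that $r$ enters $v_{\umm,\upp,\unn,\uqq}$ only through the rescaled arguments $\rho(r+\cdots)$, so the chain rule produces a compensating factor $\rho$ that exactly cancels the $\rho^{-1}$ coming from $\partial_r\chib_\rho = \rho^{-1}\chib_1'(\cdot/\rho)$; this cancellation (cf.\ the bound \eqref{eq:estonbetaderofw00derr}, which is $O(t^{-1})$, not $O(t^{-2})$) is what justifies ``applying the $k=0$ estimate termwise'' without loss of a power of $\rho$, and is worth stating rather than attributing to absorption into $t^{-(L-1)}$.
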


\begin{proof}
First we consider the case $k=0$. Since in that case the $\beta$ dependence is not relevant we
drop the  $\beta$ dependence in the notation.
Using
\begin{equation} \label{eq:babyestimate1}
| \langle \Omega , A_1 A_2 \cdots  A_n \Omega \rangle | \leq \| A_1 \|_{\rm op} \| A_2 \|_{\rm op}  \cdots \| A_n \|_{\rm op},
\end{equation}
we find
\begin{eqnarray*}
\lefteqn{ \esssup_{ K^{(|\umm|,|\unn|)}} \sup_{r \in [0,1]}| v_{\umm,\upp,\unn,\uqq}[w](r, K^{(|\umm|,|\unn|)})  | } \\
&&
\leq   \prod_{l=1}^{L}  \esssup_{K^{(m_l,n_l)}} \sup_{r \in [0,1]}  \| W_{p,q}^{m,n}[w](r,K^{(m_l,n_l)})\|_{\rm op} ,
    \prod_{l=1}^{L-1} \| \chib_\rho^2/w_{0,0} \|_{C[0,1]} .
\end{eqnarray*}
To estimate the right hand side we use
\begin{eqnarray} \label{eq:ineqforW1}
&&\esssup_{K^{(m,n)}  }  \sup_{r \in [0,1]}  \|   W_{p,q}^{m,n}[w](r,K^{(m,n)}) \|_{\rm op} \leq
\frac{ \|   w_{p+m,q+n}, \|_{L^\infty(B_1^{m+n};C[0,1])}  }{\sqrt{p! q!}} \\
&& \| \chib_\rho^2/w_{0,0} \|_{C[0,1]} \leq 1/t \label{eq:ineqforH1} .
\end{eqnarray}
Inequality  \eqref{eq:ineqforW1} can be shown using Lemma   \ref{kernelopestimate}
and \eqref{eq:intofwKminus2}. Inequality  \eqref{eq:ineqforH1} can be shown as follows.
For $ r \geq \rho 3/4$ we have
\begin{eqnarray*}
| w_{0,0}(r)| \geq r - | r - ( w_{0,0}(r) - w_{0,0}(0))|  - | w_{0,0}(0)| \geq r - r \frac{\rho}{32} - 5 \rho /8 \geq
\rho \frac{3}{32} ,
\end{eqnarray*}
and thus
\begin{equation}  \label{eq:ineqforH11}
\left[ {\rm inf}_{r \in [\rho \frac{3}{4},1]} |w_{0,0}(r) | \right]^{-1} \leq 1/t .
\end{equation}
Next we calculate the derivative with respect to $r$. To this end first observe that
using Lemma   \ref{kernelopestimate} and dominated convergence one can show  that
for a.e. $K^{(m,n)}$ the partial derivative  $ \partial_r W^{m,n}_{p,q}[w](r,K^{(m,n)})$
exists with respect to the operator norm topology and equals  $  W^{m,n}_{p,q}[\partial_r w](r,K^{(m,n)})$.
Thus
\begin{eqnarray} \label{eq:ineqforW11}
\esssup_{K^{(m,n)}  }  \sup_{r \in [0,1]}  \| \partial_r  W_{p,q}^{m,n}[w](r,K^{(m,n)}) \|_{\rm op} \leq
\frac{ \| \partial_r   w_{p+m,q+n}, \|_{L^\infty(B_1^{m+n};C[0,1])}  }{\sqrt{p! q!}} .
\end{eqnarray}
Furthermore,
\begin{equation*}
D_r \frac{\chib_\rho^2}{w_{0,0}} =    - \frac{\chib_\rho^2}{w_{0,0}^2} (\partial_r w_{0,0})     +
 \frac{2 \chib_\rho \partial_r \chib_\rho}{w_{0,0} }
\end{equation*}
and thus for $s + \rho r \in [0,1]$ we have
\begin{equation}  \label{eq:ineqforH111}
| D_r \frac{\chib_\rho^2}{w_{0,0}}(s + \rho r)  | \leq \frac{3}{2} \frac{ \rho}{ t^2} + \frac{2 \| \chi_1' \|_{\infty}}{t} ,
\end{equation}
where we used  $\| \partial_r w_{0,0} \|_{C[0,1]} \leq 3/2$.
Calculating the derivative with respect to $r$ using Leibniz and estimating
the resulting expression with the help of \eqref{eq:babyestimate1},  \eqref{eq:ineqforW1}
\eqref{eq:ineqforH1}, \eqref{eq:ineqforW11}, and \eqref{eq:ineqforH111}  the Inequality  \eqref{eq:vestimate2}
follows for $k=0$.

Next we show \eqref{eq:vestimate2} for $k \geq 1$.
It follows from Lemma \ref{lem:diff} (b) that
$ \beta \mapsto \frac{\chib_\rho^2}{w_{0,0}(\beta)} $ is in $C^k(\R , \WW_{0,0}^\# )$.
We  use \eqref{eq:faadibruno} to calculate the derivative of
${\chib}_\rho^2/w_{0,0}(\beta)$ with respect to $\beta$,
\begin{eqnarray} \label{eq:parbetaF}
\partial_\beta^l \frac{{\chib}_\rho^2}{w_{0,0}(\beta)} =
\sum_{X \in P_l} |X|! (-1)^{|X|} \frac{ \chib_\rho^2}{(w_{0,0}(\beta))^{|X|+1}} \prod_{x \in X} \partial_\beta^{|x|} w_{0,0}(\beta)  .
\end{eqnarray}
The derivative  in \eqref{eq:parbetaF} is with respect to the $C[0,1]$ norm.
To estimate  the right hand side of \eqref{eq:parbetaF}  we use  \eqref{eq:ineqforH11}
 that by assumption  $\| \partial_\beta^j w_{0,0}(\beta) \|_{C[0,1]} \leq 5 \rho/8$.
It follows that there exits a finite constant, $C_{F,l}$, independent of $\rho$
such that
\begin{equation} \label{eq:estonbetaderofw00}
\left\| \partial_\beta^l  \frac{{\chib}_\rho^2}{w_{0,0}(\beta)} \right\|_{C[0,1]} \leq \frac{C_{F,l}}{t}  ,
\end{equation}
and   ${C_{F,0}}=1$.  Using \eqref{eq:faadibruno} we find
\begin{eqnarray}
 \lefteqn{
D_r \partial_\beta^l    \frac{{\chib}_\rho^2}{w_{0,0}(\beta)} } \nonumber \\
 &&=
\sum_{X \in P_l} |X|! (-1)^{|X|} \frac{ \chib_\rho}{(w_{0,0}(\beta))^{|X|+1}}  D(w_{0,0}(\beta),|X|,\chib_1,\rho) \prod_{x \in X} \partial_\beta^{|x|} w_{0,0}(\beta) \nonumber
 \\
&&
+
\sum_{X \in P_l} |X|! (-1)^{|X|} \frac{ \chib_\rho}{(w_{0,0}(\beta))^{|X|+1}}  \sum_{x \in X}
( \partial_r \partial_\beta^{|x|}  w_{0,0}(\beta) )  \prod_{x' \in X, x' \neq x} \partial_\beta^{|x'|} w_{0,0}(\beta) , \label{eq:diffofprpbeta}
\end{eqnarray}
where we wrote
$$
D(w_{0,0},m,\chib_1,\rho) := \frac{ 2  \partial_r \chib_1(\cdot /\rho) }{\rho} -
( m +1) \frac{\chib_\rho}{w_{0,0}} \partial_r w_{0,0} .
$$
We estimate
\begin{equation} \label{eq:Destimate}
\| D(w_{0,0},m,\chib_1,\rho) \|_\infty \leq \frac{2}{\rho}\| \partial_r \chib_1 \|_\infty + (m+1) \frac{8}{\rho},
\end{equation}
where we used that by assumption it follows that $\| \partial_r w_{0,0} \|_\infty \leq 3/2$.
The derivative  in  \eqref{eq:diffofprpbeta} is with respect to the $C[0,1]$ norm.
Inserting
\eqref{eq:Destimate}  into \eqref{eq:diffofprpbeta} we find for $s + \rho r \in [0,1]$
\begin{equation} \label{eq:estonbetaderofw00derr}
\left|  D_r \partial_\beta^l \left(  \frac{\chib_{\rho}^2}{w_{0,0}(\beta)}(s + \rho r )\right)  \right|
\leq
t^{-1} C_{F,l}  ( 2 \| \partial_r \chib_1\|_\infty + ( l + 1 ) 8 ) +  t^{-1} l C_{F,l} .
\end{equation}

Next  observe that $v_{\umm,\upp,\unn,\uqq}[\cdot]$ is given as
a multilinear expression
of kernels $(w_{m,n})_{m+n\geq 1}$ and $\frac{\chib_\rho}{w_{0,0}}$.
It follows from Lemma \ref{lem:diff} that
$\beta \mapsto   v_{\umm,\upp,\unn,\uqq}[w(\beta)]$ is in  $C^k(\R; \WW_{|\umm|,|\unn|})$
and that  Leibniz rule for higher derivatives  \eqref{eq:higherleibniz}  is applicable
to calculate derivatives  $D_\beta^l v_{\umm,\upp,\unn,\uqq}[w(\beta)]$.
We thus apply   \eqref{eq:higherleibniz} and estimate the resulting expression using  \eqref{eq:babyestimate1}.
To this end we use
\begin{eqnarray} \label{eq:ineqforW1111}
\esssup_{K^{(m,n)}  }  \sup_{r \in [0,1]} \sum_{s=0}^1
 \| \partial_r^s  W_{p,q}^{m,n}[ \partial_\beta^l w](r,K^{(m,n)}) \|_{\rm op} \leq
\frac{ \| \partial_\beta^l   w_{p+m,q+n}  \|^\# }{\sqrt{p! q!}} ,
\end{eqnarray}
which follows from
 \eqref{eq:ineqforW1} and  \eqref{eq:ineqforW11}.
Using
 \eqref{eq:estonbetaderofw00},
\eqref{eq:estonbetaderofw00derr}, and \eqref{eq:ineqforW1111} Inequality \eqref{eq:vestimate2} now follows
from the following observation. The right hand side of \eqref{eq:higherleibniz}
contains $L^k$ terms. Each term contains at most $k$ factors involving a derivative.
\end{proof}

\vspace{0.5cm}

\noindent
{\it Proof of Theorem \ref{thm:analytbasic}}.
First observe that by Lemma \ref{lem:diff} (b)
\begin{equation} %\label{eq:ckoffw0011}
\left[ (s, \beta ) \mapsto \frac{\chib_\rho^2}{w_{0,0}(s, \beta)} \right] \in C^{\omega,k}(S \times \R , \WW_{0,0}^\# ) .
\end{equation}
It now follows from part (a) of the same Lemma that
the map  $(s, \beta)  \to v_{\umm,\upp,\unn,\uqq}[w(s,\beta)]$ is in
$C^{\omega,k}(S \times \R; \WW_{|\umm|,|\unn|}^{\#})$.
Using the estimate of Lemma   \ref{codim:thm3first} one can  show
the same way as  in  \cite{HH10-1} Theorem 31 that $\mathcal{R}_\rho^\#(w(s,\beta))$  is given
as a sum which is uniformly  convergent on subsets which constitute an open covering of $\R \times S$ and
that the sum is uniformly bounded.
This is done in Appendix F.
\qed

\begin{lemma}  \label{lem:estonbderofinverse}
Let $0 < \rho \leq 1/4$ and assume  $w \in \mathcal{B}^{(k)}(\cdot,\delta,\cdot)$, with $\delta \leq \rho/32$.
Then $I_\rho \in C_B^{k,\omega}(\R \times D_{1/2})$ and
\begin{align} \label{eq:lemonI1}
\sup_{(\beta,z) \in \R \times D_{1/2}} | \partial_z I_\rho(\beta,z) | &\leq     \frac{16 \rho}{15} .
\end{align}
Moreover, there exists a finite constant $C_k$ depending only on $k$, such that
\begin{align} \label{eq:lemonI2}
\max_{1 \leq s \leq k} \sup_{(\beta,z) \in \R \times D_{1/2}} | \partial_\beta^s I_\rho(\beta,z) | &\leq C_k \delta .
\end{align}
\end{lemma}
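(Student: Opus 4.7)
The plan is to exploit the identity $E_\rho[w](z) = -\rho^{-1}\langle\Omega, H(w(z))\Omega\rangle = -\rho^{-1} w_{0,0}(z)(0)$. Setting $\tilde\epsilon(\beta, z) := -(w_{0,0}(\beta, z)(0) + z)$, the hypothesis $w \in \mathcal{B}^{(k)}(\cdot,\delta,\cdot)$ gives $\sup_{z\in D_{1/2}}\|\tilde\epsilon(\cdot, z)\|_{C^k(\R)} \leq \delta$, while analyticity of $w$ in $z$ makes $\tilde\epsilon(\beta,\cdot)$ analytic on $D_{1/2}$. In these terms $\rho E_\rho(\beta,z) = z + \tilde\epsilon(\beta,z)$, and the defining identity $E_\rho(\beta, I_\rho(\beta,z)) = z$ becomes the fixed-point equation
$$
I_\rho(\beta,z) = \rho z - \tilde\epsilon(\beta, I_\rho(\beta,z)). \qquad (\star)
$$

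Since $\rho \leq 1/4$, Lemma \ref{renorm:thm3} guarantees $I_\rho(\beta, z) \in D_{17\rho/32} \subset D_{17/128}$, which is separated from $\partial D_{1/2}$ by distance $> 3/8$; Cauchy's estimate therefore bounds the mixed derivatives $|(\partial_z^j \partial_\beta^a \tilde\epsilon)(\beta, I_\rho)|$ by $C_{a,j} \delta$ for all $a \leq k$ and $j \geq 0$, with $C_{a,j}$ independent of $\rho$. The joint regularity $I_\rho \in C^{k,\omega}_B(\R \times D_{1/2})$ then follows from the analytic/$C^k$ implicit function theorem applied to $F(\beta,z,I) := I + \tilde\epsilon(\beta,I) - \rho z$, whose derivative $\partial_I F = 1 + \partial_z\tilde\epsilon$ is invertible for $\delta$ small enough. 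The bound \eqref{eq:lemonI1} is immediate from \eqref{eq:pzeestimate}: for $z \in D_{1/2}$,
$$
|\partial_z I_\rho(\beta,z)| = \frac{1}{|\partial_z E_\rho(\beta, I_\rho(\beta,z))|} \leq \frac{16\rho}{15}.
$$

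For \eqref{eq:lemonI2} I proceed by induction on $s$. Differentiating $(\star)$ once in $\beta$ gives
$$
\partial_\beta I_\rho = -\frac{(\partial_\beta \tilde\epsilon)(\beta, I_\rho)}{1 + (\partial_z\tilde\epsilon)(\beta, I_\rho)},
$$
and the Cauchy bounds of the previous paragraph yield $|\partial_\beta I_\rho| \leq 2\delta$. Differentiating the equivalent identity $(1 + \partial_z\tilde\epsilon)\,\partial_\beta I_\rho = -\partial_\beta\tilde\epsilon$ (with both sides evaluated at $(\beta, I_\rho(\beta,z))$) a further $s-1$ times and applying the Leibniz rule together with the Faa di Bruno formula produces
$$
(1+\partial_z\tilde\epsilon)\,\partial_\beta^s I_\rho = P_s\bigl(\{(\partial_\beta^a\partial_z^b\tilde\epsilon)(\beta,I_\rho)\}_{a+b\leq s},\, \partial_\beta I_\rho, \ldots, \partial_\beta^{s-1} I_\rho\bigr),
$$
where $P_s$ is a universal polynomial (depending only on $s$) in which every monomial carries either a factor $\partial_\beta^a\tilde\epsilon$ with $a \geq 1$ or a factor $\partial_\beta^t I_\rho$ with $t \geq 1$. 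By the inductive hypothesis each such factor is $O(\delta)$, while the remaining $\partial_z^b \tilde\epsilon$ factors are $O(1)$ by the Cauchy bounds. Hence $|\partial_\beta^s I_\rho| \leq C_s \delta$ with $C_s$ depending only on $s$.

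The main obstacle is the combinatorial verification that every monomial in $P_s$ carries at least one $O(\delta)$ factor. This property propagates inductively because each application of $\partial_\beta$ either lands on an explicit $\beta$ inside $\tilde\epsilon$, promoting $a$ to $a+1$, or lands on $I_\rho$ inside $\tilde\epsilon$ via the chain rule, producing a new factor of the form $\partial_\beta^t I_\rho$ with $t \geq 1$; in either case an $O(\delta)$ factor is preserved.
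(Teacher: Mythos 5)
Your proof is correct and follows essentially the same route as the paper: the inverse function theorem for the joint regularity, \eqref{eq:pzeestimate} for \eqref{eq:lemonI1}, and an induction via Leibniz/Faa di Bruno using the hypothesis $|\partial_\beta^a(w_{0,0}(z,0)+z)|\leq\delta$ together with Cauchy estimates on $D_{17\rho/32}$ for \eqref{eq:lemonI2}. The only organizational difference is that you clear the denominator and differentiate $(1+\partial_z\tilde\epsilon)\,\partial_\beta I_\rho=-\partial_\beta\tilde\epsilon$, while the paper differentiates the quotient $-\partial_1 E/\partial_2 E$ and applies Faa di Bruno separately to the reciprocal; your version is a slight streamlining but the same argument (incidentally, $\dist(D_{17/128},\partial D_{1/2})=47/128$, slightly less than your stated $3/8$, which is inconsequential).
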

\begin{proof} The assumption $w \in \mathcal{B}^{(k)}(\cdot,\delta,\cdot)$ implies that
$E_\rho \in C^{k,\omega}(\R \times D_{1/2})$. By this  and  inequality  \eqref{eq:pzeestimate}
it follows from the inverse function theorem that $I_\rho$ is  in
$C^{k,\omega}(\R \times D_{1/2})$.  Let $(\beta,z) \in \R \times D_{1/2}$.
>From \eqref{eq:identityrelEI} we have
\begin{equation}  \label{eq:identityrelEI2}
E_\rho(\beta,I_\rho(\beta,z) ) = z .
\end{equation}
Differentiating  \eqref{eq:identityrelEI2} with
respect to $z$ we find
\begin{align*}
\partial_z I_\rho(\beta,z) = - \frac{1}{\partial_2 E_\rho(\beta, I_{\rho}(\beta, z) )}  ,
\end{align*}
where $\partial_i$ denotes the derivative with respect to the $i$-th argument (note that $\partial_1$ is a real
derivative and $\partial_2$ is a complex derivative).
By this and   \eqref{eq:pzeestimate} we obtain the bound \eqref{eq:lemonI1}.
Now we show the remaining bounds.  Differentiating  $\rho$  \eqref{eq:identityrelEI2} with
respect to $\beta$, we find
\begin{align} \label{eq:partialbeta1}
\partial_\beta I_\rho(\beta,z) = - \frac{ \partial_1 E(\beta,  I_{\rho}(\beta, z) )   }{\partial_2 E(\beta, I_{\rho}(\beta, z) )} ,
\end{align}
with $E(\beta, z) = \rho E_\rho( \beta, z)$.
This and \eqref{eq:pzeestimate} shows  \eqref{eq:lemonI2} for $k=1$. To show   \eqref{eq:lemonI2}  for $k \geq 2$ we proceed by
induction and use that   the assumption $w \in \mathcal{B}^{(k)}(\cdot,\delta,\cdot)$ implies
\begin{equation} \label{eq:boundonpartal1E}
| \partial_1^{s} E(\beta,z) | \leq \delta
\end{equation}
 for all $1 \leq s \leq k$.
Suppose    \eqref{eq:lemonI2}          holds for $k=n$. We then show that it holds for $k=n+1$.
We  differentiate \eqref{eq:partialbeta1} with respect to $\beta$. Using Leibniz  we obtain
\begin{align*} %\label{eq:partialbeta12}
\partial_\beta^{n+1} I_\rho(\beta,z) =  \sum_{p=0}^n \binom{n}{p} A_p B_{n-p}  ,
\end{align*}
where
\begin{align*}
A_p &:= D_\beta^{p}  \partial_1 E(\beta, I_{\rho}(\beta, z) )  , \\
B_p &:= D_\beta^{p}  i(  \partial_2 E(\beta,  I_{\rho}(\beta, z) )  ,
\end{align*}
with $i(z) := - z^{-1}$. Now  using  \eqref{eq:faadibruno}, we find
\begin{align*}
A_p =  \sum_{q=0}^p \binom{p}{q} \sum_{X \in P_{q}}  \partial_1^{1+ p - q}   \partial_2^{|X|}  E(\beta, I_{\rho}(\beta, z) ) \prod_{x \in X} \partial_\beta^{|x|} I_\rho(\beta,z)
\end{align*}
Using \eqref{eq:boundonpartal1E}, analyticity of $E_\rho$ in the second argument,
and  the induction Hypothesis it follows that $|A_p| \leq C  \delta$ for some finite  constant,
$C$, depending only on $p$. To this end we note that derivatives $\partial_2$ can be estimated
using Cauchys formula and $\ran I_\rho \subset D_{17\rho/32}$, which follows
from    Lemma  \ref{renorm:thm3}.
Using  \eqref{eq:faadibruno} we find that
\begin{align*}
B_p =  \sum_{X \in P_p} (-1)^{|X|+1} |X|! \left( \partial_2 E(\beta, I_\rho(\beta, z) ) \right)^{-|X|-1} \prod_{x \in X} D_\beta^{|x|}
\partial_2 E(\beta, I_\rho(\beta,z)) .
\end{align*}
By \eqref{eq:pzeestimate} and  \eqref{eq:boundonpartal1E} we now see, similarly as for $A_p$, that
$|B_p | \leq C$ for some finite constant $C$  depending only on $p$.
\end{proof}

\vspace{0.5cm}

\noindent
{\it Proof of Theorem \ref{thm:analytbasic2}}.
By assumption it follows that $E_\rho \in C^{\omega,k}(S \times D_{1/2} \times \R)$. By the inverse
function theorem and  \eqref{eq:pzeestimate} it follows that
$I_\rho \in C^{\omega,k}(S \times D_{1/2} \times \R)$. Moreover by Lemma  \ref{renorm:thm3}
\begin{equation}
 \ran I_\rho \subset D_{17\rho/32} . \label{eq:subsetcond}
\end{equation}
For $\zeta \in D_{17 \rho /32}$ we have
\begin{equation}
\| w(s ,\zeta, \beta) \|_{C^k(\R)} \leq \| w(s, \zeta, \beta) + \zeta \|_{C^k(\R)} +
\| \zeta \|_{C^k(\R)} \leq  \frac{5 \rho}{8} .
\end{equation}
Thus we can apply Theorem  \ref{thm:analytbasic} for $w |_{S \times D_{17\rho/32} \times \R}$ and
conclude that
$$
(s , \zeta, \beta) \mapsto \mathcal{R}_\rho^\#(w(s ,\zeta, \beta) )
$$
is in $C^{\omega,k}(S \times D_{17\rho/32} \times \R ; \WW_\xi^\#)$. By  \eqref{eq:subsetcond}
it follows from the chain rule that
$$
(s,z,\beta) \mapsto     \mathcal{R}_\rho (w(s,\beta))(z) =
 \mathcal{R}_\rho^\#(w(s,\zeta,\beta) ) |_{
\zeta = I_\rho(s,z,\beta)}
$$
is in  $C^{\omega,k}(S \times D_{1/2} \times \R; \WW_\xi^\# )$.
\qed

\vspace{0.5cm}

Theorem \ref{contk1}, which is proven  in Section \ref{app:contraction},
 allows us to iterate the extended renormalization transformation on the extended balls.
Let us introduce the following Hypothesis.

\vspace{0.5cm}

\noindent
($R^{(k)}$) \quad  Let $\rho, \xi, \epsilon_0$ are positive numbers such that the contraction property \eqref{codim:thm1:eq:ck}
holds and $\rho \leq  1/4$, $\xi \leq 1/4$ and $\epsilon_0 \leq \rho/32$.

\vspace{0.5cm}

Recall that by Theorem \ref{contk1} and Theorem  \ref{codim:thm1}
there exists a  nonempty set of parameters for which the Hypothesis $(R)$ and $(R^{(k)})$ are  satisfied.

\begin{theorem} \label{thm:bcfsmain} Let $k \in \N_0$. Assume Hypothesis ($R$) and  ($R^{(k)}$).
Then for $\epsilon_0>0$ and $\rho>0$ sufficiently small there  exist functions
\begin{align*}
&e_{(0)}[ \cdot ] : \mathcal{B}_0(\epsilon_0/2,\epsilon_0/2,\epsilon_0/2) \to D_{1/2} \\
&\psi_{(0)}[ \cdot ] : \mathcal{B}_0(\epsilon_0/2,\epsilon_0/2,\epsilon_0/2) \to \FF
\end{align*}
such that the following holds.
\begin{itemize}
\item[(a)]  For all $w \in \mathcal{B}_0(\epsilon_0/2,\epsilon_0/2,\epsilon_0/2)$,
$$
{\rm dim} \ker \{ H(w(e_{(0)}[w]) \} \geq 1 ,$$
and $\psi_{(\infty)}[w]$ is a nonzero element in the kernel of  $H(w(e_{(0,\infty)}[w])$.
\item[(b)] If $w$ is symmetric and  $-1/2 <  z < e_{(0)}[w]$, then $H(w(z))$ is bounded invertible.
\item[(c)] The function $\psi_{(0)}[ \cdot ]$ is uniformly bounded with bound
$$\sup_{w \in \mathcal{B}_0(\epsilon_0/2,\epsilon_0/2,\epsilon_0/2)} \| \psi_{(0)}[w] \| \leq 4 e^4 . $$
If $H(w(z)) = H_f - z$, then $\psi_{(0)}[w] = \Omega$.
\item[(d)]  Suppose $w \in \mathcal{B}_0^{(k)}(\epsilon_0/2,\epsilon_0/2,\epsilon_0/2)$.
Then $\beta \to e_{(0)}[w (\beta)]$  and $\beta \to  \psi_{(0)}[w (\beta) ]$ are in $C_B^k(\R)$ and $C_B^k(\R; \FF)$, respectively.
\item[(e)] Let $S$ be an open  subset of $\C$. Suppose we are given a mapping
$(s,z,\beta) \mapsto  w(s,z,\beta)$
in $C_B^{\omega,k}(S \times D_{1/2} \times \R; \WW_\xi^\#)$ such that for all $s \in S$    we have $w(s,\cdot,\cdot) \in \mathcal{B}_0^{(k)}(\epsilon_0/2,\epsilon_0/2,\epsilon_0/2)$.
Then $s \mapsto   ( \beta \mapsto e_{(0)}[w(s)(\beta)])$  and $s \mapsto ( \beta \mapsto \psi_{(0)}[w(s)(\beta)])$ are $C^\omega_B(S ; C_B^k(\R))$and $C_B^\omega(S ; C_B^k(\R; \FF))$,
respectively.
\end{itemize}
\end{theorem}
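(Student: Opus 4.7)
The plan is to obtain $e_{(0)}$ and $\psi_{(0)}$ by iterating the renormalization transformation and taking limits, with parts (a)--(c) essentially reducing to the corresponding statements in \cite{HH10-1} (cf.\ Theorem 32 there) once the ingredients are in place. Concretely, for $w \in \mathcal{B}_0(\epsilon_0/2,\epsilon_0/2,\epsilon_0/2)$ I would set $w^{(0)} := w$ and $w^{(n+1)} := \mathcal{R}_\rho(w^{(n)})$. Theorem \ref{codim:thm1} guarantees that each $w^{(n)}$ lies in $\mathcal{B}_0(\epsilon_0,\epsilon_0/2^{n},\epsilon_0/2^{n})$ (after adjusting constants), so the interaction part $w^{(n)}_{\geq 1}$ and the value $w^{(n)}_{0,0}(z,0)+z$ shrink geometrically, while $\partial_r w^{(n)}_{0,0}-1$ stays small. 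Defining the nested sequence of spectral parameters $z^{(0)} := z$, $z^{(n+1)} := E_\rho[w^{(n)}](z^{(n)})$, the condition that the iteration remains in the disc $D_{1/2}$ for all $n$ selects a unique point $z = e_{(0)}[w]$, characterized equivalently by $e_{(0)}[w] = I_\rho[w]\bigl(e_{(0)}[\mathcal{R}_\rho(w)]\bigr)$.

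For the eigenvector I would use the smooth Feshbach reconstruction: if $\varphi^{(n)}$ is a null vector of $H(w^{(n)}(z^{(n)}))$ on $\HH_{\rm red}$, then $Q_{\chi_\rho}(H(w^{(n-1)}(z^{(n-1)})),H_{0,0}(w^{(n-1)}(z^{(n-1)}))) \Gamma_\rho^\ast \varphi^{(n)}$ is a null vector of $H(w^{(n-1)}(z^{(n-1)}))$, by the isospectrality statement in the Feshbach appendix. Starting from the vacuum $\Omega$, the infinite product of these auxiliary operators applied to $\Omega$ converges in $\FF$ because of the geometric decay of $w^{(n)}_{\geq 1}$, giving $\psi_{(0)}[w]$ with the uniform bound $\leq 4e^4$ obtained by the standard telescoping estimate (see Lemma 34/Theorem 37 in \cite{HH10-1}). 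Parts (b) (bounded invertibility below $e_{(0)}[w]$ for symmetric $w$) and the normalization $\psi_{(0)}[w]=\Omega$ when $H(w(z))=H_f-z$ follow exactly as in \cite{HH10-1}.

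For parts (d) and (e), which are the genuinely new content, the idea is to repeat the whole iteration but now in the extended Banach spaces. Theorem \ref{contk1} ensures that if $w(\beta) \in \mathcal{B}_0^{(k)}(\epsilon_0/2,\epsilon_0/2,\epsilon_0/2)$, then $w^{(n)}(\beta) \in \mathcal{B}_0^{(k)}(\cdot,\delta_n,\delta_n)$ with $\delta_n \to 0$ geometrically; in particular the $C^k$-norm in $\beta$ of $w^{(n)}_{0,0}(\cdot,0)+z^{(n)}$ decays. Combined with Lemma \ref{lem:estonbderofinverse}, which bounds the $\beta$-derivatives of $I_\rho[w^{(n)}]$ by the small quantity $\delta_n$, one controls $\partial_\beta^l z^{(n+1)}$ via the chain rule and shows inductively that the sequence $z^{(n)}(\beta)$ is Cauchy in $C^k_B(\R)$. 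An analogous telescoping argument for the eigenvector factors, using the $k$-fold $\beta$-differentiability of $Q_{\chi_\rho}$ (which will need to be verified by the same kind of expansion as in Theorem \ref{ini:thm1b}, but for the renormalized step), yields $\psi_{(0)}[w(\cdot)] \in C^k_B(\R;\FF)$. For part (e), analyticity in $s$ is transported through the iteration by Theorem \ref{thm:analytbasic2}, which preserves the joint class $C^{\omega,k}_B$, and the uniform control above upgrades pointwise analyticity to analyticity with values in $C_B^k(\R)$ or $C_B^k(\R;\FF)$ via Lemma \ref{lem:weakstronanalyt1}.

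The main obstacle I expect is the $C^k(\R)$-control of the sequence $z^{(n)}(\beta) = E_\rho[w^{(n-1)}(\beta)]\circ\cdots\circ E_\rho[w^{(0)}(\beta)](z)$: applying $\partial_\beta^l$ to this $n$-fold composition generates via Fa\`a di Bruno a number of terms growing like $n^{k}$, so the geometric decay of $\delta_n$ must beat this polynomial factor. This is precisely what the estimate $C_L \leq c(1+L^k)$ in Lemma \ref{codim:thm3first} together with the contraction ratio $1/2$ in Theorem \ref{contk1} is designed to give; the tight bookkeeping of how many derivatives fall on each factor, already used in the proofs of Theorem \ref{ini:thm1b} and Lemma \ref{initial:thmE22}, will be the model for the required estimate here.
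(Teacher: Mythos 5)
Your overall strategy matches the paper: iterate $\mathcal{R}_\rho$ in the extended balls $\mathcal{B}_0^{(k)}$ via Theorem~\ref{contk1}, use the geometric decay of $\delta_n$ to get Cauchy convergence in $C^k(\R)$ of the spectral parameter and the eigenvector, and then transport analyticity in $s$ through Theorem~\ref{thm:analytbasic2}. Parts (a)--(c) do reduce to \cite{HH10-1}, as you say.

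Where your plan diverges from the paper is in how you propose to control the $C^k(\R)$-norm of the spectral-parameter sequence. You worry that differentiating the $n$-fold composition gives a combinatorial explosion of Fa\`a di Bruno terms ($\sim n^k$) that geometric decay must beat, and suggest invoking the $C_L\leq c(1+L^k)$ bound; that bound, however, governs a single renormalization step, not a composition of $n$ coordinate changes. The paper sidesteps the entire bookkeeping problem by reformulating the definition $e_{(n,m)} = K_n\circ\cdots\circ K_m(0)$, with $K_l[f](\beta):=I_\rho[w^{(l)}(\beta)](\beta,f(\beta))$, as a chain of maps on the fixed ball $\mathcal{E}(3/8,1)\subset C^k(\R)$, and proves (using Lemma~\ref{lem:estonbderofinverse} and the fact that every $\partial_z$-derivative of $I_\rho$ picks up a factor $\rho$ and every $\partial_\beta$-derivative a factor $\epsilon$) that each $K_l$ is a $\tfrac12$-contraction of this ball in $C^k$-norm. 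Cauchy in $C^k(\R)$ is then immediate, with no polynomial-in-$n$ factor to chase. (Note also a slip in your write-up: the relevant composition is of the inverse maps $I_\rho$, not the forward maps $E_\rho$ as you wrote in the obstacle paragraph.) For the eigenvector, your intuition that a polynomial-in-$m$ factor has to be beaten by geometric decay is indeed what happens ($(m+1)^l\cdot 2^{-m}$), but you overestimate the difficulty of the $\beta$-differentiability of $Q_n$: unlike the initial Feshbach step (Theorem~\ref{ini:thm1b}), where unbounded operators force the exponential-weight $e^{\gamma\langle x\rangle}$ argument, the renormalized $T_n$, $W_n$ are bounded and uniformly controlled, so differentiability of $Q_n$ follows from Theorem~\ref{thm:injective}, differentiability of the kernels, differentiability of the operator inverse, and the uniform bound $\|A_n^{-1}\|\leq 16/\rho$; no Combes--Thomas-type machinery is needed. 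With those two corrections your plan would go through as the paper does.
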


Assumption ($R$) allows us to iterate the renormalization transformation as follows,
\begin{equation} \nonumber
\mathcal{B}_0(\sfrac{1}{2}\epsilon_0 , \sfrac{1}{2} \epsilon_0 , \sfrac{1}{2} \epsilon_0 ) \stackrel{\mathcal{R}_\rho}{\longrightarrow}
\mathcal{B}_0( [ \sfrac{1}{2}+ \sfrac{1}{4} ] \epsilon_0 , \sfrac{1}{4} \epsilon_0 , \sfrac{1}{4} \epsilon_0 ) \stackrel{\mathcal{R}_\rho}{\longrightarrow} \cdots
\mathcal{B}_0( \Sigma_{l=1}^n \sfrac{1}{2^l} \epsilon_0 , \sfrac{1}{2^n} \epsilon_0 , \sfrac{1}{2^n} \epsilon_0 )
 \stackrel{\mathcal{R}_\rho}{\longrightarrow} \cdots   .
\end{equation}
For $w \in \mathcal{B}_0(\epsilon_0/2,\epsilon_0/2,\epsilon_0/2)$ and $n \in \N_0$, we define
$$
w^{(n)} := \mathcal{R}_\rho^n ( w ) \in \mathcal{B}_0(  \epsilon_0 , \sfrac{1}{2^n} \epsilon_0 , \sfrac{1}{2^n} \epsilon_0 ) .
$$
We introduce the definitions
\begin{align*}
&E_{n,\rho}[w] := E_\rho[w^{(n)}] = \rho^{-1} E[w] \\
 & U_n[w] := U[w^{(n)}] := \{ z \in D_{1/2} | | E[w^{(n)}](z) | < \rho/2 \}
\end{align*}
By Lemma  \ref{renorm:thm3}  the map
$$
J_n[w] := E_{n,\rho}[w] : U_n[w] \to D_{1/2} , \quad z \mapsto E_{n,\rho}[w](z) .
$$
is an analytic bijection and $J_n[w]^{-1} : D_{1/2} \to U_n[w] \subset D_{1/2}$. For $0 \leq n \leq m$ we define
$$
e_{(n,m)}[w] := J_n[w]^{-1} \circ \cdots \circ J_m[w]^{-1}(0) .
$$
It has been shown in \cite{BCFS03}, see also \cite{HH10-1}, that the following limits exist
\begin{equation} \label{eq:defofe}
e_{(n,\infty)}[w] := \lim_{m \to \infty} e_{(n,m)}[w]
\end{equation}
We define the vectors in $\FF$,
of
$$
\psi_{(n,m)}[w] = Q_n[w] \Gamma_\rho^* Q_{n+1}[w] \Gamma_\rho^*  \cdots Q_{m-1} \Omega ,
$$
with
$$
Q_n[w] = \chi_\rho - \chib_\rho (H_n[w])_{\chib_\rho}^{-1} \chib_\rho W_n[w] \chi_\rho ,
$$
where
\begin{align*}
&H_n[w] := H(w^{(n)}(e_{(n,\infty)}[w]) ) \\
&T_n[w] := w_{0,0}^{(n)}(e_{(n,\infty)}[w])(H_f)  \\
&W_n[w] := H_n[w] - T_n[w] .
\end{align*}
It has been shown in \cite{BCFS03}, see also \cite{HH10-1}, that the following limit
exists
\begin{equation} \label{eq:defofpsi}
\psi_{(n,\infty)}[w] := \lim_{m \to \infty} \psi_{(n,m)}[w]
\end{equation}
and that $H_n[w] \psi_{(n,\infty)}[w]   = 0$. This implies part (a) of Theorem \ref{thm:bcfsmain}, with $e_{(0)}[w] = e_{(0,\infty)}[w]$ and
$\psi_{(0)}[w] = \psi_{(0,\infty)}[w]$. Part (b) has been shown in \cite{HH10-1}. Moreover, in \cite{HH10-1}, the bound
$\sup_{w \in \mathcal{B}_0(\epsilon_0/2,\epsilon_0/2,\epsilon_0/2)} \| \psi_{(0)}[w] \| \leq 4 e^4  $ was shown.
The second part of $(c)$ is a direct consequence of the definition of $\psi_{(0)}$.
Now let us show (d).
Assumption ($R^{(k)}$)
allows us to iterate the renormalization transformation as follows,
\begin{equation} \nonumber
\mathcal{B}_0^{(k)}(\sfrac{1}{2}\epsilon_0 , \sfrac{1}{2} \epsilon_0 , \sfrac{1}{2} \epsilon_0 )  \stackrel{\mathcal{R}_\rho}{\longrightarrow}
\mathcal{B}_0^{(k)}( [ \sfrac{1}{2}+ \sfrac{1}{4} ] \epsilon_0 , \sfrac{1}{4} \epsilon_0 , \sfrac{1}{4} \epsilon_0 )    \stackrel{\mathcal{R}_\rho}{\longrightarrow}    \cdots
\mathcal{B}_0^{(k)}( \Sigma_{l=1}^n \sfrac{1}{2^l} \epsilon_0 , \sfrac{1}{2^n} \epsilon_0 , \sfrac{1}{2^n} \epsilon_0 )   \stackrel{\mathcal{R}_\rho}{\longrightarrow} \cdots .
\end{equation}
We view
$w \in \mathcal{B}_0^{(k)}(\sfrac{1}{2}\epsilon_0 , \sfrac{1}{2} \epsilon_0 , \sfrac{1}{2} \epsilon_0 ) $ as a function of $\beta$.
Now $e_{(n,m)}[w(\beta)]$ and $\psi_{(n,m)}[w(\beta)]$ are   functions of $\beta$ as well as their limits as $m$ tends to infinity.
First we  show that $e_{(n,m)}[w(\beta)] \to e_{(n,\infty)}[w(\beta)]$ converges uniformly in $C^k(\R)$ for any $n$. This will then imply that $e_{(n,\infty)}$ is
in $C^k$.
We introduce for $\gamma, \delta   > 0$ the balls
$$
\mathcal{E}(\gamma, \delta) := \{ f \in C^k(\R;\C) |  \| f \|_\infty < \gamma  ,  \max_{1 \leq l \leq k} \| \partial_1^l  f \|_\infty < \delta \} .
$$
Let $w \in \mathcal{B}^{(k)}(\cdot , \epsilon , \cdot)$ with $\epsilon \leq \rho/32$.
We define a mapping $K[w]$ on $\mathcal{E}(1/2,\delta) $   by
$$
(K[w](f))(\beta) := I_\rho(\beta, f(\beta)).
$$
>From Lemma  \ref{renorm:thm3} it follows that  $K[w] ( \mathcal{E}(1/2,\delta)) \subset \mathcal{E}(3/8,\infty)$.
Using Faa di Bruno's formula we find
\begin{align*}
D_\beta^s I_\rho(\beta, f (\beta)) &= \sum_{p=0}^s \binom{s}{p}
\sum_{X \in P_p} \partial_1^{s-p} \partial_2^{|X|} I_\rho(\beta, f (\beta)) \prod_{x \in X} \partial_\beta^{|x|} f(\beta) .
\end{align*}
We use this to estimate the following difference
\begin{eqnarray}
\lefteqn{ D_\beta^s I_\rho(\beta, f (\beta)) - D_\beta^s I_\rho(\beta, g (\beta)) } \nonumber \\
&& = \Sigma'  \partial_1^{s-p} \partial_2^{|X|} \left[ I_\rho(\beta, f (\beta))  -   I_\rho(\beta, g (\beta)) \right]
  \prod_{x \in X} \partial_\beta^{|x|} f(\beta) \nonumber \\
&& + \Sigma' \partial_1^{s-p} \partial_2^{|X|} I_\rho(\beta, g (\beta)) \left[
 \prod_{x \in X} \partial_\beta^{|x|} f(\beta) -    \prod_{x \in X} \partial_\beta^{|x|} g(\beta)  \right] ,  \label{eq:faadibruno888}
\end{eqnarray}
where we used the abbreviation $\Sigma' =  \sum_{p=0}^s \binom{s}{p} \sum_{X \in P_p}$.
To estimate \eqref{eq:faadibruno888} we use that
\begin{equation} \label{eq:diff33333}
\left| I_\rho(\beta, f (\beta))  -   I_\rho(\beta, g (\beta)) \right| \leq
 \sup_{z \in D_{1/2} }| \partial_2 I_\rho(\beta,z ) | | f (\beta) - g(\beta) |
\end{equation}
and that for  $f,g \in \mathcal{E}(1/2,1)$ we have
\begin{equation}
\left|  \prod_{x \in X} \partial_\beta^{|x|} g(\beta)  -  \prod_{x \in X} \partial_\beta^{|x|} f(\beta) \right|  \leq  C_{|X|}  \| f - g \|_{C^k(\R)} ,
\end{equation}
for some constant depending only on the number of elements of the partition $X$.
On the other hand by Lemma \ref{lem:estonbderofinverse}  there exists a constant $C$
such that for all $(\beta,z) \in \R\times D_{3/8}$, we have
\begin{equation} \label{eq:estonbetader888}
\max_{1 \leq  l'  \leq k + 1} | \partial_z^{l'} I_\rho(\beta,z) | \leq C  \rho , \quad
\max_{1 \leq  l \leq k} \max_{0 \leq  l'  \leq k + 1}  |\partial_\beta^l \partial_z^{l'} I_\rho(\beta,z) | \leq {C}\epsilon  ,
\end{equation}
where we used the analyticity of $I_\rho$ in its second argument.
Using   \eqref{eq:estonbetader888}--\eqref{eq:diff33333}  to estimate  \eqref{eq:faadibruno888}
 it follows that for $\epsilon$ and $\rho$ sufficiently small we have
\begin{equation}  \label{eq:estonbetader999}
K[w] (\mathcal{E}(3/8,1)) \subset  \mathcal{E}(3/8,1) , \quad \| K[w] f - K[w] g \|_{C^k(\R)}  \leq \frac{1}{2}  \| f - g \|_{C^k(\R)}
\end{equation}
 for all $f,g \in  \mathcal{E}(3/8,1)$.
For the sequence of kernels $w^{(l)} \in  \mathcal{B}_0^{(k)}(\cdot , 2^{-l} \epsilon_0 , \cdot)$
define  $K_l := K[w^{(l)}]$. By definition we have
$$
e_{(n,m)} = K_n \circ K_{n+1} \circ \cdots \circ K_m(0) ,
$$
where $0$ denotes the zero function.
Thus if we choose $\rho$ and $\epsilon_0$ sufficiently small, then it follows from  \eqref{eq:estonbetader999}
that
$$
\| e_{(n,m)} -  e_{(n,m+l)} \|_{C^k(\R)} \leq 2^{-(m-n)-1} ,
$$
and thus $e_{(n,m)} \to e_{(n,\infty)}$ uniformly  in $C^k(\R)$ as $m \to \infty$ for any $n$.
Since $e_{(n,n)} = 0$ it follows that
\begin{equation} \label{eq:ckboundone}
\| e_{(n,\infty)} \|_{C^k(\R)} \leq 2 .
\end{equation}

Thus $e_{(n,m)}[w(\beta)] \to e_{(n,\infty)}[w(\beta)]$ converges uniformly in $C^k(\R)$ for any $n$.
Next we show that the groundstate eigenvector $\psi_{(0,\infty)}[w(\beta)]$ is $C^k$ in $\beta$.
For notational compactness we write $\psi_{(n,m)}(\beta)$ for $\psi_{(n,m)}[w(\beta)]$ and similarly
$e_{(n,m)}(\beta)$ for $e_{(n,m)}[w(\beta)]$.
We set    $\tilde{W}_n(\beta,z) := W[w^{(n)}(\beta,z)]$ with $w^{(n)}(\beta,z) = w^{(n)}(\beta)(z)$.
Observe that with this notation
${W}_n(\beta) := W_n[w(\beta)]  = \tilde{W}_n(\beta,e_{(0,\infty)}(\beta))$.   We use analogous definitions for   ${T}_n$, ${W}_n$, and ${Q}_n$.
Let $(\beta,z) \in \R \times D_{1/2}$.
We estimate the derivatives with respect to $\beta$
of
$$
\psi_{(n,m+1)} - \psi_{(n,m)} = Q_n \Gamma_\rho^* Q_{n+1} \cdots Q_{m-1} \Gamma^*_\rho ( Q_m - \chi_\rho) \Omega .
$$
%We suppress the dependence on $g$. All our estimates will be uniform in $g$ for  $|g|$ small. Note
Let
$$
A_n := \left. \left( T_n + \chib_\rho W_n \chib_\rho \right) \right|_{\ran \chib_\rho} .
$$
Observe that
\begin{equation} \label{eq:estoninvAn}
\| A_n^{-1} \| \leq 16/\rho .
\end{equation}
This can be seen using  % \eqref{eq:estonHzerozero} and
$\| W_n \| \leq 2^{-n-1} \epsilon_0 \leq \rho/16$,
see  \cite{BCFS03,HH10-1} for details.
We have already proved estimates of the form
$$
| \partial_\beta^l e_{(n,\infty)} | \leq c_l
$$
for $n \in \N_0$ which we will use without comment. We also have estimates of the form
\begin{align}
& \| w_{\geq 1}^{(n)} \|_\xi^{(k)} \leq \frac{\epsilon_0}{  2^n} \label{Junstar1} \\
& \| w_{0,0}^{(n)} \|^{(k)}  \leq \frac{\epsilon_0}{2^n} + \frac{1}{2} + \epsilon_0 + 1 \leq 2 \epsilon_0 + \frac{3}{2} . \label{Junstar2}
\end{align}
By the inequality given in Theorem \ref{thm:injective} and the differentiability of the integral kernels
it follows that $T_n$ and $W_n$ are differentiable functions of  $\beta$ (w.r.t. the operator norm topology) with
uniformly bounded derivatives.
And hence also $Q_n$ and $\psi_{(n,m)}$.
We have
$$
D_\beta^l ( Q_n - \chi_\rho) = - \sum_{l_1 + l_2 = l} \frac{l!}{l_1! l_2!} \chib_\rho \left[ D_\beta^{l_1} A_n^{-1} \right]
\chib_\rho D_\beta^{l_2} W_n \chi_\rho .
$$
It is straight forward to verify that for all $l \leq k$,
$$
\|  D_\beta^{l} A_n^{-1}  \| \leq C .
$$
To see this we note that taking inverses is a differentiable mapping with respect to the operator norm topology,
the first $k$ derivatives of $T_n$ and $W_n$ with respect to $\beta$ are uniformly bounded, and \eqref{eq:estoninvAn}.
Since
$$
\left. D_\beta W_n  \right\vert_{\beta}= \left. \left( \frac{\partial \tilde{W}_n}{\partial \beta} + \partial_z \tilde{W}_n \partial_\beta e_{(n,\infty)}  \right) \right\vert_{(\beta, e_{(n,\infty)}(\beta))}
$$
it is clear that if we can show that for $l,l' \leq k$
\begin{equation} \label{Junstar3}
\| \partial_\beta^l \partial_z^{l'} \tilde{W}_n (\beta,e_{(n,\infty)}(\beta))\| \leq \frac{c_l}{2^n} ,
\end{equation}
it will follow that for $l \leq k$,
\begin{equation} \label{eq:estonQ888}
\| D_\beta^l( Q_n - \chi_\rho) \| \leq \frac{c_l}{2^n} .
\end{equation}
The Cauchy integral formula gives
$$
\partial_\beta^l \partial_z^{l'} \tilde{W}_n(\beta,z)
= \frac{l'!}{2 \pi i} \int_{|\zeta| = 1/2 - \epsilon }  \frac{\partial_\beta^l \tilde{W}_n(\beta,\zeta )}{(\zeta - z)^{l'+1}}  d\zeta
$$
If $|z| < 1/2 - \epsilon $. Since $e_{(n,\infty)} \in D_{5 \rho /8}$ we obtain from \eqref{Junstar1}
$$
\left\|( \partial_\beta^l \partial_z^{l'} \tilde{W}_n)(\beta, e_{(n,\infty)}(\beta)) \right\| \leq \frac{(l')! {\epsilon_0}}{{2^{n+1}}(1/2 - 5 \rho / 8 )^{l'+1} } \leq
\frac{c}{2^n} .
$$
Thus we have shown \eqref{eq:estonQ888}. Using this inequality we find for $l \leq k$, with $p = m-n+1$,
\begin{align} \label{eq:ckboundonpsi0}
 \| D_\beta^l ( \psi_{(n,m+1)} - \psi_{(n,m)} ) \| &= \|   D_\beta^s  Q_{n} \Gamma_\rho^* Q_{n+1} \cdots Q_{m-1} \Gamma_\rho^* ( Q_{m} - \chi_\rho ) \Omega \| \\
&= \sum_{\underline{l} \in \N_0^{p} : | \underline{l} | = l } \frac{l!}{\underline{l}!}
 (D_\beta^{l_{1}} Q_n) \Gamma_\rho^* \cdots ( D_{\beta}^{l_{p-1}} Q_{m-1} ) \Gamma_\rho^* ( D_{\beta}^{l_p}  ( Q_{m} - \chi_{\rho} ) ) \Omega \nonumber \\
& \leq (m-n+1)^l \prod_{j=n}^{m-1} \left( 1 + \frac{C}{2^{j}} \right) \frac{C}{2^{m}} \leq (m+1)^{l} C 2^{-m} \exp(  C \sum_{j=1}^\infty 2^{-j}) . \nonumber
\end{align}
 \label{eq:estonQ111}
This implies that  $\psi_{(n,m)}[w(\beta)] \to \psi_{(n,\infty)}[w(\beta)]$ converges uniformly in $C^k(\R)$ for any $n$.
Since $\psi_{(n,n)} = \Omega$, it follows that
\begin{equation} \label{eq:ckboundonpsi1}
 \| \psi_{(n,\infty)} \|_{C^k(\R)} \leq 1 + C e^{2C} \sum_{m=0}^\infty (m+1)^k 2^{-m} .
\end{equation}
Now (d) follows.

 To show (e) first observe  by Theorem
\ref{thm:analytbasic2}  $(s,z,\beta) \mapsto w^{(n)}(s,z,\beta) = \mathcal{R}_\rho^n(w(s,\beta))(z)$  is
in $C^{\omega,k}_B(S \times D_{1/2} \times \R ; \WW_\xi^\#)$. It follows by  \eqref{eq:pzeestimate}
that $J_n^{-1}  \in C_B^{\omega,k}(S \times D_{1/2} \times \R )$. Thus
$e_{(n,m)} \in C_B^{\omega,k}(S \times \R) \cong  C_B^{\omega}(S ; C^k_B( \R))$.
It follows from the uniform convergence established in (d) that $e_{(n,\infty)} \in C^\omega_B(S ; C_B^k(\R))$.
It now follows from the bound in Theorem \ref{thm:injective} and the chain rule that
$H_n[w], W_n[w]$  are in $C_B^{\omega,k}(S \times \R ; \mathcal{B}(\HH_{\rm red}))$. Since
$H_n[w]$ is bounded invertible on the range of $\chib_\rho$ it follows from the bound \eqref{eq:estoninvAn} that $Q_n[w] \in C_B^{\omega,k}(S \times \R ; \mathcal{B}(\HH_{\rm red}))$.
Thus $\psi_{(n,m)} \in C_B^{\omega,k}(S \times \R ; \HH_{\rm red})
\cong C_B^\omega(S ; C^k(\R ; \HH_{\rm red} ))$. By the uniform convergence established in
\eqref{eq:ckboundonpsi0} it follows that $\psi_{(n,\infty)} \in C_B^\omega(S ; C_B^k(\R ; \HH_{\rm red}))$.

\section{Contraction Estimate}

\label{app:contraction}

In this section we prove  Theorem \ref{contk1}.
 By Lemma \ref{codim:thm3first} we know that there exists a constant $C_\theta$ which
is greater than 1 such that for   $w \in \mathcal{B}^{(\#, k)}(\rho/32,5 \rho/8,\rho/32)$. We have
\begin{equation} \label{eq:vestimate22}
 \max_{0 \leq l \leq k} || \partial_\beta^l  v_{\umm,\upp,\unn,\uqq}[w(\beta)] ||^{\#}   \leq
 C_\theta \left(\frac{16}{\rho} \right)^{L-1}   \prod_{l=1}^{L}
    \frac{ \max_{0 \leq l' \leq k} \| \partial_\beta^{l'}w_{m_l + p_l, n_l + q_l}(\beta) \|^{\#} }  { \sqrt{p_l ! q_l !}}  .
\end{equation}
The crucial point of equation  \eqref{eq:vestimate22}  is that  $\rho^{-1}$ occurs to
a power of at most $L-1$.
This allows us to prove Theorem \ref{contk1} using  similar  estimates as the proof of Theorem 38  \cite{HH10-1}, or Theorem 3.8 in \cite{BCFS03}.
There is an additional complication  due to the $\beta$ dependence of the reparameterization of
the spectral parameter.
We introduce the constant $D_k =  \sum_{l=0}^k \binom{k}{l} \sum_{X \in P_l}   1 $.

\vspace{0.5cm}

\noindent
Let   $0 < \rho \leq  ( k! 16 C_\theta D_k C_k^k  )^{-1} $,  $0 < \xi \leq
\min(1/2, (\frac{C_\theta}{64}  \tau
C_k^k D_k)^{-1/4}) $, and $0 < \epsilon_0 \leq  \min( \frac{\rho}{32},  \frac{1}{D_k 8^{k+1} k!  C_k^k})              $.

\vspace{0.5cm}

\noindent
We assume that  $w \in \mathcal{B}^{(k)}(\epsilon, \delta_1,\delta_2)$ with
 $\epsilon,\delta_1,\delta_2 \in [0, \epsilon_0 )$. Then
 the following estimates hold.

\vspace{0.5cm}

\noindent
\underline{Step 1:}  We have
$$
\| \mathcal{R}_\rho(w)_{\geq 2}  \|_\xi^{(k)} \leq    \frac{1}{2}   \| w_{\geq 2} \|_\xi^{(k)} .
$$

\vspace{0.5cm}
By definition $(\mathcal{R}_\rho w)(\beta,z) = \mathcal{R}_\rho^\#(w(\beta,I_\rho(\beta,z))$.
Taking the derivative with
respect to $\beta$ we obtain
\begin{eqnarray}
D_\beta^l ( \mathcal{R}_\rho w)(\beta,z) && =
 \partial_{\beta}^{l}  \mathcal{R}_\rho^\# (w(\beta,\zeta) )  \Big\vert_{\zeta = I_\rho(\beta,z)}  \nonumber  \\
&&
+ \sum_{ p=1}^{l} \binom{l}{p} \sum_{X \in P_p} \partial_{\beta}^{l-p} \partial_{\zeta}^{|X|} \mathcal{R}_\rho^\# (w(\beta,\zeta))
\prod_{x \in X } \partial_\beta^{|x|} I_\rho(\beta,z)  \Big\vert_{\zeta = I_\rho(\beta,z)} . \label{eq:faaonrenorm1}
\end{eqnarray}
Let us first estimate the first term on the right hand side. To this end let $u \in D_{19 \rho/32}$. Then
$w(\beta,u)  \in \mathcal{B}^{(\#, k)}(\rho/32,5 \rho/8,\rho/32)$ as the following estimate shows,
$$
\|w_{0,0}(\cdot, u) \|_{C^k(\R)}  \leq \| w_{0,0}(\cdot, u)  + u \|_{C^k(\R)}  + |u |
\leq 5 \rho /8 .
$$
By \eqref{eq:renormalizedkernels01}   we find for $M + N \geq 2$,
\begin{eqnarray*}
\lefteqn{ \| \partial_\beta^l  \mathcal{R}_\rho^\#(w(\beta,u))_{M,N}  \|^{\#} } \\
&\leq & \sum_{L=1}^\infty \sum_{ \substack{ (\umm,\upp,\unn,\uqq) \in \N_0^{4L}:  \\  |\umm| = M  , | \unn| = N  , m_l+p_l+n_l+q_l \geq 1 }}
 \rho^{|\umm|+|\unn|-1}
      \prod_{l=1}^L     \binom{ m_l + p_l }{ p_l } \binom{ n_l + q_l }{  q_l }    \| \partial_\beta^l v_{\umm,\upp,\unn,\uqq}[w(\beta,u)] \|^{\#} .  \nonumber \\
 \end{eqnarray*}
Inserting this below and using  \eqref{eq:vestimate22}, we find with $\tau := 16/\rho$,
\begin{eqnarray}
\lefteqn{  \left\| \partial_\beta^l \left( \mathcal{R}_\rho^\#({w}(\beta,u) \right)_{\geq 2} \right\|_\xi^{\#} } \nonumber \\
&& =  \sum_{M+N \geq 2} \xi^{-(M+N)}  \max_{0 \leq l \leq k}   \| \partial_\beta^l  \mathcal{R}_\rho^\#(w(\beta,u))_{M,N}  \|^{\#}            \nonumber \\
&&\leq   \sum_{L=1}^\infty          \sum_{ \substack{ (\umm,\upp,\unn,\uqq)  \in \N_0^{4L}: \\
   |\umm| + | \unn| \geq 2 , m_l+p_l+n_l+q_l \geq 1} }
\rho^{-1}  \left(  2 \rho\right)^{|\umm|+|\unn|}   (2 \xi)^{-(|\umm|+|\unn|)}  C_\theta \tau^{L-1} \nonumber \\
 && \quad \times
      \prod_{l=1}^L  \left\{   \binom{ m_l + p_l }{p_l } \binom{ n_l + q_l}{  q_l }
     \frac{  \max_{0 \leq l' \leq k}     \| \partial_\beta^{l'}  w_{m_l + p_l, n_l + q_l}(\beta,u)   \|^{\#} }{\sqrt{p_l ! q_l !}}  \right\}  \nonumber \\
   &&\leq
   \frac{C_\theta}{16} [2 \rho]^2   \sum_{L=1}^\infty  \tau^{L}
     \sum_{  \substack{  (\umm,\upp,\unn,\uqq) \in \N_0^{4L}:  \\
   m_l+p_l+n_l+q_l \geq 1 }} \nonumber \\
   && \times
     \prod_{l=1}^L \left\{   \binom{ m_l + p_l}{ p_l } \binom{ n_l + q_l }{ q_l }      \xi^{p_l + q_l} 2^{-(m_l+n_l)}
                                        \xi^{-(m_l + p_l +n_l + q_l)}
 \max_{0 \leq l' \leq k}     \| \partial_\beta^{l'}  w_{m_l + p_l, n_l + q_l}(\beta,u)   \|^{\#}
  \right\} \nonumber \\
&& \leq \frac{C_\theta}{4} \rho^2
    \sum_{L=1}^\infty   \tau^L
    \left[   \sum_{  m+p+n+q \geq 1 }    \binom{ m + p }{ p } \binom{ n + q}{  q }      \xi^{p + q} 2^{-(m+n)}
                                        \xi^{-(m + p +n + q)}
 \max_{0 \leq l \leq k}     \| \partial_\beta^{l}  w_{m + p, n + q}(\beta,u)   \|^{\#}
 \right]^L \nonumber \\
 &&\leq \frac{C_\theta}{4} \rho^2
    \sum_{L=1}^\infty  \tau^L
    \left[   \sum_{  l+k \geq 1 }
   \xi^{-(l + k )}
 \max_{0 \leq l' \leq k}     \| \partial_\beta^{l'}  w_{l,k}(\beta,u)   \|^{\#}
\right]^L \nonumber \\
  && \leq
    \frac{C_\theta}{4} \rho^2     \sum_{L=1}^\infty  \tau^{L}     \left(  \| w_{\geq 2} \|^{(k)}_\xi \right)^L
   \nonumber \\
   && \leq
    8 C_\theta \rho \| w_{\geq 2} \|_\xi^{(k)}  \label{eq:estrenwhbeta},
\end{eqnarray}
where in the third last inequality we used the binomial formula and $0 < \xi \leq 1/2$  and we used
$ \tau \| w_{\geq 2} \|_\xi^{(k)} \leq 1/2$ in the last inequality.
Now we estimate the terms involving derivatives with respect to $\zeta$. By Cauchy we have for
$\zeta \in U[w] \subset D_{17 \rho/32}$
$$
\partial_\beta^l \partial_\zeta^s \mathcal{R}_\rho^\#( w(\beta, \zeta))
= \frac{s!}{2 \pi i} \int_{|\mu|=18\rho/32} \frac{ \partial_\beta^l \mathcal{R}_\rho^\#(w(\beta,\mu))}{( \mu - \zeta )^{s+1} } d \mu .
$$
Using this and  \eqref{eq:estrenwhbeta}, we obtain the bound
$$
\| \partial_\zeta^s ( \mathcal{R}_\rho^\# w )_{\geq 2} \|_\xi^{(k)} \leq
\left( \frac{32}{\rho} \right)^s s! 8    C_\theta \rho \| w_{\geq 2} \|_\xi^{(k)}
$$
Now by Lemma \ref{lem:estonbderofinverse} we know that for $1 \leq l \leq k$ there exists a finite constant
$C_k$ such that
$$
\sup_{(\beta,z) \in \R \times D_{1/2}} | \partial_\beta^l I_\rho(\beta,z) | \leq C_k \frac{\rho}{32} .
$$
This and \eqref{eq:faaonrenorm1} imply that the $\rho$'s cancel out. Collecting the above estimates we
arrive at the bound
$$
\| (\mathcal{R}_\rho w)_{\geq 2} \|_\xi^{(k)} \leq   k! 8 C_\theta \rho D_k C_k^k
\| w_{\geq 2} \|_\xi^{(k)} .
$$

\noindent
\underline{Step 2:}
\begin{eqnarray*}
\lefteqn{ \sup_{z \in D_{1/2}}\| \partial_r (\mathcal{R}_\rho w)_{0,0}(z) - 1 \|_{C^{k}(\R ; C_B[0,1])} } \\
&& \leq
 \sup_{z \in D_{1/2} } \| \partial_r w_{0,0}(z) - 1 \|_{C^{k}(\R; C_B[0,1])}  +
\frac{1}{4}  \sup_{z \in D_{1/2} } \| w_{0,0}(0,z) + z \|_{C^k(\R)} +
 \frac{1}{4}  \| w_{\geq 1} \|_\xi^{( k)} .
\end{eqnarray*}

By  \eqref{eq:renormalizedkernels00}   we have
\begin{align} \label{eq:eqstep2est}
\partial_r (\mathcal{R}_\rho w(\beta))_{0,0}(z,r) - 1 = (\partial_r w_{0,0})(\beta, I_\rho(\beta, z),\rho r ) - 1 +   \partial_r T[w(\beta,I_\rho(\beta,z)](r) ,
\end{align}
where we defined
$$
T[w] := \rho^{-1} (-1)^{L-1} \sum_{L=2}^\infty
\sum_{\substack{ (\upp,\uqq) \in \N_0^{2L}: \\ p_l + q_l \geq  1 } }  v_{\underline{0},\upp,
\uzz,\uqq}[w]  .
$$
We need to estimate the derivative with respect to $\beta$.  For the
first term in \eqref{eq:eqstep2est} we find for $1 \leq l \leq k$,
using \eqref{eq:faadibruno}
\begin{eqnarray} \label{eq:estonmultder55-1}
\lefteqn{
D_\beta^l (\partial_r w_{0,0})(\beta,I_\rho(\beta,z), \rho r)  }   \\
 && =
\partial_\beta^l (\partial_r w_{0,0} )( \beta, I_\rho(\beta,z) , \rho r )  \nonumber \\
&&
+ \sum_{p=1}^{l} \binom{l}{p} \sum_{X \in P_p} \partial_\beta^{l-p} \partial_{\zeta}^{|X|}
(\partial_r w_{0,0} )(\beta,\zeta, \rho r ) \vert_{\zeta  =  I_\rho(\beta,z)  } \prod_{x \in X} \partial_\beta^{|x|} I_\rho(\beta,z) . \nonumber
\end{eqnarray}
We use analyticity,  Cauchy, and that $\zeta =  I_\rho(\beta,z)  \in D_{3/8}$
to estimate the derivatives with respect to the spectral
parameter. We
have
$$
\partial_\zeta^s \left( (\partial_r w_{0,0} )( \beta , \zeta , \rho r  ) - 1 \right) =
\lim_{\eta \downarrow 0} \frac{s!}{2\pi i}
\int_{|\mu|= 1/2 - \eta } \frac{ (\partial_r w_{0,0} )(\beta , \mu , \rho r  )  - 1 }{( \mu - \zeta )^{s+1}} d \mu .
$$
This yields for  $1 \leq l \leq k$ or $0 \leq l \leq k$ and $1 \leq s$,
\begin{equation} \label{eq:estonmultder55-2}
| \partial_\beta^l \partial_\zeta^s (\partial_r w_{0,0} )(\beta , \zeta  , \rho r   ) | \leq  8^s s! a  ,
\quad  \forall \zeta \in D_{3/8},
\end{equation}
where $a :=  \| \partial_r w_{0,0} - 1 \|_{C^k(\R ; C_B[0,1])}$
Using estimate \eqref{eq:estonmultder55-2} and the estimate of Lemma \ref{lem:estonbderofinverse}
to bound the last line
 of \eqref{eq:estonmultder55-1}
we find
\begin{eqnarray}  \label{eq:estonmultder55-22}
\lefteqn{
%\sup_{\beta \in \R }
\sup_{0 \leq l \leq k} | D_\beta^l ( (\partial_r w_{0,0})(\beta,I_\rho(\beta,z), \rho r) - 1 ) | }
\\
&&
\leq
   \| \partial_r w_{0,0} - 1 \|_{C^k(\R ; C_B[0,1])} +
 D_k 8^k k!  C_k^k   a   \| w_{0,0}(0,z) + z \|_{C^k(\R)}
\nonumber .
\end{eqnarray}
The second term in    \eqref{eq:eqstep2est}  is estimated as follows.
For $u \in D_{19 \rho /32}$ and  $0 \leq l \leq k$ we estimate
\begin{eqnarray}
| \partial_\beta^l  \partial_r T[w(\beta,u)](r) |
&& \leq  \rho^{-1} \sum_{L=2}^{\infty} C_\theta \tau^{L-1}
\sum_{\substack{ (\upp,\uqq) \in \N_0^{2L}: \\ p_l + q_l \geq  2 }     }
\prod_{l=1}^L  \frac{ \max_{0 \leq l' \leq k} \| \partial_\beta^{l'} w_{p_l,q_l}(\beta,u) \|^{\#}  }{\sqrt{p_l! q_l!}}  \nonumber  \\
&& \leq  \frac{C_\theta}{16} \sum_{L=2}^{\infty}  \left[\tau \xi^2 \right]^{L}
 \left[ \sum_{p+q \geq 2}  \xi^{-(p+q)} \max_{0 \leq l' \leq l} \|\partial_\beta^{l'} w_{p,q}(\beta,u)  \|^{\#}
\right]^L   \nonumber \\
 && \leq  \frac{C_\theta}{16}  \xi^4 \sum_{L=2}^{\infty}    \left[ \tau  \| w_{\geq 2} \|^{(k)}_{\xi} \right]^{L} \nonumber \\
 && \leq     \frac{C_\theta}{16}  \xi^4  \tau     \| w_{\geq 2} \|^{(k)}_{\xi}
 \label{eq:kernelestimate2}
\end{eqnarray}
where in  the last estimate we used $    \tau \| w_{\geq 1} \|_{\xi}^{(k)}  \leq 1/2$.
Now using a contour estimate as in Step 1 one can show that
\begin{equation}   \label{eq:estonmultder55-3}
 \| D_\beta^l \partial_r T[(w(\beta,I_\rho(\beta,z))] \|_{C[0,1]} \leq  k! \frac{C_\theta}{16} \xi^4 \tau
C_k^k D_k \| w_{\geq 2} \|_\xi^{(k)} .
\end{equation}
Now estimates   \eqref{eq:estonmultder55-22} and   \eqref{eq:estonmultder55-3} yield Step 2.

\vspace{0.5cm}

\noindent
\underline{Step 3:}
\begin{align*}
\sup_{z \in D_{1/2}}\| (\mathcal{R}_\rho w )_{0,0}(z,0) + z \|_{C^k(\R)}
&  \leq  \frac{1}{4}  \| w_{\geq 1} \|_\xi^{(k)}  .
\end{align*}

 \vspace{0.5cm}

By  \eqref{eq:renormalizedkernels00}   we have
$$
(\mathcal{R}_\rho w(\beta) )_{0,0}(z,0) + z  = T[w(\beta,I_\rho(\beta,z))](0).
$$
We estimate for $u \in D_{19 \rho /32}$ and  $0 \leq l \leq k$ the same way as \eqref{eq:kernelestimate2}
\begin{align*}
| \partial_\beta^l T[ w(\beta,u)](0) | \leq %\rho^{-1} \sum_{L=2}^\infty
%\sum_{\substack{ (\upp,\uqq) \in \N_0^{2L}: \\ p_l + q_l \geq  1 } } |
%\partial_\beta^l   v_{\underline{0},\upp,
%\uzz,\uqq}[w] |
 \frac{C_\theta}{16} \xi^4  \tau \| w_{\geq 1} \|_\xi^{(k)} .
\end{align*}
As above one calculates the derivative with respect to $\beta$ and estimates
the derivatives with respect to the spectral parameter using a contour integral
as in Step 1. As a result
\begin{align*}
\sup_{z \in D_{1/2}}\| (\mathcal{R}_\rho w )_{0,0}(z,0) + z \|_{C^k(\R)} &
\leq  k! \frac{C_\theta}{16} \xi^4 \tau
C_k^k D_k \| w_{\geq 2} \|_\xi^{(k)} .
\end{align*}
Step 3 now follows.

\section{Main Theorem}

\label{sec:prov}

In this section, we prove Theorem \ref{thm:main1}, the main result of this paper.
Its proof is based on Theorems \ref{thm:inimain1} and \ref{thm:bcfsmain}.

\vspace{0.5cm}

\noindent
{\it Proof of Theorem \ref{thm:main1}.} Choose $\rho, \xi, \epsilon_0$ such that the assertions of
Theorem \ref{thm:bcfsmain} hold.  Choose $g_0$ such that the conclusions
of Theorem \ref{thm:inimain1} hold for $\delta_1=\delta_2=\delta_3 = \epsilon_0/2$. Let   $ g \in D_{g_0}$. It  follows from
Theorem \ref{thm:bcfsmain} (a)
 that
$\psi_{(0)}[w^{(0)}(g,\beta)]$ is a nonzero element in the kernel of $H_{g,\beta}^{(0)}(e_{(0)}[w^{(0)}(g,\beta)])$.
 From the Feshbach property, Theorem  \ref{thm:fesh}, it follows that
\begin{equation} \label{eq:psiis}
\psi_{\beta}(g) := Q_{\chi^{(I)}}(g,\beta, e_{(0)}[w^{(0)}(g,\beta)]) \psi_{(0)}[w^{(0)}(g,\beta)]  ,
\end{equation}
is nonzero and an eigenvector of $H_{g,\beta}$ with eigenvalue $E_{\beta}(g) := E_{\rm at} + e_{(0)}[w^{(0)}(g,\beta)] $.

By Theorem \ref{thm:inimain1}, we know that $g \mapsto w^{(0)}(g,\cdot,\cdot)$ is an analytic $\WW_\xi^{(k)}$--valued function,
with values in  the ball $\mathcal{B}^{(k)}(\epsilon_0/2,\epsilon_0/2,\epsilon_0/2)$.
By Theorem \ref{thm:bcfsmain} (d) it follows that the functions
 $g \mapsto \psi_{(0)}[w^{(0)}(g,\cdot)] $ and $g \mapsto E_{(\cdot)}(g)$ are in
 $C^{\omega}_B(D_{g_0}; C_B^k(\R;\FF))$ and  $C^{\omega}_B(D_{g_0}; C_B^k(\R))$, respectively.
 From Theorem \ref{thm:inimain1} we know that the function $(g,z) \mapsto Q_{\chi^{(I)}}(g,\cdot,z)$ is in
$C^{\omega}_B(D_{g_0} \times D_{1/2} ; C^k_B(\R;\mathcal{B}(\HH_{\rm red};\HH) ))$.  It now follows
from \eqref{eq:psiis} that $g \mapsto \psi_{(\cdot)}(g)$ is in $C^{\omega}_B(D_{g_0}; C^k_B(\R ;\HH))$.
By possibly restricting to a smaller ball than $D_{g_0}$ we can ensure that the projection operator
\begin{equation} \label{eq:projectionlambda0}
P_{\beta}(g)
:= \frac{ \left| \psi_{\beta}(g) \right\rangle \left\langle \psi_{\beta}(\overline{g}) \right| }{ \left\langle \psi_{\beta}(\overline{g}) , \psi_{\beta}({g}) \right\rangle } ,
\end{equation}
is well defined for all $(g,  \beta) \in D_{g_0} \times \R$,
which is shown as follows. First observe that the denominator of \eqref{eq:projectionlambda0} is for each $\beta$
an analytic complex valued
function of $g$. By Theorem \ref{thm:bcfsmain} (c) we have
$\langle \psi_{\beta}(0) , \psi_{\beta}(0) \rangle = 1$. If we estimate
the remainder of the Taylor expansion of the denominator of \eqref{eq:projectionlambda0} using analyticity and
the uniform bound on $\psi_{(\cdot)}$, it follows, by possibly choosing $g_0$ smaller but still positive, that there exists a positive constant $c_0$ such that
$| \langle \psi_{\beta}(\overline{g}) , \psi_{\beta}({g}) \rangle | \geq c_0 $
for all $|g| \leq {g}_0$.
Using already established properties  of $\psi_{\beta}(g)$, it follows from \eqref{eq:projectionlambda0} that
$g \mapsto P_{(\cdot)}(g)$ is in $C^{\omega}_B(D_{g_0}; C_B^k(\R ; \mathcal{B}(\HH)))$.
If $g \in D_{g_0} \cap \R$, then by definition \eqref{eq:projectionlambda0} we see that
$P_{\beta}(g)^* = P_{\beta}(\overline{g})$.
The kernel
$w^{(0)}(g,\beta)$ is symmetric for $g \in D_{g_0} \cap \R$, see Theorem \ref{thm:inimain1}. Exactly the same
way as in the proof of Theorem 1 in \cite{HH10-2} one can show that
$E_{\beta}(g) = \inf \sigma(H_{g,\beta})$ for real $g \in D_{g_0} \cap \R$.
\qed

\vspace{0.5cm}

\noindent
{\it Proof of Corollary \ref{cor:main1}.}
We use Cauchy's formula.
 For any positive $r$ which is less than $g_0$, we have
\begin{equation} \label{eq:cauchy223}
E_{\beta}^{(n)} = \frac{1 }{2 \pi i} \int_{|z|= r} \frac{E_{\beta}(z)}{z^{n+1}} d z , \
\psi_{\beta}^{(n)} = \frac{1 }{2 \pi i} \int_{|z|= r} \frac{\psi_{\beta}(z)}{z^{n+1}} d z , \
P_{\beta}^{(n)} = \frac{1 }{2 \pi i} \int_{|z|= r} \frac{P_{\beta}(z)}{z^{n+1}} d z .
\end{equation}
The first equation of  \eqref{eq:cauchy223}
 implies  that $\beta \mapsto E_{\beta}^{(n)}$ is in $C_B^k(\R)$ and that
 $\|E_{(\cdot)}^{(n)}\|_{C^k(\R)} \leq r^{-n} \| E_{(\cdot)} \|_{C^{\omega}_B(D_{g_0};C_B^k(\R))}$.
Similarly we conclude by \eqref{eq:cauchy223} that
$ \psi_{\beta}^{(n)}$ and   $P_{\beta}^{(n)}$ are as functions of $\beta$ in $C^k_B(\R; \HH)$ and $C^k_B(\R; \mathcal{B} (\HH))$, respectively, and
that there exists a finite constant $C$ such that $\| \psi_{(\cdot)}^{(n)}\|_{C^k(\R;\HH)} \leq C r^{-n}   $ and
$\|P^{(n)}_{(\cdot)} \|_{C^k(\R;\mathcal{B}(\HH))} \leq C r^{-n}$.
Finally observe that $(-1)^N H_{g,\beta}(-1)^N = H_{-g,\beta}$ where $N$ is the  linear operator on $\FF$ with $N \upharpoonright \FF^{(n)}(\hh) = n$.
This implies that the ground state energy $E_{\beta}(g)$ cannot depend on odd powers of $g$.
\qed

\section*{Acknowledgements}

 D. H. wants to thank J. Fr\"ohlich, G.M. Graf, M. Griesemer, and A. Pizzo for interesting conversations.
Moreover, D.H. wants to thank ETH Z\"urich   for hospitality and financial support.
I.H. thanks the Bernoulli Institute for its hospitality and support.

\section*{Appendix A: Elementary Estimates and the Pull-through Formula}
\label{sec:appA}

To give a precise meaning to expressions which occur in \eqref{eq:defhmn11} and \eqref{eq:defhlinemn}, we introduce the following.
For $\psi \in \FF$ having finitely many particles we have
\beqn \label{eq:defofa}
\left[ a(K_1) \cdots a(K_m) \psi \right]_n(K_{m+1},...,K_{m+n}) = \sqrt{\frac{(m+n)!}{n!}} \psi_{m+n}(K_{1},...,K_{m+n}) ,
\eeqn
for all $K_1,...,K_{m+n} \in \underline{\R}^3 := \R^3 \times \Z_2$, and
using Fubini's theorem it is elementary to see that the vector valued map
 $(K_1,...,K_m) \mapsto a(K_1) \cdots a(K_m) \psi$ is
an element of $L^2((\underline{\R}^{3})^m; \FF)$. The following lemma states the well
known pull-through formula.
For a proof see for example \cite{BFS98,HH10-1}.
\begin{lemma} \label{lem:pullthrough}
Let $f : \R_+ \to \C$ be a bounded measurable function. Then for all $K \in \R^3 \times \Z_2$
$$
f(H_f) a^*(K) = a^*(K) f(H_f + \omega(K) ) , \quad a(K) f(H_f) = f(H_f + \omega(K) ) a(K) .
$$
\end{lemma}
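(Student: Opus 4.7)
The plan is to first prove the identity for the one-parameter family of characters $f_t(x) = e^{itx}$, where it reduces to a well-known intertwining relation between $H_f$ and creation operators, and then extend to general bounded measurable $f$ by spectral calculus.

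First I would show that $e^{itH_f} = \Gamma(e^{it\omega})$ on $\FF_s(\hh)$; this follows from $H_f = d\Gamma(M_\omega)$ together with the general identity $\Gamma(e^{itA}) = e^{it\, d\Gamma(A)}$, which one verifies on $n$-particle vectors, since $d\Gamma(M_\omega)$ acts on $\FF^{(n)}_s(\hh)$ as multiplication by $\sum_{j=1}^n \omega(K_j)$ while $\Gamma(e^{it\omega})$ acts as multiplication by $\prod_{j=1}^n e^{it\omega(K_j)}$.

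Second, using the defining relation \eqref{eq:formala}, the unitarity of $\Gamma(e^{it\omega})$ on $\FF(\hh)$, and the fact that $\Gamma(e^{it\omega})$ commutes with the symmetrizers $S_n$, I would verify for $h \in \hh$ and $\eta \in \FF^{(n)}_s(\hh)$:
\[
e^{itH_f} a^*(h) e^{-itH_f} \eta \;=\; \sqrt{n+1}\,\Gamma(e^{it\omega}) S_{n+1}\bigl(h \otimes \Gamma(e^{-it\omega})\eta\bigr) \;=\; a^*(e^{it\omega}h)\,\eta.
\]
Read as an operator-valued distribution tested against $h \in \hh$, this is precisely the assertion
\[
e^{itH_f} a^*(K) \;=\; e^{it\omega(K)} a^*(K) \, e^{itH_f},
\]
which is the claim of the lemma for $f(x) = e^{itx}$.

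Third, for general bounded measurable $f$, I would proceed by functional calculus. If $f = \hat\mu$ is the Fourier transform of a finite complex Borel measure on $\R$, then Fubini and the previous step give
\[
f(H_f) a^*(K) \;=\; \int e^{itH_f} a^*(K)\,d\mu(t) \;=\; \int e^{it\omega(K)} a^*(K) e^{itH_f}\,d\mu(t) \;=\; a^*(K) f(H_f + \omega(K)).
\]
A standard monotone-class / density argument, combined with the uniform bound $\|f(H_f)\| \le \|f\|_\infty$ on the spectral functional calculus, then extends the identity to all bounded Borel $f$. The second identity involving $a(K)$ follows immediately by taking adjoints and using $a(K) = (a^*(K))^*$. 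The main obstacle is conceptual rather than computational: $a^\#(K)$ is only an operator-valued distribution, so the pointwise statement must be interpreted via its smeared version against $h \in \hh$. Once the intertwining relation is established in smeared form for a single character $e^{itx}$, every remaining step is routine.
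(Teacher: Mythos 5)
Your argument is correct, but it is a genuinely different route from the one the paper points to. The paper itself does not prove Lemma~\ref{lem:pullthrough}; it cites \cite{BFS98,HH10-1}, and the context in Appendix~A (in particular formula \eqref{eq:defofa}) makes clear that the intended proof is the direct one: for a finite-particle vector $\psi$, the operator $f(H_f)$ acts on the $n$-particle sector as multiplication by $f\bigl(\sum_{j=1}^n \omega(K_j)\bigr)$, while $[a(K)\psi]_n(K_1,\dots,K_n) = \sqrt{n+1}\,\psi_{n+1}(K,K_1,\dots,K_n)$, so both sides of $a(K)f(H_f) = f(H_f+\omega(K))a(K)$ evaluate on the $n$-particle sector to $\sqrt{n+1}\,f\bigl(\omega(K)+\sum_j\omega(K_j)\bigr)\psi_{n+1}(K,K_1,\dots,K_n)$, a.e. in $(K,K_1,\dots,K_n)$, and the creation identity follows by adjunction. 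That computation is two lines and works for arbitrary bounded Borel $f$ with no approximation step. Your route instead establishes the intertwining relation $e^{itH_f}a^*(h)e^{-itH_f}=a^*(e^{it\omega}h)$ via second quantization, then passes from characters to general bounded Borel $f$ through Fourier transforms of finite measures and a monotone-class/density argument. That works, and it is conceptually clean (it isolates the Heisenberg-picture covariance as the core fact), but it requires care that the direct proof avoids: the Fourier transforms of finite measures are not all bounded continuous functions, so you must invoke the functional monotone class theorem with the trigonometric polynomials as the multiplicative class, and you must justify passing bounded pointwise limits through the distributional $a^\#(K)$ (which does work when interpreted via \eqref{eq:defofa} and dominated convergence, but should be said). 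In short: your proof is valid, somewhat longer, and trades the elementary pointwise verification for a structural argument; either is acceptable, and your concluding remark correctly identifies that the only real issue is the distributional interpretation of $a^\#(K)$, which \eqref{eq:defofa} resolves.
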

Let $w_{m,n}$ be function on $\R_+ \times ({\underline{\R}^{3})}^{n+m}$ with values
in the linear operators of $\HH_{\rm at}$ or the complex numbers.
To such a function we associate the quadratic form
\begin{eqnarray*}
q_{w_{m,n}}(\varphi,\psi) := \int_{{(\underline{\R}^3)}^{m+n}} \frac{ dK^{(m,n)}}{|K^{(m,n)}|^{1/2}}
 \left\langle a(K^{(m)}) \varphi ,w_{m,n}(H_f, K^{(m,n)}) a(\widetilde{K}^{(n)}) \psi \right \rangle ,
\end{eqnarray*}
defined  for all $\varphi$ and $\psi$ in $\HH$ respectively $\FF$, for which the right hand side is defined as a complex number.
To associate an operator to the quadratic form we will use the following lemma.
\begin{lemma} \label{kernelopestimate} Let $\underline{X} = \R^3 \times \Z_2$. Then
\begin{eqnarray} \label{eq:defofH}
| q_{w_{m,n}}(\varphi,\psi) | \leq \| w_{m,n} \|_\sharp \| \varphi \| \| \psi \| ,
\end{eqnarray}
where
\begin{eqnarray*}
 \| w_{m,n} \|_\sharp^2 :=
\int_{\underline{X}^{m+n}} \frac{d K^{(m,n)}}{|K^{(m,n)}|^2} \sup_{r \geq 0} \left[ \|w_{m,n}(r,K^{(m,n)}) \|^2 \prod_{l=1}^m \left\{ r + \Sigma[K^{(l)}] \right\}
 \prod_{\widetilde{l}=1}^n \left\{ r + \Sigma[\widetilde{K}^{(\widetilde{l})}] \right\} \right] .
\end{eqnarray*}
\end{lemma}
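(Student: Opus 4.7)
My plan is a double application of Cauchy--Schwarz combined with the pull-through formula. The desired $\sharp$-norm has two features I need to reproduce: a denominator $|K^{(m,n)}|^2$ and an operator-norm factor weighted by $\prod_{l}(r+\Sigma[K^{(l)}])\prod_{\tilde l}(r+\Sigma[\tilde K^{(\tilde l)}])$. This tells me exactly how to split $w_{m,n}$: I would introduce the positive self-adjoint operators
\[
U(K^{(m)}):=\prod_{l=1}^{m}(H_f+\Sigma[K^{(l)}])^{1/2},\qquad V(\tilde K^{(n)}):=\prod_{\tilde l=1}^{n}(H_f+\Sigma[\tilde K^{(\tilde l)}])^{1/2},
\]
both commuting with $H_f$, and write $w_{m,n}(H_f,K^{(m,n)})=U^{-1}\bigl(U\,w_{m,n}\,V\bigr)V^{-1}$. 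Applying operator Cauchy--Schwarz pointwise in $K^{(m,n)}$ to the integrand of $q_{w_{m,n}}$ gives
\[
\bigl|\langle a(K^{(m)})\varphi,\,w_{m,n}\,a(\tilde K^{(n)})\psi\rangle\bigr|\le \|U^{-1}a(K^{(m)})\varphi\|\,\|U w_{m,n} V\|\,\|V^{-1}a(\tilde K^{(n)})\psi\|,
\]
and by functional calculus $\|U w_{m,n} V\|^{2}$ is precisely the quantity appearing under the $\sup_{r}$ in $\|w_{m,n}\|_\sharp^{2}$.

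Next I would split the measure as $|K^{(m,n)}|^{-1/2}dK^{(m,n)}=(|K^{(m,n)}|^{-1})(|K^{(m,n)}|^{1/2})dK^{(m,n)}$ and apply Cauchy--Schwarz on the $K$-integral, putting the factor $|K^{(m,n)}|^{-1}$ together with $\|U w_{m,n} V\|$. The first resulting factor is exactly $\|w_{m,n}\|_\sharp$, and using Fubini (since $U^{-1}a(K^{(m)})\varphi$ depends only on $K^{(m)}$ and $V^{-1}a(\tilde K^{(n)})\psi$ only on $\tilde K^{(n)}$) the second factor decouples into
\[
\Bigl(\int dK^{(m)}|K^{(m)}|\,\|U^{-1}a(K^{(m)})\varphi\|^{2}\Bigr)\Bigl(\int d\tilde K^{(n)}|\tilde K^{(n)}|\,\|V^{-1}a(\tilde K^{(n)})\psi\|^{2}\Bigr).
\]

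The whole lemma then reduces to the identity
\[
\int dK^{(m)}\,|K^{(m)}|\,\|U^{-1}a(K^{(m)})\varphi\|^{2}=\|\varphi\|^{2}
\]
and its analogue for $V,\psi$. I would prove this by expanding via \eqref{eq:defofa}: the $n$-particle component contributes
\[
\tfrac{n!}{(n-m)!}\int dK^{(m)}dL^{(n-m)}|K^{(m)}|\,|\varphi_{n}(K^{(m)},L^{(n-m)})|^{2}\prod_{l=1}^{m}\bigl(\Sigma[L^{(n-m)}]+\Sigma[K^{(l)}]\bigr)^{-1},
\]
after which I symmetrize the integration variables using the symmetry of $\varphi_{n}$. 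The combinatorial heart of the argument is then the telescoping identity
\[
\sum_{(i_{1},\dots,i_{m})\text{ distinct}}\ \prod_{l=1}^{m}\frac{|K_{i_{l}}|}{\Sigma[K^{(n)}]-|K_{i_{l+1}}|-\cdots-|K_{i_{m}}|}=\frac{n!}{(n-m)!},
\]
which, combined with the prefactor, yields exactly $\|\varphi_{n}\|^{2}$ and hence the bound $\|\varphi\|^{2}$ after summing over $n$.

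The main obstacle is verifying this telescoping identity cleanly. I would prove it by induction on $m$: the base case $m=1$ is $\sum_{i}|K_{i}|/\Sigma[K^{(n)}]=1$, and the inductive step factors out the summation over $i_{m}$ (giving a factor $|K_{i_{m}}|/\Sigma[K^{(n)}]$) and reduces the remaining sum to the $(m-1)$-instance applied to the $(n-1)$-tuple obtained by removing $K_{i_{m}}$. An alternative, more operator-theoretic route is to integrate $a^{*}(K^{(m)})U^{-2}a(K^{(m)})|K^{(m)}|\,dK^{(m)}$ directly and apply the pull-through formula (Lemma~\ref{lem:pullthrough}) iteratively to reduce to the elementary identity $\int dK\,\omega(K)\,a^{*}(K)a(K)=H_{f}$; either path produces the required equality and finishes the proof.
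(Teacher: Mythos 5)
Your decomposition is the same as the paper's: the operators $U(K^{(m)})$ and $V(\widetilde K^{(n)})$ are precisely the $P[K^{(m)}]$, $P[\widetilde K^{(n)}]$ the paper inserts, the two Cauchy--Schwarz applications (pointwise in $K$, then in the $K$-integral with $|K^{(m,n)}|^{-1/2}=|K^{(m,n)}|^{-1}\cdot|K^{(m,n)}|^{1/2}$) are structured identically, and the reduction ends at the same normalization identity that the paper invokes as \eqref{eq:trivialA}. The genuine difference is that you actually supply a proof of that identity, whereas the paper defers it to a reference; that is worthwhile added content, and your alternative pull-through route $\int dK\,\omega(K)\,a^*(K)a(K)=H_f$ is essentially how the reference argues.

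Two small corrections. First, the right-hand side of your telescoping identity must be $1$, not $n!/(n-m)!$. Your own base case $\sum_i|K_i|/\Sigma[K^{(n)}]=1$ and your inductive step both give $1$, and the bookkeeping is: the combinatorial prefactor $n!/(n-m)!$ from \eqref{eq:defofa} multiplies the symmetrization factor $(n-m)!/n!$ (averaging over all $n!$ permutations, each distinct ordered $m$-tuple being hit $(n-m)!$ times) to give $1$, so the contribution of the $n$-particle sector reduces to $\int dQ^{(n)}|\varphi_n(Q^{(n)})|^2\cdot\bigl(\sum_{\text{tuples}}\cdots\bigr)$, and equality with $\|\varphi_n\|^2$ forces the sum to be $1$. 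Second, the identity $\int dK^{(m)}|K^{(m)}|\,\|U^{-1}a(K^{(m)})\varphi\|^2=\|\varphi\|^2$ should be an inequality $\le\|\varphi\|^2$: the operators $a(K^{(m)})$ annihilate the components of $\varphi$ with fewer than $m$ particles (the paper's own \eqref{eq:trivialA} drops at least the vacuum sector). This is harmless for the lemma since only the upper bound is used, but should be stated as such.
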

\begin{proof} We set $P[K^{(n)}] := \prod_{l=1}^n ( H_f + \Sigma[K^l])^{1/2}$ and insert 1's to obtain the trivial identity
\begin{align*}
 | q_{w_{m,n}}(\varphi,\psi) | &=
\Bigg| \int_{\underline{X}^{m+n}} \frac{d K^{(m,n)}}{|K^{(m,n)}|}
 \Big\langle
 P[K^{(m)}] P[K^{(m)}]^{-1} |K^{(m)}|^{1/2} a(K^{(m)}) \varphi , w_{m,n}(H_f ,K^{(m,n)})
 \\
&
\times
P[\widetilde{K}^{(n)}] P[\widetilde{K}^{(n)}]^{-1} | \widetilde{K}^{(n)}|^{1/2} a(\widetilde{K}^{(n)}) \psi \Big\rangle \Bigg| .
\end{align*}
The lemma now follows using the Cauchy-Schwarz
inequality and the following well known identity for $n \geq 1$ and $\phi \in \FF$,
\begin{eqnarray}
\int_{\underline{X}^n}
d K^{(n)} | K^{(n)} | \left\| \prod_{l=1}^n \left[ H_f + \Sigma[K^{(l)}] \right]^{-1/2} a(K^{(n)}) \phi \right\|^2 = \| P_\Omega^\perp \phi \|^2 \label{eq:trivialA}  ,
\end{eqnarray}
where
$P_\Omega^\perp := | \Omega \rangle \langle \Omega |$.
A proof of \eqref{eq:trivialA}  can for example be found in \cite{HH10-1} Appendix A.
\end{proof}

Provided the form $q_{w_{m,n}}$ is densely defined and $ \| w_{m,n} \|_\sharp$ is a finite real number,
then the form $q_{w_{m,n}}$ determines uniquely a bounded
linear operator  $\underline{H}_{m,n}(w_{m,n})$ such that
 $$
q_{w_{m,n}}(\varphi,\psi ) = \langle \varphi,\underline{H}_{m,n}(w_{m,n}) \psi \rangle ,
$$
for all $\varphi, \psi$ in the form domain of $q_{w_{m,n}}$. Moreover,
$\| \underline{H}_{m,n}(w_{m,n}) \|_{} \leq \| w_{m,n} \|_\sharp$.
Using the pull-through formula and Lemma \ref{kernelopestimate} it is easy to see
that for $w^{(I)}$, defined in \eqref{defofwI}, with $m+n=1,2$, the form
$$
q^{(I)}_{m,n}(\varphi, \psi) := q_{w_{m,n}^{(I)}}( \varphi , (H_f + 1 )^{-\frac{1}{2}(m+n)} (-\Delta + 1 )^{-\frac{1}{2} \delta_{1,m+n}} \psi )
$$
is densely defined and bounded.
Thus we can associate a bounded linear operator $L_{m,n}^{(I)}$ such that
$q_{m,n}^{(I)}(\varphi, \psi) = \langle \varphi , L_{m,n}^{(I)} \psi \rangle$. This allows us to define
$$
\underline{H}_{m,n}(w_{m,n}^{(I)}) := L_{m,n}^{(I)} (H_f + 1 )^{\frac{1}{2}(m+n)} (-\Delta + 1 )^{\frac{1}{2}\delta_{1,m+n}}
$$
as an operator in $\HH$.

\section*{Appendix B: Smooth Feshbach Property}
\label{sec:smo}

In this appendix we follow \cite{BCFS03,GH08}.
We introduce the Feshbach map and its auxiliary operator and state basic isospectrality
properties.
Let $\chi$ and $\overline{\chi}$ be commuting, nonzero bounded operators, acting on a separable Hilbert space $\HH$
and satisfying $\chi^2 + \overline{\chi}^2=1$. A {\it Feshbach pair} $(H,T)$ for $\chi$ is a pair of
closed operators with the same domain,
$$
H,T : D(H) = D(T) \subset \HH \to \HH
$$
such that $H,T, W := H-T$, and the operators
\begin{align*}
&W_\chi := \chi W \chi , & &W_{\overline{\chi}} := \overline{\chi} W \chib \\
&H_\chi :=T + W_\chi , & &H_{\overline{\chi}} := T + W_{\chib} ,
\end{align*}
defined on $D(T)$ satisfy the following assumptions:
\begin{itemize}
\item[(a)] $\chi T \subset T \chi$ and $\chib T \subset T \chib$,
\item[(b)] $T, H_{\chib} : D(T) \cap \ran \chib \to \ran \chib$ are bijections with bounded inverse,
\item[(c)] $\chib H_{\chib}^{-1} \chib W \chi : D(T) \subset \HH \to \HH$ is a bounded operator.
\end{itemize}
\begin{remark} \label{rem:abuse} {\em
By abuse of notation we write $ H_{\chib}^{-1} \chib$ for $ \left( H_{\chib} \upharpoonright \ran \chib \right)^{-1} \chib$ and
likewise $ T^{-1} \chib$ for $ \left( T \upharpoonright \ran \chib \right)^{-1} \chib$. }
\end{remark}
We call an operator $A:D(A) \subset \HH \to \HH$ {\it bounded invertible} in a subspace $V \subset \HH$
($V$ not necessarily closed), if $A: D(A) \cap V \to V$ is a bijection with bounded inverse.
Given a Feshbach pair $(H,T)$ for $\chi$, the operator
\begin{align} \label{eq:defoffesh}
&F_\chi(H,T) := H_\chi - \chi W \chib H_{\chib}^{-1} \chib W \chi
\end{align}
on $D(T)$ is called the { \it Feshbach map of} $H$.
%The mapping $(H,T) \mapsto F_\chi(H,T)$ is called the
%{\it Feshbach map}.
The auxiliary operator %s
\begin{align} \label{eq:defofQ}
  Q_\chi := Q_\chi(H,T) := \chi - \chib H_{\chib}^{-1} \chib W \chi
\end{align}
is by conditions (a), (c), bounded, and $Q_\chi$ leaves $D(T)$ invariant. The Feshbach map is
isospectral in the sense of the following theorem.
\begin{theorem} \label{thm:fesh}
Let $(H,T)$ be a Feshbach pair for $\chi$ on a Hilbert space $\HH$. Then the following holds.
$\chi \ker H \subset \ker F_\chi(H,T)$ and $Q_\chi \ker F_\chi(H,T) \subset \ker H$. The mappings
\begin{align*}
\chi : \ker H \to \ker F_\chi(H,T) , \quad Q_\chi : \ker F_\chi(H,T) \to \ker H ,
\end{align*}
are linear isomoporhisms and inverse to each other. %$H$ is bounded invertible on $\HH$ if and only if $F_\chi(H,T)$ is
%bounded invertible on $\ran \chi$.
%\end{itemize}
\end{theorem}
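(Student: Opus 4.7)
The plan is to first establish the central operator identity $HQ_\chi = \chi F_\chi(H,T)$ on $D(T)$. I would compute $H(\chi - \chib H_\chib^{-1}\chib W\chi)$ directly, using $H = T + W$, the commutation $T\chib = \chib T$ from (a), and the relation $TM = \chib W\chi - \chib W\chib M$ on $\ran\chib$ for $M := H_\chib^{-1}\chib W\chi$ (which follows from $T = H_\chib - \chib W\chib$ there). Combining with $\chi^2 + \chib^2 = 1$, this collapses the result to $\chi(T + \chi W\chi - \chi W\chib H_\chib^{-1}\chib W\chi) = \chi F_\chi(H,T)$. The identity $HQ_\chi = \chi F_\chi(H,T)$ immediately yields $Q_\chi \ker F_\chi(H,T) \subset \ker H$.

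For the reverse inclusion $\chi \ker H \subset \ker F_\chi(H,T)$, take $\psi \in \ker H$, noting $\chib\psi \in D(T)$ by (a). Multiplying $H\psi = 0$ by $\chib$ and using $1 = \chi^2 + \chib^2$ gives $H_\chib(\chib\psi) = -\chib W\chi \cdot \chi\psi$, and (b) then yields
\begin{equation*}
\chib\psi = -H_\chib^{-1}\chib W\chi^2\psi.
\end{equation*}
Multiplying $H\psi = 0$ by $\chi$ and substituting this expression into the $\chi W\chib \cdot \chib\psi$ term produces $F_\chi(H,T)\chi\psi = 0$. The same formula immediately gives $Q_\chi\chi = \mathrm{id}$ on $\ker H$, since $Q_\chi\chi\psi = \chi^2\psi - \chib H_\chib^{-1}\chib W\chi^2\psi = \chi^2\psi + \chib^2\psi = \psi$.

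The identity $\chi Q_\chi = \mathrm{id}$ on $\ker F_\chi(H,T)$ is the main obstacle, since $\chi$ is not a projection and there is no manifest reason $\chi Q_\chi\phi$ returns $\phi$. To handle it, I would introduce the auxiliary vector
\begin{equation*}
\tau := \chib\phi + \chi H_\chib^{-1}\chib W\chi\phi,
\end{equation*}
which lies in $D(T) \cap \ran\chib$ by the commutativity of $\chi$ and $\chib$ together with (a)--(b), and which satisfies $\phi - \chi Q_\chi\phi = \chib\tau$. A direct computation of $H_\chib\tau$, using $H_\chib\chi = \chi H_\chib$ on $D(T)\cap\ran\chib$ (so $H_\chib\chi H_\chib^{-1}\chib W\chi\phi = \chi\chib W\chi\phi$) and using $F_\chi(H,T)\phi = 0$ to substitute for $\chib T\phi$, should reduce to $H_\chib\tau = \chib W\chib\tau$, i.e., $T\tau = 0$. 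Since $T$ is a bijection on $D(T)\cap\ran\chib$ by (b), $\tau = 0$, hence $\chi Q_\chi\phi = \phi$, and the two maps $\chi,\,Q_\chi$ are mutually inverse linear isomorphisms between $\ker H$ and $\ker F_\chi(H,T)$.
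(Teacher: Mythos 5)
The paper itself gives no proof of this theorem; it simply cites \cite{BCFS03,GH08}. Your argument is nonetheless a self-contained proof along essentially standard lines, so it is worth commenting on its correctness. The operator identity $HQ_\chi = \chi F_\chi(H,T)$, the derivation of $F_\chi(H,T)\chi\psi = 0$ for $\psi\in\ker H$ via $\chib\psi = -H_{\chib}^{-1}\chib W\chi\cdot\chi\psi$, and the identity $Q_\chi\chi = \mathrm{id}$ on $\ker H$ are all correct.

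The one step that needs repair is the argument that $\chi Q_\chi = \mathrm{id}$ on $\ker F_\chi(H,T)$. You invoke a commutation $H_{\chib}\chi = \chi H_{\chib}$ on $D(T)\cap\ran\chib$, but this does not follow from the hypotheses: assumption (a) gives only $\chi T\subset T\chi$ and $\chib T\subset T\chib$, and there is no assumption that $\chi$ commutes with $\chib W\chib$. Carrying out the computation of $H_{\chib}\tau$ honestly leaves behind an extra term $\chib(\chi W\chib - W\chi\chib)\eta$ (with $\eta := H_{\chib}^{-1}\chib W\chi\phi$), so the path you describe does not by itself reduce to $H_{\chib}\tau = \chib W\chib\tau$. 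Fortunately the repair is short and in fact simpler than your detour: compute $T\tau$ directly. Since $\tau = \chib\phi + \chi\eta$, assumption (a) gives $T\tau = \chib T\phi + \chi T\eta$, while $T\eta = H_{\chib}\eta - \chib W\chib\eta = \chib W\chi\phi - \chib W\chib\eta$; using $\chi\chib=\chib\chi$ one gets
\begin{equation*}
T\tau = \chib\bigl(T\phi + \chi W\chi\phi - \chi W\chib\eta\bigr) = \chib\, F_\chi(H,T)\,\phi = 0 .
\end{equation*}
This uses nothing beyond (a) and the commutativity of $\chi$ with $\chib$. Now (b) gives $\tau = 0$ since $\tau\in D(T)\cap\ran\chib$, hence $\phi = \chi Q_\chi\phi$. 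With this correction your proof is complete and matches the standard argument underlying \cite{BCFS03,GH08}.
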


The proof of Theorem \ref{thm:fesh} can be found in \cite{BCFS03,GH08}. The next lemma
gives sufficient conditions for  two operators to be a Feshbach pair. It follows
from a Neumann expansion, \cite{GH08}.

\begin{lemma} \label{fesh:thm2}
Conditions {\rm (a), (b)}, and {\rm (c)} on Feshbach pairs are satisfied if:
\begin{itemize}
\item[(a')] $\chi T \subset T \chi$ and $\chib T \subset T \chib$,
\item[(b')] $T$ is bounded invertible in $\ran \chib$,
\item[(c')] $\| T^{-1} \chib W \chib \| < 1$, $\| \chib W T^{-1} \chib \| < 1$, and $T^{-1} \chib W \chi$ is a bounded operator.
\end{itemize}
\end{lemma}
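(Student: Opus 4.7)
The plan is to derive conditions (a), (b), (c) directly from (a'), (b'), (c'). Condition (a) is literally (a'), so nothing needs to be done there; the content of the lemma is to show that the quantitative smallness in (c') upgrades the mere invertibility of $T$ on $\ran\chib$ in (b') to the invertibility of $H_{\chib} = T + \chib W \chib$ on $\ran\chib$, and then to deduce boundedness of the operator in (c).

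For (b), I would write, for vectors in $D(T) \cap \ran\chib$,
\begin{equation*}
H_{\chib} = T + \chib W \chib = T\bigl(I + T^{-1}\chib W \chib\bigr),
\end{equation*}
where by (a') the operator $T^{-1}\chib$ maps into $\ran\chib$. The hypothesis $\|T^{-1}\chib W \chib\| < 1$ from (c') implies that $I + T^{-1}\chib W \chib$ is bounded invertible on $\HH$ via the Neumann series
\begin{equation*}
\bigl(I + T^{-1}\chib W \chib\bigr)^{-1} = \sum_{n=0}^{\infty} (-1)^n \bigl(T^{-1}\chib W \chib\bigr)^n,
\end{equation*}
and in particular restricts to a bijection of $\ran\chib$ onto itself, since each summand preserves $\ran\chib$. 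Composing with $T^{-1}$ on $\ran\chib$, which exists and is bounded by (b'), gives that $H_{\chib}$ is bounded invertible on $\ran\chib$ with
\begin{equation*}
H_{\chib}^{-1}\chib = \bigl(I + T^{-1}\chib W \chib\bigr)^{-1} T^{-1}\chib.
\end{equation*}
An analogous argument on the other side, using $\|\chib W T^{-1}\chib\| < 1$, confirms surjectivity, but actually the single Neumann series above already gives a two-sided inverse.

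For (c), I would combine the formula just obtained with the last part of (c'):
\begin{equation*}
\chib H_{\chib}^{-1}\chib W \chi = \chib \bigl(I + T^{-1}\chib W \chib\bigr)^{-1} T^{-1}\chib W \chi,
\end{equation*}
where $T^{-1}\chib W \chi$ is bounded by (c') and the Neumann-series factor is bounded on $\HH$ with norm at most $(1 - \|T^{-1}\chib W \chib\|)^{-1}$, so the product is bounded on $D(T) \subset \HH$. This establishes (c) and completes the verification.

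There is no serious obstacle here; the only mild subtlety is the standard one of distinguishing the formal inverse $H_{\chib}^{-1}\chib$ (in the sense of Remark~\ref{rem:abuse}) from a true inverse of an operator on $\HH$, and being careful that the Neumann series acts on $\ran\chib$ rather than on all of $\HH$. Because $\chi$ and $\chib$ commute with $T$ by (a'), the relevant subspace is invariant under all operators appearing, so this distinction causes no trouble.
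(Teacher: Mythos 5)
Your proof is correct, and it follows exactly the approach the paper has in mind: the paper does not write out a proof but simply remarks that the lemma "follows from a Neumann expansion" and cites [GH08], and your factorization $H_{\chib} = T(I + T^{-1}\chib W \chib)$ together with the Neumann series for $(I + T^{-1}\chib W\chib)^{-1}$ is precisely that expansion. Your observation that the hypothesis $\|\chib W T^{-1}\chib\| < 1$ is not actually needed once the first Neumann series has produced a two-sided inverse on $D(T)\cap\ran\chib$ is also accurate; that extra condition is typically retained for symmetry and for estimates elsewhere, but plays no role in verifying (a), (b), (c) themselves.
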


\section*{Appendix C: Function spaces}
\label{sec:banextra}

Let $(X, \| \cdot \|_X)$ and $(Y, \| \cdot \|_Y)$ be Banach spaces.
By  $\mathcal{B}(X,Y)$ we denote the Banach space of bounded
linear operators from $X$ to $Y$. We set $\mathcal{B}(X) := \mathcal{B}(X,X)$.
Let $(M, \mu)$ be a measure space.  We say that a function
$f : M \to X$ is measurable if there exists a sequence  $( f_j )_{j \in \N_0}$ of simple functions from $M$ to $X$,
such that $\| f_j(m) - f(m) \|_X \to 0 $ as $j \to \infty$,  for a.e. $m \in M$.
We define  $L^\infty(M;X)$ to be  the Banach space of measurable functions from $M$ to $X$ with norm
$$
\| f \|_{L^\infty (M;X)}
 :=                                {\rm ess} \sup_{m \in M} \| f(m) \|_X    .
$$
Let $[a,b]$ be a closed interval of $\R$. For $p \in \N_0$ we define the space
$
C^{p}[a,b]
$
to be  the space of  functions   $f: (a,b)  \to \C $  such that  for all $q=0,...,p$  the partial
derivatives $\partial_1^q f$ exist and are uniformly continuous on bounded subsets of  $(a,b)$.
We define the norm
\begin{align}
\| f  \|_{C^{p}[a,b]} &:= \max_{0 \leq q \leq p} \sup_{r \in (a,b) }
 | \partial_r^q  f(r) |
%\| f  \|_{C^{p,k}([a,b]\times \R),\Sigma}  &:=
%\sum_{q=0}^p \sum_{l=0}^k \sup_{(r,\beta) \in (a,b) \times \R}  | \partial_r^q \partial_\beta^l f(r,\beta) | .
\end{align}
By $C^{p}_B[a,b]$ we denote  the Banach space with norm $\| \cdot  \|_{C^{p}[a,b]}$ which consists
of elements in   $C^{p}[a,b]$ for which the norm $\| \cdot  \|_{C^{p}[a,b]}$ is finite.
We denote by $C^k(\R;X)$ the space of strongly (w.r.t the norm in $X$) $k$--times continuously differentiable functions.  The norm is given by
$$
\| f \|_{C^k(\R;X)} := \max_{0 \leq s \leq k} \sup_{x \in \R} \| \partial_x^s f(x) \|_X .
$$
Let $C^k_B(\R;X)$ denote the set of functions $f$ in  $C^k(\R;X)$ for which the norm  $\| f  \|_{C^k(\R;X)} $  is finite.
Let $U \subset \C^n$ be  a domain. We define the  space
$
C^\omega(U ; X  )
$
to consist of all strongly analytic functions  $f : U \to  X $. We define  the norm
$$
\| f \|_{ C^\omega(U ; X)  } := \sup_{z \in U} \| f(z) \|_{X} .
$$
By $C^{\omega}_B(U;X)$ we denote  the Banach space with norm $\| \cdot  \|_{C^{\omega}(U;X)}$ which consists
of elements in   $C^{\omega}(U;X)$ for which the norm $\| \cdot  \|_{C^{\omega}(U;X)}$ is finite.
We define the space
$
C^{\omega,k}(U \times \R ; X )
$
to consist of all functions $f : U \times \R  \to X$ such that all partial derivatives
$\partial_x^{{l}} \partial_{z_i}^{{t}} f$,  with   ${l} \in \N_0$,  $l \leq k$, $i=1,...,n$, and  $t = 0,1$,  exist  and are
continuous. We define the norm
$$
\| f \|_{ C^{\omega,k}(U \times \R ; X)  } := \sup_{z \in U} \max_{0 \leq l \leq k} \sup_{x \in \R}
\| \partial_x^l f(z,x) \|_{X} .
$$
By $C^{\omega,k}_B(U  \times \R ;X)$ we denote  the Banach space with norm
 $\| \cdot  \|_{C^{\omega}(U \times \R;X)}$ which consists
of elements in   $C^{\omega}(U \times \R ;X)$ for which the norm $\| \cdot  \|_{C^{\omega}(U;X)}$ is finite.
%We introduced the notation $\partial_x^{\underline{s}} = \partial_{x_1}^{{s}_1} \cdots \partial_{x_n}^{{s}_n}$ and
 %$\partial_z^{\underline{s}} = \partial_{z_1}^{{t}_1} \cdots \partial_{z_\nu}^{{t}_\nu}$.
%We define the norm
%$$
%\| f \|_{C^{k,\omega}(\R \times V)} := \sup_{z \in V} \max_{|{s}| \leq k}  \sup_{x \in \R} | \partial_x^s f(x,z) | .
%$$
In the case where $X = \mathbb{C}$ we will drop  the $X$ dependence  in the notation.
We introduce the Polydiscs  $D_{\underline{r}} = \prod_{i=1}^n D_{r_i}$ with  $\underline{r} \in (0,\infty)^n$.

\begin{lemma} \label{lem:weakstronanalyt1}
We have the canonical isomorphism of Banach spaces
\begin{equation} \label{eq:weakstronanalyt1}
C^{\omega,k}_B(D_{\underline{r}} \times \R ; X  )     \cong      C^\omega_B(D_{\underline{r}} ; C_B^k(\R ; X )) .
\end{equation}
\end{lemma}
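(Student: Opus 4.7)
My plan is to exhibit the canonical evaluation map
$$
\Phi : C^{\omega,k}_B(D_{\underline{r}} \times \R ; X) \to C^\omega_B(D_{\underline{r}} ; C_B^k(\R ; X)), \qquad (\Phi f)(z)(x) := f(z,x),
$$
as an isometric isomorphism, with inverse $\Psi$ given by the symmetric formula $(\Psi g)(z,x) := g(z)(x)$. The norm identity
$$
\sup_{z \in D_{\underline{r}}} \max_{0 \leq l \leq k} \sup_{x \in \R} \| \partial_x^l f(z,x)\|_X \;=\; \sup_{z \in D_{\underline{r}}} \| f(z,\cdot)\|_{C_B^k(\R;X)}
$$
holds by direct comparison of the two norm definitions, so once both maps are shown to be well defined they are automatically bounded and isometric, and $\Psi \circ \Phi = \mathrm{id}$, $\Phi \circ \Psi = \mathrm{id}$ are tautological from the defining formulas.

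First I would verify that $\Phi$ is well defined: for each $f$ on the left hand side and each $z \in D_{\underline{r}}$ the slice $f(z,\cdot)$ lies in $C_B^k(\R;X)$, since continuity of $x \mapsto \partial_x^l f(z,x)$ is built into the definition of $C^{\omega,k}$ and the finiteness of the norm gives the uniform bound. The essential step is to promote this to strong analyticity of $z \mapsto f(z,\cdot)$ as a $C_B^k(\R;X)$-valued map. Since $z \mapsto f(z,x)$ is separately complex differentiable in each coordinate (the $t=1$ partials exist by definition), the iterated Cauchy formula yields, for $z_0 \in D_{\underline{r}}$ and polyradii $\underline{\rho}$ with $\overline{D_{\underline{\rho}}(z_0)} \subset D_{\underline{r}}$,
$$
f(z,x) \;=\; \frac{1}{(2\pi i)^n} \oint_{|\zeta_1-z_{0,1}|=\rho_1} \!\cdots\! \oint_{|\zeta_n-z_{0,n}|=\rho_n} \frac{f(\zeta,x)}{\prod_{j=1}^n (\zeta_j - z_j)}\, d\zeta_1 \cdots d\zeta_n ,
$$
for all $z$ in a neighbourhood of $z_0$. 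Applying $\partial_x^l$ to both sides and using the uniform bound on $\partial_x^l f$ to justify differentiation under the integral, the same representation holds with $f$ replaced by $\partial_x^l f$. Standard estimates on this contour integral then show that the $C_B^k(\R;X)$-valued difference quotient $h^{-1}\bigl((\Phi f)(z_0 + h e_j) - (\Phi f)(z_0)\bigr)$ converges in the $C_B^k(\R;X)$-norm as $h \to 0$, establishing strong polydisc analyticity of $\Phi f$.

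The map $\Psi$ is well defined by the converse argument. Strong analyticity of $g : D_{\underline{r}} \to C_B^k(\R;X)$ immediately gives pointwise complex analyticity of $z \mapsto g(z)(x)$ for every $x$, and the identity $\partial_{z_j}(\Psi g)(z,x) = (\partial_{z_j} g)(z)(x)$, combined with the uniform-in-$x$ convergence of the Banach-valued difference quotient, produces joint continuity of all mixed partial derivatives $\partial_x^l \partial_{z_j}^t (\Psi g)$ with $0 \leq l \leq k$ and $t \in \{0,1\}$, as demanded by the definition of $C^{\omega,k}$. The main obstacle, and the only nontrivial technical point, is the Cauchy-formula step above: promoting pointwise-in-$x$ analyticity together with uniform bounds on $\partial_x^l f$ to genuine Banach-valued strong analyticity of $\Phi f$ measured in the $C_B^k(\R;X)$-norm. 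Everything else is bookkeeping.
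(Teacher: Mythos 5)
Your proof is correct and is essentially the paper's argument: both rest on the iterated Cauchy integral formula together with the uniform bounds on $\partial_x^l f$ to upgrade pointwise-in-$x$ analyticity to strong $C_B^k(\R;X)$-valued analyticity, and the reverse direction is handled symmetrically. The only difference is presentational — you estimate the $C_B^k(\R;X)$-valued difference quotient directly from the Cauchy kernel, whereas the paper expands $f$ in a power series $\sum_{\unn} c_{\unn}(x)\, z^{\unn}$, bounds the coefficients $c_{\unn}$ in $C_B^k(\R;X)$ via the same contour integrals, and reads off strong analyticity from the convergence of that series.
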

\begin{proof}
Let  $f \in C^{\omega,k}_B(D_{\underline{r}} \times \R ; X  )$. Then for every $x \in \R$ the function
$z \mapsto f(z,x)$ is analytic on $D_{\underline{r}}$ and bounded.
Thus for $\epsilon > 0 $ sufficiently small
$$
f(z,x) = \sum_{\unn} c_{\unn} (x) z^{\unn}
$$
with
$$
c_{\unn}(x) = \frac{1}{(2 \pi i)^n}
\int_{D_{\underline{r} - \underline{\epsilon} } }  \frac{ f(\zeta,x)}{\zeta^{\underline{n}+\underline{1}}} \prod_{j=1}^n d \zeta_{j} ,
$$
where the integral is a strong Riemann integral in $X$ and we used the notation $\underline{1} = (1,...,1)$ and $\underline{\epsilon} = \epsilon \underline{1}$.
It follows that $\| c_{\unn} \|_{C^k(\R; X )} \leq \prod_{j=1}^n r_j^{-n_j} \| f \|_{C^{\omega,k}(D_{\underline{r}}\times \R ; X)}$.
This implies that the function $\widehat{f} : z \mapsto f(z, \cdot )$ is in  $C^{\omega}_B(D_{\underline{r}} ; C_B^k(\R ; X ))$.
Moreover,
$$
\sup_{z \in D_{\underline{r}} } \| \widehat{f}(z ) \|_{ C_B^k(\R ; X )      } =  \sup_{z \in D_{\underline{r}} }
\max_{0 \leq l \leq k} \sup_{x \in \R} \| \partial_x^l f(z,x) \|_X =
\| f \|_{C^{\omega,k}(D_{\underline{r}}\times \R ; X )} .
$$
Now suppose $g \in C^{\omega}_B(D_{\underline{r}} ; C_B^k(\R ; X ))$. Then
$$
g(z) =
\sum_{\underline{n}} a_{\underline{n}} z^{\underline{n}}
$$
with
$$
a_{\underline{n}} =
\frac{1}{(2 \pi i)^n}
\int_{D_{\underline{r} - \underline{\epsilon} } }  \frac{ g(\zeta)}{\zeta^{\underline{n}+\underline{1}}} \prod_{j=1}^n d \zeta_{j} ,
$$
where the integral is a strong Riemann integral in  $C_B^k(\R ; X )$. It follows that
\begin{equation} \label{eq:estonderofg}
 \| a_{\underline{n}} \|_{C^k(\R;X)} \leq  \prod_{j=1}^n r_j^{-n_j}
 \| g \|_{C^\omega (D_{\underline{r}} ; C_B^k(\R ; X ) )} .
\end{equation}
We define
$$
\tilde{g}(x,z) := \sum_{\unn} a_{\unn}(x)  z^{\unn}  .
$$
It follows from \eqref{eq:estonderofg} that $\tilde{g} \in C^{\omega,k}_B(D_{\underline{r}} \times \R ;X )$. Moreover,
$$
\| \widetilde{g} \|_{C^{\omega,k}(D_{\underline{r}}\times \R ; X )}
=  \sup_{z \in D_{\underline{r}} }
\max_{0 \leq l \leq k} \sup_{x \in \R} \| \partial_x^l \widetilde{g}(z,x) \|_X
=  \sup_{z \in D_{\underline{r}} } \| {g}(z ) \|_{C^k(\R;X)}  .
$$
\end{proof}

%\begin{lemma} For $r_1 > 0$ and $r_2 > 0$
%we have the canonical isomorphism of Banach spaces
%\begin{equation} \nonumber %\label{eq:weakstronanalyt1}
%C^{\omega}_B(D_{r_1} \times D_{r_2} ; X  )     \cong      C^\omega_B(D_{r_1} ; C_B^\omega(D_{r_2}  ; X )) .
%\end{equation}
%\end{lemma}

\section*{Appendix D: Faa di Bruno's Formula}
\label{app:faadibruno}

Let $P_n$ denote the set of all partitions of $\{1,...,n\}$. Then
\begin{equation} \label{eq:faadibruno}
(f \circ g)^{(n)} = \sum_{X \in P_n}  f^{(|X|)} \circ g   \prod_{ x \in X} g^{(|x|)} ,
\end{equation}
where $|X |$ and $|x|$ stand for the cardinality of the sets $X$ and $x$, respectively.

\section*{Appendix E: Uniform Convergence}

Let $(s_0,\beta_0) \in  S \times \R$. Then for every $\epsilon > 0$ there is an open set
$U \subset  S \times \R$ containing $(s_0,\beta_0)$ such that
$$
\sup_{(\beta,s) \in U} \max_{0 \leq l \leq k} \| \partial_\beta^l w(\beta,s) -
\partial_\beta^l w(s_0,\beta_0) \|_\xi^\# < \epsilon .
$$
This implies
$$
\sup_{(\beta,s) \in U} \max_{0 \leq l \leq k} \| \partial_\beta^l w(\beta,s)_{m,n} \|^\#  \leq
\max_{0 \leq l \leq k} \|
\partial_\beta^l w(s_0,\beta_0)_{m,n} \|^\# +    \xi^{m+n} \epsilon     =:E_{m,n}
$$
By Lemma \ref{codim:thm3first},
\begin{equation} \label{eq:estonvons}
{\rm sup}_{(\beta, s) \in U} \max_{0 \leq l \leq k} \| \partial_\beta^l v_{\umm,\upp,\unn,\uqq}[w(\beta, s)] \|^\# \leq C_L t^{-L + 1} \prod_{l=1}^L \frac{E_{m_l + p_l , n_l q_l}}{\sqrt{p_l ! q_l !}} ,
\end{equation}
where we used the notation introduced in that lemma. We estimate
\begin{align}
\label{eq:rdiffstup1a}
& \sum_{M+N \geq 0} \sum_{L=1}^\infty
\sum_{\substack{ (\umm,\upp,\unn,\uqq) \in \N_0^{4L} \\ | \umm | = M , | \unn | = N \\ m_l + p _l + n_l + q_l \geq 1}} \xi^{-|\umm|-|\unn|} \rho^{|\umm|+|\unn|}  \\
& \times
\prod_{l=1}^L \left\{  \binom{m_l + p_l }{ p_l} \binom{ n_l + q_l }{ q_l }      \right\}
{\rm sup}_{(\beta, s) \in U} \max_{0 \leq l \leq k} \| \partial_\beta^l v_{\umm,\upp,\unn,\uqq}[w(\beta, s)] \|^\#  \nonumber  \\
&\leq  \sum_{L=1}^\infty C_L t^{1-L}  G^{L}  \nonumber ,
\end{align}
where we used Eq.  \eqref{eq:estonvons} and the definition
$$
G := \sum_{ m+p + n+q \geq 1 }  \binom{m+p }{ p} \binom{n+q }{ q} \xi^{p+q} (1/2)^{m+n} \xi^{-m-p-n-q}  \frac{E_{m+p, n+q}}{\sqrt{p! q!}} .
$$
Below we will show that
\begin{equation} \label{eq:boundonGa}
G \leq \| w( s_0, \cdot )_{\geq 1} \|_\xi^{(k,\#)} + \epsilon 16 e^4 .
\end{equation}
Since $t^{-1} G   < 1$
for $\epsilon$ sufficiently small
Inequalities \eqref{eq:boundonGa}
imply the  convergence of \eqref{eq:rdiffstup1a}, for small $\epsilon$.
To show  \eqref{eq:boundonGa}, we will use the following estimate
\begin{equation}
\sum_{m+p \geq 0} \binom{ m + p }{ p} \xi^p (1/2)^m \frac{1}{\sqrt{p!}}
\leq \sum_{m+p \geq 0}  \binom{ m + p }{ p} (1/4)^p (1/2)^m e^{8 \xi^2}
= 4e^{8 \xi^2}
%&\leq& \sum_{m+p \geq 0} {m+p \choose p} (3/8)^p (5/8)^m (4/5)^{m+p} e^{8 \xi^2} \nonumber \\
%& \leq& \sum_{l=0}^\infty (4/5)^l e^{8 \xi^2} \nonumber \\
 \leq 4 e^2 , \label{eq:combestimatedera}
\end{equation}
where in the first inequality we used the trivial estimate  $(16\xi^2)^{p}/{p!} \leq e^{16\xi^2}$.
Now  \eqref{eq:boundonGa} is seen by    inserting  the definition of $E_{m,n}$  into the definition of $G$.
This yields two terms, which one has  to estimate. The second term, involving $\epsilon$,
is estimated using   \eqref{eq:combestimatedera}, and
the first term, involving $w_{m,n}(s_0,\beta_0)$, is estimated using the binomial formula, i.e.,
\begin{eqnarray*} \lefteqn{
\sum_{ m+p + n+q \geq 1 }  \binom{m+p }{ p} \binom{n+q }{ q} \xi^{p+q} (1/2)^{m+n} \xi^{-m-p-n-q}
\max_{0 \leq l \leq k} \| \partial_\beta^{l} w(\beta_0, s_0)_{m+p,n+q} \|} \\
&&=
\sum_{ i +  j \geq 1 }  (\xi + 1/2)^i  (\xi + 1/2)^j
\xi^{-i-j}
\max_{0 \leq l \leq k} \| \partial_\beta^{l} w(\beta_0, s_0)_{i,j} \| .
\end{eqnarray*}

\section*{Appendix G: Differentiability}

\begin{lemma} \label{lem:diff}   \quad
\begin{itemize}
\item[(a)]
The  mapping
\begin{eqnarray*}
 \widetilde{v}_{\umm,\upp,\unn,\uqq}[\cdot ]   :
(\WW_\xi^\# )^L \times (\WW_{0,0}^\#)^{L+1} &\to& \WW_{|\umm|,|\unn|}^\# \\
(w_1,...,w_L,G_0,...,G_L) &\mapsto  & \widetilde{v}_{\umm,\upp,\unn,\uqq}[w_1,...,w_L,G_0,...,G_L]
\end{eqnarray*}
defined by
\begin{eqnarray}
\lefteqn{ \widetilde{v}_{\umm,\upp,\unn,\uqq}[  w_1,...,w_L,G_0,...,G_L    ]( r, K^{(|\umm|,|\unn|)}) :=  } \label{eq:defofvtildea}  \nonumber \\
&&
\left\langle \Omega  ,  G_0(H_f + \rho ( r + \widetilde{r}_0 ) ) \prod_{l=1}^L
\left\{  {W}_{p_l,q_l}^{m_l,n_l}[w_l](\rho(r+r_l), \rho K_l^{(m_l,n_l)} ) G_l( H_f + \rho (r  + \widetilde{r}_l ) ) \right\} \Omega
\right\rangle    \nonumber .
\end{eqnarray}
is continuous and  multilinear.
\item[(b)]  The following mapping is in $C^\infty$.
\begin{eqnarray*}
\{ t  \in \WW_{0,0}^\# | \inf_{r \in [\rho\frac{3}{4},1] } | t(r) | > \epsilon \}
&\to&  \WW^\#_{0,0}  \\
t  &\mapsto&  \frac{\chib_{\rho}^2}{t}
\end{eqnarray*}
 \end{itemize}
\end{lemma}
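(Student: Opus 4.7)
The plan for part (a) is to observe that $\widetilde{v}_{\umm,\upp,\unn,\uqq}$ is evidently multilinear in each of its $2L+1$ arguments, so continuity is equivalent to boundedness. I would then bound $\|\widetilde{v}\|^\#$ by repeating the $k=0$ estimates of Lemma \ref{codim:thm3first}, with the specific factors $\chib_\rho^2/w_{0,0}$ replaced by general $G_l \in \WW_{0,0}^\#$: the supremum norm is handled by \eqref{eq:babyestimate1}, the operator-norm bound \eqref{eq:ineqforW1} applied to each $W_{p_l,q_l}^{m_l,n_l}[w_l]$, and the spectral-theorem bound $\|G_l(H_f + \cdot\,)\|_{\rm op} \leq \|G_l\|_\infty$; the $\partial_r$-piece follows from Leibniz together with \eqref{eq:ineqforW11} and $\|\partial_r G_l\|_\infty$. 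This produces an estimate of the form
\[
\|\widetilde{v}_{\umm,\upp,\unn,\uqq}[w_1,\ldots,w_L,G_0,\ldots,G_L]\|^\# \leq C(L) \prod_{l=1}^L \frac{\|w_l\|^\#}{\sqrt{p_l!\,q_l!}} \prod_{l=0}^L \|G_l\|^\#,
\]
which gives joint continuity of the multilinear map.

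For part (b), the plan is to prove analyticity of $F(t):=\chib_\rho^2/t$ at every $t_0$ in the domain $U$, which then yields $C^\infty$. Openness of $U$ is immediate since $\|s\|_\infty \leq \|s\|^\#$. Fixing $t_0 \in U$ with $\inf_{[3\rho/4,1]} |t_0| \geq 2\epsilon_0$ and $s \in \WW_{0,0}^\#$ with $\|s\|^\# < \epsilon_0$, the key step is the formal geometric expansion
\[
\frac{\chib_\rho^2}{t_0 + s} = \sum_{n=0}^\infty (-1)^n \frac{\chib_\rho^2\, s^n}{t_0^{n+1}}.
\]
I would first check that each summand is a legitimate element of $\WW_{0,0}^\# = C^1[0,1]$: on $[3\rho/4,1]$ it equals the pointwise quotient, well defined because $|t_0|^{n+1} \geq (2\epsilon_0)^{n+1}$ there, while on $[0,3\rho/4]$ it vanishes identically, because $\chib_\rho^2$ and its first derivative both vanish on this interval (by the definition of $\chi_1$, which satisfies $\chi_1 = 1$ on $[0,3/4)$, so that $\chi_1'(3/4) = 0$ by continuity and consequently $(\chib_1^2)(3/4) = 0$ and $(\chib_1^2)'(3/4) = -2\chi_1(3/4)\chi_1'(3/4) = 0$).

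Convergence is then the standard Banach-algebra argument: $C^1[0,1]$ is a Banach algebra under $\|\cdot\|^\#$, and a direct Leibniz estimate gives $\|\chib_\rho^2/t_0^{n+1}\|^\# \leq C(n+1)(2\epsilon_0)^{-(n+1)}$ with $C$ depending only on $\chib_\rho$ and $\|t_0\|^\#$, so that submultiplicativity yields
\[
\left\|\frac{\chib_\rho^2\, s^n}{t_0^{n+1}}\right\|^\# \leq C(n+1)(2\epsilon_0)^{-(n+1)} \|s\|^{\#\,n},
\]
and the series converges absolutely for $\|s\|^\# < 2\epsilon_0$. Reading off coefficients yields $D^n F(t_0)[s_1,\ldots,s_n] = (-1)^n n!\, \chib_\rho^2 s_1 \cdots s_n / t_0^{n+1}$, so $F$ is analytic at $t_0$, hence $C^\infty$. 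The main obstacle is conceptual rather than technical: inversion $t \mapsto 1/t$ is not a well-defined map from $U$ into $\WW_{0,0}^\#$ because elements of $U$ may vanish outside $[3\rho/4,1]$; the geometry of the cutoff $\chib_\rho^2$, which vanishes together with its derivative on precisely the dangerous region, is exactly what rescues both the quotient and its formal power expansion and places them in $C^1[0,1]$.
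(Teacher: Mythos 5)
Part (a) of your proposal follows the paper's argument essentially verbatim: multilinearity is manifest, and boundedness comes from \eqref{eq:babyestimate1}, the operator-norm bound on each $W_{p_l,q_l}^{m_l,n_l}[w_l]$, the spectral-theorem bound on each $G_l$, and Leibniz for the $\partial_r$ piece, ending with the same product estimate. Part (b), however, is a genuinely different route. The paper is terse: it asserts (with a citation to an earlier paper) that the Fr\'echet derivative of $t\mapsto\chib_\rho^2/t$ is $-\chib_\rho^2/t^2$, and then claims $C^\infty$ by iterating the product rule inductively. You instead prove local analyticity directly by a geometric (Neumann) expansion of $\chib_\rho^2/(t_0+s)$ in the Banach algebra $C^1[0,1]$, using submultiplicativity of $\|\cdot\|^\#$ to control the tail; analyticity then gives $C^\infty$ for free, and the $n$-th derivative you read off, $(-1)^n n!\,\chib_\rho^2\,s_1\cdots s_n/t_0^{n+1}$, is precisely what the paper's iterated product rule would produce. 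Your version is more self-contained, proves a stronger conclusion, and—importantly—makes explicit the one point the paper leaves implicit and that a careless reader could miss: why $\chib_\rho^2/t^{n+1}$ belongs to $C^1[0,1]$ at all when $t$ may vanish on $[0,3\rho/4]$. Your observation that $\chib_\rho^2$ and $(\chib_\rho^2)'$ vanish on that interval (since $\chi_1\equiv 1$ on $[0,3/4)$ forces $\chi_1'(3/4)=0$ by smoothness), so that the quotient glues to zero in $C^1$ across $r=3\rho/4$, is exactly the right justification, and it correctly identifies the cutoff geometry as the conceptual crux. Both arguments are correct; yours is the more transparent one.
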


\begin{proof}
(a) Using \eqref{eq:babyestimate1}
we find
\begin{eqnarray*}
\lefteqn{ \esssup_{ K^{(|\umm|,|\unn|)}} \sup_{r \in [0,1]}| \widetilde{v}_{\umm,\upp,\unn,\uqq}[ w_1,...,w_L,G_0,...,G_L    ](r, K^{(|\umm|,|\unn|)})  | } \\
&&
\leq   \prod_{l=1}^{L}  \esssup_{K^{(m_l,n_l)}} \sup_{r \in [0,1]}  \| W_{p,q}^{m,n}[w](r,K^{(m_l,n_l)})\|_{\rm op} ,
    \prod_{l=0}^{L} \|  G_l \|_{C[0,1]} .
\end{eqnarray*}
To estimate the right hand side we use
\begin{eqnarray} \label{eq:ineqforW1append}
&&\esssup_{K^{(m,n)}  }  \sup_{r \in [0,1]}  \|   W_{p,q}^{m,n}[w](r,K^{(m,n)}) \|_{\rm op} \leq
\frac{ \|   w_{p+m,q+n}, \|_{L^\infty(B_1^{m+n};C[0,1])}  }{\sqrt{p! q!}}
\end{eqnarray}
Inequality  \eqref{eq:ineqforW1append} can be shown using Lemma   \ref{kernelopestimate}
and \eqref{eq:intofwKminus2}.
Next we calculate the derivative with respect to $r$. To this end first observe that
using Lemma   \ref{kernelopestimate} and dominated convergence one can show  that
for a.e. $K^{(m,n)}$ the partial derivative  $ \partial_r W^{m,n}_{p,q}[w](r,K^{(m,n)})$
exists and equals $W^{m,n}_{p,q}[ \partial_r w](r,K^{(m,n)})$.
Using Leibniz we obtain
\begin{eqnarray*}
\lefteqn{ \partial_r \widetilde{v}_{\umm,\upp,\unn,\uqq}[  w_1,...,w_L,G_0,...,G_L    ]( r, K^{(|\umm|,|\unn|)}) =  }  \\
&&
 \rho \sum_{j=1}^{2L+1}  \widetilde{v}_{\umm,\upp,\unn,\uqq}[  \partial_r^{\delta_{1,j}} w_1,...,
\partial_r^{\delta_{L,j}} w_L,\partial_r^{\delta_{L+1,j}} G_0,..., \partial_r^{\delta_{2L+1,j}} G_L    ]( r, K^{(|\umm|,|\unn|)}) .
\end{eqnarray*}
Using again \eqref{eq:babyestimate1} and  \eqref{eq:ineqforW1append} to estimate this we find
$$
\|  \widetilde{v}_{\umm,\upp,\unn,\uqq}[  w_1,...,w_L,G_0,...,G_L    ] \|^\#_{} \leq
\prod_{l=1}^L \| w_l \|^{\#}_\xi  \prod_{l=0}^{L} \| G_l \|^\# .
$$
This yields (a).

(b) It is straight forward to verify that  the mapping
$t  \mapsto {\chib_\rho^2}/{t}  $ is differentiable
with derivative $-{\chib_\rho^2}/t^2$, see \cite{HH10-1} Lemma 36 (b).
 Using the product rule one can now  show iteratively that the function is in $C^\infty$.
\end{proof}


\begin{thebibliography}{30}

\bibitem{AFFS09} W. K. Abou Salem, J. Faupin, J. Fr\"ohlich, I. M. Sigal,
{\em On the theory of resonances in non-relativistic quantum electrodynamics and related models},
Advances in Applied Mathematics 43,  (2009), 201--230.



\bibitem{BCFS03} V. Bach, T. Chen, J. Fr\"ohlich, I.M. Sigal,
{\em Smooth Feshbach map and operator-theoretic renormalization group methods},
 J. Funct. Anal. 203 (2003),  44--92.

\bibitem{BFP06} V. Bach, J. Fr\"ohlich, A. Pizzo, {\it  Infrared-finite Algorithms in QED: the groundstate of an Atom interacting with the quantized radiation field},
Comm. Math. Phys. 264 (2006), no. 1, 145--165.

\bibitem{BFP09} V. Bach, J. Fr\"ohlich, A. Pizzo, {\it  Infrared-finite Algorithms in QED. II. The expansion of the groundstate of an Atom interacting with the quantized radiation field},
Adv. Math. 220 (2009), no. 4, 1023--1074.

%\bibitem{BFP07} V. Bach, J. Fr\"ohlich, A. Pizzo, {\it
%An infrared-finite algorithm for Rayleigh scattering amplitudes, and Bohr's frequency condition},
%Comm. Math. Phys. 274 (2007),  457--486.

\bibitem{BFS98}  V. Bach,  J. Fr\"ohlich, I.M. Sigal,
{\em Renormalization group analysis of spectral problems in quantum field theory},
 Adv. Math.  137 (1998),  205--298.




\bibitem{BFS99} V. Bach, J. Fr\"ohlich, I.M.  Sigal, {\it  Spectral analysis for systems of atoms and molecules coupled to the quantized radiation field},
 Comm. Math. Phys. 207 (1999), no. 2, 249--290.




\bibitem{BCVV09} J-M. Barbaroux, T. Chen, S. Vugalter, V. Vougalter
{\it  Quantitative estimates on the Hydrogen ground state energy in non-relativistic QED }. {\tt mp\_arc 09-48 }



%\bibitem{CH04} I. Catto, C. Hainzl, {\it Self-energy of one electron in non-relativistic QED}
% J. Funct. Anal. 207 (2004),  68--110.





\bibitem{GLL01}
M. Griesemer, E. Lieb, M. Loss, {\it
Ground states in non-relativistic quantum electrodynamics}.
Invent. Math. 145 (2001), no. 3, 557--595.

\bibitem{GH08} M. Griesemer, D. Hasler, {\em  On the smooth Feshbach-Schur Map},
J. Funct. Anal. 254 (2008),  2329--2335.

\bibitem{GH09} M. Griesemer, D. Hasler, {\em Analytic Perturbation Theory and Renormalization Analysis of Matter Coupled to Quantized Radiation},
Ann. Henri Poincar\'e.

\bibitem{GZ09} M. Griesemer, H. Zenk , {\em  On the Atomic Photoeffect in Non-relativistic QED},  submitted. {\tt arXiv:0910.1809}



\bibitem{HHS05} C. Hainzl, M. Hirokawa, H. Spohn, {\em
Binding energy for hydrogen-like atoms in the Nelson model without cutoffs},
J. Funct. Anal. 220 (2005),  424--459.

\bibitem{HS02} C. Hainzl, R. Seiringer, {\it
Mass renormalization and energy level shift in non-relativistic QED},  Adv. Theor. Math. Phys.   6 (2002),   847--871.


\bibitem{HH08} D. Hasler, I. Herbst,
{\em On the self-adjointness and domain of Pauli-Fierz type Hamiltonians}.
Rev. Math. Phys. 20 (2008), no. 7, 787--800.




\bibitem{HHH08} D. Hasler, I. Herbst, M. Huber, {\it
On the lifetime of quasi-stationary states in non-relativistic QED},
Ann. Henri Poincar\'e 9 (2008),  1005--1028.


\bibitem{HH10-1} D. Hasler, I. Herbst, {\it Ground state properties of the spin boson model},   submitted. {\tt  arXiv:1003.5923}

\bibitem{HH10-2} D. Hasler, I. Herbst, {\it Convergent expansions in non-relativistic QED: Analyticity of the ground state}, submitted.  {\tt arXiv:1005.3522}

\bibitem{H02} F. Hiroshima, {\em Self-adjointness of the Pauli--Fierz Hamiltonian for arbitrary values of coupling constants}, Ann. Henri Poincar\'{e} 3(1) (2002) 171--201.

\bibitem{reesim4} M. Reed and B. Simon, {\it Methods of modern mathematical physics. IV. Analysis of operators},
Academic Press, New York-London, 1978.

\bibitem{reesim2} M. Reed and B. Simon, {\it Methods of modern mathematical physics. II. Fourier-Analysis, Self-Adjointness},
Academic Press, New York-London, 1978.

\bibitem{Sch61}  S. Schweber, {\em An introduction to relativistic quantum field theory.} Foreword by Hans a. Bethe. Row, Peterson and Company, Evanston, Ill.-Elmsford, N.Y. 1961.

%\bibitem{S07} I.M. Sigal, {\em  Ground state and resonances in the standard model of the non-relativistic QED.} Preprint.

%\bibitem{sim74} B. Simon, {\it The $P(\phi )_{2}$ Euclidean (quantum) field theory},
%Princeton Series in Physics. Princeton University Press, Princeton, N.J., 1974.


%\bibitem{S04} H. Spohn, {\it
%Dynamics of charged particles and their radiation field}, Cambridge University Press, Cambridge, 2004.







\end{thebibliography}
\end{document}